\newtheorem{theorem}{Theorem}[section]
\newtheorem{lemma}[theorem]{Lemma}
\newtheorem{corollary}[theorem]{Corollary}
\newtheorem{assumption}{Assumption}
\newtheorem{proposition}{Proposition}
\newtheorem{example}{Example}
\newtheorem{remark}{Remark}
\newcommand{\N}{\mathcal{N}}
\newcommand{\R}{\mathbb{R}}
\newcommand{\PE}{\mathrm{PE}}
\newcommand{\cv}{\mathrm{CV}}
\newcommand{\CB}{\mathrm{CB}}
\newcommand{\tY}{{\widetilde{Y}}}
\newcommand{\calL}{{\mathcal{L}}}
\newcommand{\bbP}{{\mathbb{P}}}
\newcommand{\EE}[2][]{\mathbb{E}_{#1}\left[#2\right]}
\newcommand{\PP}[2][]{\mathbb{P}_{#1}\left[#2\right]}
\newcommand{\Cov}[2][]{\operatorname{Cov}_{#1}\left[#2\right]}
\newcommand{\Var}[2][]{\operatorname{Var}_{#1}\left[#2\right]}
\newcommand{\iid}{\stackrel{i.i.d.}{\sim}}
\newcommand{\tran}{^\top}
\newcommand{\om}{\omega}
\newcommand{\numberthis}{\addtocounter{equation}{1}\tag{\theequation}}
\newcommand{\rd}{\mathrm{d}}
\newcommand\indep{\protect\mathpalette{\protect\independenT}{\perp}}
\def\independenT#1#2{\mathrel{\rlap{$#1#2$}\mkern2mu{#1#2}}}
\newcommand{\tr}{\operatorname{tr}}
\newcommand{\Rom}[1]{\uppercase\expandafter{\romannumeral #1\relax}}
\newcommand{\cov}{\mathrm{Cov}}
\newcommand{\frakA}{{\mathfrak{A}}}
\newcommand{\frakg}{{\mathfrak{g}}}
\newcommand{\calT}{{\mathcal{T}}}
\def\argmin{\mathop{\rm argmin}\limits}
\newcommand{\ep}{\varepsilon}
\renewcommand{\tilde}[1]{\widetilde{#1}}
\newcommand{\Indc}[1]{{\mathbf{1}\left\{{#1}\right\}}}
\title{{Cross-Validation with Antithetic Gaussian Randomization}}
\author[1]{Sifan Liu}
\author[2]{Snigdha Panigrahi} 
\author[2]{Jake A. Soloff}
\affil[1]{Department of Statistical Science, Duke University}
\affil[2]{Department of Statistics, University of Michigan}
\begin{document}

\maketitle

\begin{abstract}
We introduce a new cross-validation method based on an equicorrelated Gaussian randomization scheme. Our method is well-suited for problems where sample splitting is infeasible, either because the data violate the assumption of independent and identically distributed samples, or because there are insufficient samples to form representative train--test data pairs. In such problems, our method provides a simple, principled, and computationally efficient approach to estimating prediction error, often outperforming standard cross-validation while requiring only a small number of repetitions.

Drawing inspiration from recent splitting techniques like data fission and data thinning, our method constructs train--test data pairs using Gaussian randomization. Our main contribution is the introduction of an \emph{antithetic Gaussian randomization} scheme, involving a carefully designed correlation structure among the randomization variables. We show theoretically that this antithetic construction can eliminate the bias of cross-validation for a broad class of smooth prediction functions, without inflating variance. Through simulations across a range of data types and loss functions, we demonstrate that our estimator outperforms existing methods for prediction error estimation.
\end{abstract}

\section{Introduction}
\label{sec:1}

Estimating prediction error is a fundamental task in statistics and machine learning, employed to evaluate generalization ability, select tuning parameters, and compare competing models. Cross-validation is one of the most widely used tools for this purpose. In its most standard form, cross-validation partitions the data into independent subsamples or ``folds,'' and prediction error is estimated by averaging the empirical errors from the test folds. The popularity of cross-validation is easy to understand: it is assumption-light and versatile, accommodating a wide range of loss functions and data types.

Standard cross-validation, however, is not suitable for all types of data, especially when i.i.d. assumptions are violated or when the sample size is too small to form representative folds.
Many examples of such data arise across different domains.
In time series or spatially correlated data, splitting the sample space into train--test folds can disrupt inherent dependence structure.
In regression settings with influential observations, folds that exclude these points may fail to adequately represent the full data.
With categorical variables, sample splitting may yield imbalanced folds or omit rare categories entirely. 
In such cases, standard cross-validated estimates of prediction error can be severely misleading and can result in unreliable models for downstream tasks.

In this paper, we introduce a novel cross-validation method for estimating prediction error without splitting the sample space.
Instead, train--test folds in our method are created using equicorrelated Gaussian randomization variables that sum to zero.
Although the form of randomization in our method is quite different from the subsampling scheme employed in standard cross-validation, it retains a key property of cross-validation: the original data can be reconstructed by pooling the folds. 
Next, we present the basic framework, highlight our contributions, and review related work.

\section{Framework, contributions, and related work}
\label{sec:2}
\subsection{Overview of the basic framework and instantiating our method}

We assume the response vector $Y=(Y_1,\ldots,Y_n)\tran\in\R^n$ is drawn from a distribution $\bbP_n$, while any predictors, when present, are treated as fixed.
Given a real-valued loss function $\calL(\theta, Y)$, where $\theta$ is an unknown parameter, and a fitted estimator or prediction function $g$, our goal is to evaluate its performance on unseen test data $\tY$, an independent copy of the observed data $Y$.
Specifically, our estimand of interest is the expected prediction error, defined as
\begin{equation}
\PE(g)=\EE{\calL(g(Y), \tY )},
\label{estimand: pe}
\end{equation}
also called generalization error or test error.
Here, the expectation is taken over both the train data $Y$ and the test data $\tY$. The prediction function $g$ may depend on predictors, but we suppress this dependence in the notation since predictors are treated as fixed.

Our framework assumes that the prediction function $g$
depends on the data $Y$ only through a \emph{sufficient} statistic $S(Y)$.
Importantly, this framework does not rely on i.i.d. assumptions; rather, it assumes that the sufficient statistic $S(Y)$ is approximately normal.
As illustrated by examples throughout the paper, our framework accommodates many commonly used loss functions, including those applied in fitting generalized linear models (GLMs).

As a starting example, consider normal data $Y \sim \N(\theta,\sigma^2I_n) \in \R^n$, with a known variance $\sigma^2$. 
For quadratic loss function $\calL(\theta, Y)=\|\theta-Y\|^2_2$, our goal is to estimate the prediction error
\begin{equation}
\PE(g)= \EE{\|g(Y)- \tY\|_2^2}.
\label{pred:error}
\end{equation}
Our cross-validation method for this problem proceeds as follows: we generate random vectors $\omega^{(k)}\sim \N(0,\sigma^2 I_n)$, for $k\in [K]=\{1,2,\ldots, K\}$, where the $K$ randomization variables are equicorrelated with each other.
As we formalize later, the correlation between any pair of these randomization variables is set to its most negative value possible.
Following the Monte Carlo literature on variance reduction techniques, e.g., \cite{craiu2005multiprocess}, we view our randomization scheme as an ``extreme antithesis,'' where  correlations between the added randomization variables take the most negative value possible. 

By adding a $\sqrt\alpha$-scaled version of the randomization variables, $\{ \omega^{(k)}: k\in [K]\}$, to the sufficient statistic—here, $S(Y)=Y$—we create $K$ train--test data pairs. 
Specifically, the $k$-th train--test data pair in our cross-validation method are constructed as:
\begin{align}\label{eq:simple-split}
 \left(Y_{\text{train}}^{(k)} =Y + \sqrt\alpha\omega^{(k)},\quad Y_{\text{test}}^{(k)}= Y - \frac{1}{\sqrt\alpha}\omega^{(k)}\right), \text{ for } k\in [K].
\end{align}
Then as done in standard cross-validation, for the $k$-th train--test repetition, we use $Y_{\text{train}}^{(k)}$ to fit the prediction function and $Y_{\text{test}}^{(k)}$ to estimate its performance. 

\subsection{Highlights of our method}

The performance of any cross-validation method, measured by mean squared error (MSE) of the corresponding estimator of the prediction error, is governed by a bias--variance tradeoff. For example, in standard cross-validation, this bias--variance tradeoff is controlled by the number of folds.

As illustrated in Equation~\eqref{eq:simple-split}, our cross-validation method depends on two user-specified parameters: a positive scalar $\alpha \in  \mathbb{R}^+$ and an integer $K \geq 2$. 
The first parameter, $\alpha$, is akin to the proportion of held-out samples in standard cross-validation. 
The second parameter, $K$, specifies the number of train--test repetitions over which estimates of prediction error are averaged.
The two parameters act as distinct levers in our method to control the bias and variance of the resulting prediction error estimator, making our cross-validation method particularly appealing for the following reasons.
\begin{enumerate}
\item \textbf{Direct control of bias via $\boldsymbol{\alpha}$:} The parameter $\alpha$ controls the bias introduced by estimating the prediction function on noisier training data, with the bias decaying to $0$ as $\alpha$ decreases. Unlike standard cross-validation, where bias is controlled by the number of folds, the parameter $\alpha$ in our method does not depend on the number of train--test repetitions, $K$. 

Separating the roles of these two parameters provides a significant advantage: by averaging empirical estimates of prediction error over just $K$ train--test repetitions---where $K$ can be as few as two---our method, with a small $\alpha$, can achieve very low bias. 
In contrast, achieving a similarly low bias with standard cross-validation requires the computationally intensive leave-one-out (LOO) method with $n$ train--test repetitions.

\item \textbf{Stable variance for finite $\mathbf{K}$:} A key strength of our estimator, as supported by our theoretical and empirical analysis, is its stable variance for a large class of smooth prediction functions, even as we let $\alpha$ vanish to~$0$. 

As a result, our estimator often achieves lower MSE compared to standard cross-validation, where reducing bias can come at the cost of increased variance. The stable variance of our method is due to the carefully designed ``antithetic'' correlation structure among the external Gaussian randomization variables.
\end{enumerate}

To the best of our knowledge, this work is the first to investigate the potential of an antithetic Gaussian randomization approach for cross-validation.
Figure~\ref{fig: isotonic mse} compares the MSE of our cross-validated estimator with standard cross-validation estimators for computing the prediction error \eqref{estimand: pe} in an isotonic regression problem.
Given that sample splitting is a widely used strategy for creating independent subsets, we adopt a similar approach to \cite{chaudhuri2023cross} here to implement a variant of the standard cross-validation procedure.
Our method uses only two train--test repetitions ($K=2$) with $\alpha=0.01$, while standard cross-validation is performed with $K=2$ folds and $K=100$ folds, the latter corresponding to leave-one-out (LOO) cross-validation.  
Remarkably, our estimator achieves a smaller MSE than LOO cross-validation while being $50$ times more computationally efficient. Additional details on this example and extensive numerical results examining the effects of $\alpha$ and $K$ are presented later in Section~\ref{sec: experiments}.

\begin{figure}
  \centering
  \includegraphics[width=0.4\textwidth]{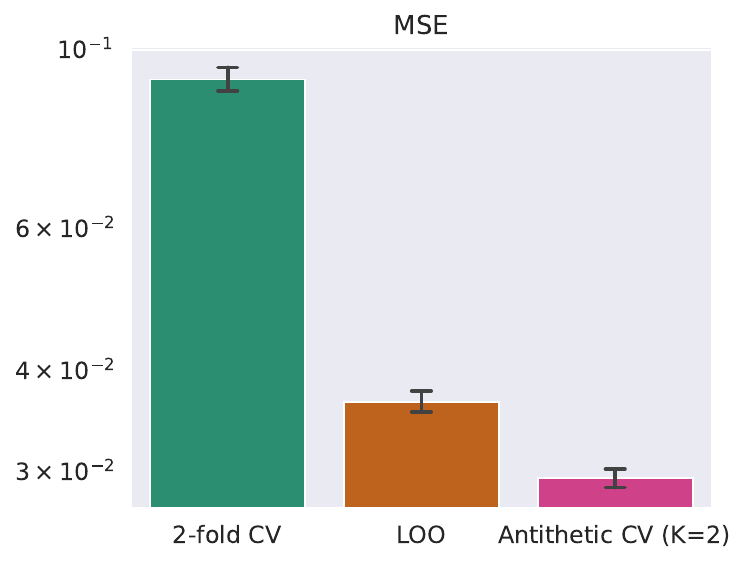}
  \caption{Estimating prediction error in an isotonic regression problem. The $y$-axis reports mean squared error (MSE), with error bars indicating variability across 1,000 independent simulations using synthetic data. From left to right, the methods shown are standard 2-fold CV, LOO CV, and the proposed antithetic CV method with $K=2$ and $\alpha=0.01$. Additional details are provided in Section~\ref{sec: experiments}.}
  \label{fig: isotonic mse}
\end{figure}

\subsection{Related work and contributions}

The idea of employing alternative forms of randomization in cross-validation is by no means new. 
\cite{brown2013poisson} introduced a ``nonstandard cross-validation method'' for the Gaussian sequence model, in which a single train--test split of the form~\eqref{eq:simple-split} is used for both estimation and hyperparameter tuning. 
This construction is related to our proposal in the special case of two folds.

Among more recent work, our proposal is closely related to the coupled bootstrap (CB) estimator introduced by \cite{oliveira2021unbiased} for normal data, which we review in the next section. 
Similar to our method, the CB estimator computes prediction error using randomized train--test data pairs; however, it employs independent Gaussian randomization variables.
In short, our method relates to the CB method in a similar way as standard cross-validation relates to independent train--test splits using subsampled data.
Crucially, our method preserves a property of standard cross-validation where the original data can be reconstructed by pooling the train (or test) data from all folds, which is possible due to the choice of antithetic correlation between our randomization variables.

Another related proposal is the multi-fold data thinning (MDT) method introduced by \cite{neufeld2024data}, which also reconstructs the original data from the folds. 
However, the MDT method, like the sample-splitting based version of cross-validation, does not have the two distinct parameters of our approach that separately control bias and variance in the estimation procedure.
Moreover, it does not incorporate the same correlation structure among the randomization variables that our approach uses.

Finally, we point out that these existing approaches make different distributional assumptions on the data than our approach.
Unlike the CB and MDT methods, which propose randomization schemes for normal data or, in the case of MDT, for a larger family of parametric distributions, our approach uses a single antithetic Gaussian randomization scheme that can be applied across various data types, as long as the sufficient statistic in the loss function is approximately normal.

Our cross-validation proposal fits within a broader line of recent work on randomized alternatives to traditional sample splitting, using external noise to perform statistical tasks such as model validation, selective inference, and risk estimation. 
These approaches include splitting strategies such as data fission \citep{leiner2023data} and data thinning \citep{rasines2023splitting, dharamshi2025generalized}, methods leveraging Gaussian randomization for selective inference \citep{dai2023fdr, TianTaylor2018, PanigrahiTaylor2022, huang2023selective}, and randomized procedures for unbiased estimation of risk and prediction error \citep{tian2020prediction, oliveira2021unbiased, oliveira2022unbiased, fry2023unbiased}.

Here is a summary of the remainder of the paper:
\begin{enumerate}
\item In Section~\ref{sec:3}, we review the CB estimator for the normal means problem and introduce our cross-validated estimator using antithetic randomization variables. We conclude this section by establishing a connection between our cross-validated estimator and Stein's unbiased risk estimator (SURE), and showing that our estimator can be roughly viewed as SURE applied to a convolution-smoothed version of the prediction function.
\item In Section~\ref{sec: theory}, we analyze the MSE of the proposed estimator as $\alpha$, the parameter controlling bias, approaches zero. 
Our theory demonstrates that we can obtain unbiased estimates of prediction error as $\alpha \to 0$, while ensuring that the variance of our estimator remains stable even with vanishingly small $\alpha$, under mild smoothness assumptions.
\item In Section~\ref{sec:glm}, we extend our cross-validation approach to handle different data types within a relatively assumption-light framework, via randomization of the sufficient statistic in the loss function. Assuming the sufficient statistics are asymptotically normal and satisfy certain regularity conditions, we show that the MSE analysis in Section \ref{sec: theory} generalizes to a large category of loss functions, including those used in fitting generalized linear models (GLMs), such as logistic regression.
\item In Section~\ref{sec: experiments}, we provide simulation results comparing our method to standard cross-validation, the coupled bootstrap approach, and SURE. 
When sample splitting is difficult, our method offers a simple alternative, avoiding manual tuning of the bias--variance tradeoff and differentiating the prediction function.
\item Section \ref{sec:conclusion} concludes the paper with a brief summary of our contributions and identifies a few potential directions for future research.
\end{enumerate}

\section{The proposed estimator for the normal means problem}
\label{sec:3}

We first review the coupled bootstrap (CB) estimator proposed by \cite{oliveira2021unbiased}, before introducing our antithetic randomization scheme and our cross-validated estimator.

\subsection{Review of coupled bootstrap (CB)}

The CB estimator \citep{oliveira2021unbiased}  estimates the risk in the normal means problem, where the response vector $Y\in\R^n\sim \N(\theta,\sigma^2I_n)$, with a known variance $\sigma^2$. 
In our paper, we focus on estimating the prediction error, defined in \eqref{pred:error} as $\PE(g)= \EE{\|g(Y)- \tY\|_2^2}$, where $\tY \sim \N(\theta, \sigma^2 I_n)$ denotes an independent copy of $Y$.
In the normal means problem, the prediction error differs from the risk only by the additive constant $n\sigma^2$.

To estimate $\PE(g)$, the CB method generates $K$ independent Gaussian randomization variables:
$
\tilde\om^{(1)}, \tilde\om^{(2)}, \ldots, \tilde\om^{(K)}\iid \N(0, \sigma^2 I_n).
$
For each $k \in [K]$ and a parameter $\alpha \in \mathbb{R}^+$, two randomized copies of $Y$, which we call a train--test data pair, are constructed as
\begin{equation}
\label{CB:train:test}
\left(\tilde{Y}^{(k)}_{\text{train}}= Y + \sqrt{\alpha}\tilde\om^{(k)}, \quad \tilde{Y}^{(k)}_{\text{test}}=Y- \dfrac{1}{\sqrt{\alpha}}\tilde\om^{(k)}\right).
\end{equation}
By construction, the two vectors are distributed as
\[
\begin{pmatrix} \widetilde{Y}^{(k)}_{\text{train}} \\ \widetilde{Y}^{(k)}_{\text{test}}\end{pmatrix}  \sim \N\left(\begin{pmatrix}\theta \\ \theta \end{pmatrix}, \begin{bmatrix}\sigma^2 (1+\alpha) I_n & 0_{n, n} \\ 0_{n,n} & \sigma^2(1+\alpha^{-1})  I_n\end{bmatrix} \right).
\]
The prediction error using the $k$-th train--test pair is then computed as
\begin{equation*}
{\text{CB}}_{\alpha}^{(k)}= \big\|\tilde{Y}^{(k)}_{\text{test}} - g(\tilde{Y}^{(k)}_{\text{train}}) \big\|_2^2- \frac{1}{\alpha} \big\|\tilde\om^{(k)} \big\|_2^2,
\end{equation*}
where the second term, $\|\tilde\om^{(k)}\|_2^2/\alpha$, adjusts for the difference between the variance of the noisy test data and the variance of the original data $Y$.
Finally, the CB estimator is obtained by averaging over the $K$ independent draws of randomization and is given by:
\[
{\text{CB}}_{\alpha} = \frac{1}{K} \sum_{k=1}^K{\text{CB}}_{\alpha}^{(k)}.
\]

Since $\tY^{(k)}_{\text{train}}\sim\N(\theta,(1+\alpha)\sigma^2 I_n)$, straightforward calculations show that the CB estimator is unbiased for a noise-inflated version of the prediction error
\begin{align*}
    \PE_\alpha(g)=\EE{\|g(Y) - \tY\|_2^2 },\text{ where }Y\sim \N(\theta, (1+\alpha)\sigma^2 I_n ),\; \tY\sim \N(\theta,\sigma^2 I_n).
\end{align*}
This estimand represents the prediction error obtained when $g$ is trained on noisier data, with variance inflated by a factor of $(1+\alpha)$. 
As a result, $\CB_\alpha$ is biased for the actual prediction error $\PE(g)$, defined in Equation~\eqref{pred:error}. Note that the bias---the difference between the noise-inflated prediction error $\PE_{\alpha}(g)$ and the original estimand $\PE(g)$---converges to zero as the parameter $\alpha$ approaches zero. Nevertheless, similar to standard cross-validation, a bias--variance tradeoff arises here as well: reducing  bias by decreasing $\alpha$ comes at a cost of higher variance. 
\cite{oliveira2021unbiased} show that the variance of the CB estimator is of order $O((K\alpha)^{-1})$ as $\alpha$ approaches $0$, which means that for any fixed $K$, its variance diverges as the bias is driven to $0$.

We propose a novel antithetic randomization scheme, presented below, that overcomes this limitation of the CB estimator for a large class of smooth prediction functions.

\subsection{Antithetic randomization}

In our antithetic randomization scheme, we generate $K$ ($K\geq2$) randomization variables as follows:
\begin{equation}
    \om^{(1)},\ldots,\om^{(K)}\sim \N(0,\sigma^2 I_n), \text{ where } \text{Cov}(\om^{(j)},\om^{(k)})=-\frac{\sigma^2}{K-1}I_n \text{ for }j\neq k.
    \label{antithetic:rand}
\end{equation}

We make two important observations about the distribution of our randomization variables:
\begin{enumerate}
\item First, the normal distribution in \eqref{antithetic:rand} is degenerate.
This is because the variance of the sum of the randomization variables is a zero matrix, i.e., $\text{Var}\left(\sum_{k=1}^K \om^{(k)}\right)=0$. Combined with fact that the randomization variables have zero mean, this imposes the following zero-sum constraint on these randomization variables:
\begin{equation}
\sum_{k=1}^K \om^{(k)}=0.
\label{zero:sum}
\end{equation}
\item Second, for a $K$-by-$K$ correlation matrix where all off-diagonal entries are equal, the range of possible correlation values is $[-\frac{1}{K-1}, 1]$.
Therefore, we note that our randomization scheme takes the most negative correlation possible, which is why we term it an ``antithetic'' scheme.
\end{enumerate}

For a fixed $\alpha\in \mathbb{R}^+$, we construct randomized train--test copies of the data $Y$ as
\begin{align*}
\begin{pmatrix} Y^{(k)}_{\text{train}} \\ Y^{(k)}_{\text{test}} \end{pmatrix} = \begin{pmatrix} Y- \sqrt{\alpha}\displaystyle\sum_{j\neq k}\om^{(j)} \\  Y- \dfrac{1}{\sqrt{\alpha}}\om^{(k)} \end{pmatrix} = \begin{pmatrix} Y + \sqrt{\alpha}\om^{(k)} \\ Y- \dfrac{1}{\sqrt{\alpha}}\om^{(k)}\end{pmatrix},\;\text{ for } k\in[K],
\end{align*}
where the second equality is due to the zero-sum constraint in \eqref{zero:sum}. 
Notice that our construction of $K$ noisy train--test pairs mimics the standard $K$-fold cross-validation in the sense that when the train (or test) data from all $K$ folds are averaged, the randomization variables cancel, thereby recovering the original data $Y$.

Finally, our cross-validated estimator $\cv_\alpha$ is defined as
\begin{align}\label{equ: def cv}
    {\text{CV}}_{\alpha}= \frac{1}{K}\sum_{k=1}^K {\text{CV}}_{\alpha}^{(k)},
\end{align}
where ${\text{CV}}_{\alpha}^{(k)} = \|Y^{(k)}_{\text{test}} - g(Y^{(k)}_{\text{train}})\|_2^2- \dfrac{1}{\alpha}\|\om^{(k)}\|_2^2$.

While ${\text{CV}}_{\alpha}^{(k)}$ in our estimator has the same form as ${\text{CB}}_{\alpha}^{(k)}$ in the CB method, it uses the antithetic randomization scheme described above instead of independent randomization variables.
As we show in Section~\ref{sec: theory}, our correlated randomization scheme yields a substantial variance reduction for smooth predictions, ensuring that the variance of our cross-validated estimator remains bounded as $\alpha \to 0$, at which point its bias also vanishes.

\subsection{Connection with SURE}
\label{sec: SURE}

In the normal means problem, Stein's Unbiased Risk Estimator \citep[SURE;][]{stein1981estimation} is typically introduced as an unbiased estimator of the quadratic risk $\mathbb{E}\left[\|g(Y)-\theta\|_2^2\right]$ for weakly differentiable estimators $g$. Since the prediction error $\PE(g)$ defined in \eqref{pred:error} differs from the quadratic risk only by a constant $n\sigma^2$, we may equivalently define SURE as an unbiased estimator of prediction error:
\begin{align*}
    \mathrm{SURE}(g)= \|Y-g(Y)\|_2^2 + 2\sigma^2\nabla\cdot g(Y),
\end{align*}
where the divergence $\nabla\cdot g(Y)$ is the trace of the Jacobian of $g$. Under mild regularity conditions, $\mathrm{SURE}(g)$ is unbiased for $\PE(g)$.

Our antithetic cross-validation estimator, $\cv_\alpha$, is closely connected with SURE. Using the zero-sum property of the antithetic variates, $\sum_{k=1}^K \omega^{(k)} = 0$, our estimator can be rewritten as
\begin{align}\label{equ: cv decomp 2}
    \cv_\alpha = \frac1K\sum_{k=1}^K \|Y - g(Y+\sqrt\alpha\omega^{(k)})\|_2^2 +\frac2K\sum_{k=1}^K \frac{(\omega^{(k)})\tran \left(g(Y+\sqrt\alpha\omega^{(k)})-g(Y)\right)}{\sqrt\alpha}.
\end{align}
Thus, assuming $g$ is differentiable, our estimator has a well-defined limit as $\alpha\to 0$:
\begin{align}\label{equ:zero-alpha}
    \cv_{0^+} := \lim_{\alpha\to 0}\cv_\alpha = \|Y - g(Y)\|_2^2 +\frac2K\sum_{k=1}^K (\omega^{(k)})\tran \nabla g(Y) \omega^{(k)}.
\end{align}
Averaging over the randomness in $\omega^{(1)},\ldots,\omega^{(K)}$ (equivalently, letting $K\to\infty$) and applying the trace trick,
\[
\EE{\cv_{0^+}\mid Y} = \mathrm{SURE}(g).
\]
This result parallels \citet{efron2004estimation}, who connected standard cross-validation and SURE when $n$ is large. \citet{oliveira2021unbiased} established a similar connection for the CB estimator, which recovers SURE by first letting $K\to\infty$ and then $\alpha\to 0$; this order is necessary because CB is unstable for small $\alpha$. By contrast, our estimator has a well-defined finite-$K$ limit as $\alpha\to 0$, and the expectation of this limit coincides with SURE.

While the above convergence holds as $\alpha\to0$, we show next that for any fixed $\alpha > 0$, a noise-free version of our estimator (defined below) can be interpreted as SURE applied to a convolution-smoothed prediction function $g$.

Consider the expression for $\cv_\alpha$ in Equation~\eqref{equ: cv decomp 2}, and replace the term $g(Y+\sqrt\alpha\omega^{(k)})$ with its conditional expectation $\EE{g(Y+\sqrt\alpha\omega)\mid Y}$, where the expectation is over $\omega\sim\N(0,\sigma^2 I_n)$.
This leads to the noise-free version of our estimator:
\begin{align}
    \overline{\cv}_\alpha= \|Y - \EE{g(Y+\sqrt\alpha\omega)\mid Y }\|_2^2 + \frac{2}{\sqrt\alpha}\EE{\omega\tran g(Y+\sqrt\alpha\omega) \mid Y}.
    \label{noise:free:CV}
\end{align}
Equivalently, $\overline{\cv}_\alpha$ can be seen as a Rao-Blackwellized version of $\cv_\alpha$, with the randomness from $\omega^{(k)}$'s marginalized out. 
The next result states that the Rao-Blackwellized version $\overline{\cv}_\alpha$ of the proposed estimator coincides with the SURE when $g$ is replaced by its convolution-smoothed version $g^*:=g*\varphi_{\alpha\sigma^2}$. Here, $g*\varphi_{\alpha\sigma^2}(y):=\int g(y-z)\varphi_{\alpha\sigma^2}(z)\rd z$, where $\varphi_{\alpha\sigma^2}$ denotes the density of $\N(0, \alpha\sigma^2 I_n)$.
\begin{proposition}[Connection with SURE]{\label{prop: SURE}}
    It holds that
    \begin{align}\label{equ: smoothed cv}
        \overline{\cv}_\alpha = \mathrm{SURE}(g * \varphi_{\alpha\sigma^2} ).
    \end{align}
\end{proposition}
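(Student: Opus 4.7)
The plan is to verify the two terms in $\overline{\cv}_\alpha$ match the two terms of $\mathrm{SURE}(g * \varphi_{\alpha\sigma^2})$ separately. The first (quadratic) term is essentially immediate: since $\sqrt\alpha\,\omega \sim \N(0,\alpha\sigma^2 I_n)$ and the Gaussian density is symmetric about the origin, we have
\begin{equation*}
\EE{g(Y+\sqrt\alpha\,\omega)\mid Y} \;=\; \int g(Y+z)\varphi_{\alpha\sigma^2}(z)\,\rd z \;=\; (g*\varphi_{\alpha\sigma^2})(Y),
\end{equation*}
so $\|Y - \EE{g(Y+\sqrt\alpha\omega)\mid Y}\|_2^2 = \|Y - (g*\varphi_{\alpha\sigma^2})(Y)\|_2^2$, matching the residual term in $\mathrm{SURE}(g*\varphi_{\alpha\sigma^2})$.

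For the second term, the goal is to show
\begin{equation*}
\frac{1}{\sqrt\alpha}\EE{\omega\tran g(Y+\sqrt\alpha\,\omega)\mid Y} \;=\; \sigma^2\,\nabla\cdot(g*\varphi_{\alpha\sigma^2})(Y).
\end{equation*}
I would apply Stein's identity conditionally on $Y$: treating $h(\omega) := g(Y+\sqrt\alpha\,\omega)$ as a weakly differentiable function of $\omega \sim \N(0,\sigma^2 I_n)$, Stein's identity yields
\begin{equation*}
\EE{\omega\tran h(\omega)\mid Y} \;=\; \sigma^2\,\EE{\nabla_\omega\cdot h(\omega)\mid Y} \;=\; \sigma^2\sqrt\alpha\,\EE{(\nabla\cdot g)(Y+\sqrt\alpha\,\omega)\mid Y},
\end{equation*}
where the chain rule produces the factor $\sqrt\alpha$. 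Dividing by $\sqrt\alpha$, the right-hand side becomes $\sigma^2\,((\nabla\cdot g)*\varphi_{\alpha\sigma^2})(Y)$.

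The final step is to interchange convolution with differentiation, i.e.\ $(\nabla\cdot g)*\varphi_{\alpha\sigma^2} = \nabla\cdot(g*\varphi_{\alpha\sigma^2})$. Since $\varphi_{\alpha\sigma^2}$ is infinitely differentiable with rapidly decaying derivatives, one can differentiate under the integral to move the derivative from $g_i$ onto $\varphi_{\alpha\sigma^2}$, and then back onto the smoothed function $g_i*\varphi_{\alpha\sigma^2}$. Combining the two halves gives exactly $\mathrm{SURE}(g*\varphi_{\alpha\sigma^2})$.

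The main technical subtlety is justifying this differentiation-under-the-integral step cleanly using only Assumption~\ref{assump: weakly differentiable} and the moment conditions already in force; this is routine because convolution against the Gaussian kernel smooths $g$ and the Gaussian integrability used earlier in the paper (e.g.\ in Lemma~\ref{lem: approximation to identity}) already supplies the dominating bounds required. Aside from this bookkeeping, the proof is a one-line application of Stein's identity after recognizing the convolution structure, which is what makes the connection to SURE so transparent.
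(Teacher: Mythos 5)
Your argument is correct as far as it goes, but it takes the dual route to the paper's and, importantly, proves a strictly less general statement. The paper never differentiates $g$: it writes $\nabla\cdot(g*\varphi_{\alpha\sigma^2})(y)$ as a sum of integrals $\nabla_{y_i}\int g_i(x)\varphi_{\alpha\sigma^2}(y-x)\,\rd x$, moves the derivative onto the infinitely smooth Gaussian kernel to produce the score factor $-(y_i-x_i)/(\alpha\sigma^2)$, and then recognizes the resulting integral as $\frac{1}{\alpha\sigma^2}\EE{g(y+\ep)\tran\ep}$ with $\ep\sim\N(0,\alpha\sigma^2 I_n)$ — i.e., Stein's identity read in the direction that requires no smoothness of $g$ whatsoever. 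Your version instead applies Stein's identity to $\omega\mapsto g(Y+\sqrt\alpha\,\omega)$, which presupposes that $g$ is weakly differentiable (Assumption~\ref{assump: weakly differentiable}), and then commutes $\nabla\cdot$ with the convolution. That is fine under Assumption~\ref{assump: weakly differentiable}, but it forfeits exactly the case the paper cares about: the remark following the proposition emphasizes that the identity $\overline{\cv}_\alpha=\mathrm{SURE}(g*\varphi_{\alpha\sigma^2})$ holds even when $g$ is \emph{not} weakly differentiable, so that $\cv_\alpha$ extends SURE to non-smooth prediction rules. To recover full generality you should reverse the direction of the integration by parts — differentiate the kernel rather than $g$ — which also dissolves your "main technical subtlety," since dominating the $y$-derivative of $\varphi_{\alpha\sigma^2}(y-x)$ only needs integrability of $g$ against a Gaussian, not any smoothness.
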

The proof is provided in Appendix~\ref{prf: prop SURE}.

To sum up, when SURE is available, our estimator $\cv_\alpha$ closely resembles SURE for small $\alpha$.
A computational advantage of $\cv_\alpha$ over SURE is that it avoids calculating the divergence term $\nabla \cdot g$, which may not be available in closed form for many prediction functions. Moreover, when SURE is not applicable---e.g., when $g$ is not weakly differentiable---$\cv_\alpha$ still remains well-defined. In such cases, it behaves as if SURE were applied to the infinitely differentiable, convolution-smoothed estimator $g * \varphi_{\alpha\sigma^2}$.

\section{Mean squared error analysis of our estimator}
\label{sec: theory}

Continuing with the normal means problem, in this section, we analyze the mean squared error (MSE) of the proposed estimator $\cv_\alpha$ in \eqref{equ: def cv} for estimating the prediction error $\PE(g)$, defined in \eqref{pred:error}.
Note, the MSE decomposes into bias and variance components as
\begin{align*}
\EE{(\cv_\alpha -\PE(g) )^2 } &= \left\{\EE{\cv_\alpha} -\PE(g) \right\}^2 + \Var{\cv_\alpha}\numberthis\label{equ: MSE decomposition} \\
&= \left\{\EE{\cv_\alpha} -\PE(g) \right\}^2 + \EE{\Var{\cv_\alpha\mid Y}} + \Var{\EE{\cv_\alpha\mid Y }}.
\end{align*}
Following \citet{oliveira2021unbiased}, we call $\EE{\Var{\cv_\alpha\mid Y}}$ and $\Var{\EE{\cv_\alpha \mid Y}}$ the \emph{reducible variance} and \emph{irreducible variance}, respectively, since (as we will see) the former can be made arbitrarily small by increasing $K$, whereas the latter cannot. 

We study the bias $(\EE{\cv_\alpha} -\PE(g))$ in Section~\ref{sec: bias}, followed by the reducible variance $\EE{\Var{\cv_\alpha\mid Y}}$ and the irreducible variance $\Var{\EE{\cv_\alpha\mid Y }}$ in Section~\ref{sec: variance}.
As will be evident from the following results, the bias and irreducible variance of $\cv_{\alpha}$ and the CB estimator, ${\text{CB}}_{\alpha}$, reviewed in Section \ref{sec:3}, are identical, with differences in behavior arising solely from their reducible variance.

\subsection{Bias}\label{sec: bias}
We show that the bias of our estimator can be made arbitrarily small as $\alpha$ approaches zero, under the mild condition that $g$ is square-integrable. 
This result follows directly from the ``approximation to the identity" property of the Gaussian density, as stated in Lemma \ref{lem: approximation to identity} below.

Let $\varphi_{\sigma^2}$ denote the density of $\N(0, \sigma^2 I_n)$ in dimension $n$; for simplicity, we omit the dependence on $n$, slightly abusing the notation.

\begin{lemma}[Approximation to the identity]
\label{lem: approximation to identity}
    Let $f$ be an integrable function under the Gaussian distribution $\N(\theta, \sigma^2 I_n)$. Then
    \begin{align*}
    f*\varphi_{\alpha\sigma^2}(Y)\stackrel{L_1}{\to} f(Y) \text{ as }\alpha\to 0.
    \end{align*}
\end{lemma}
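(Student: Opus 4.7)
The plan is to rewrite the convolution as a conditional expectation, reduce the $L^1$ statement to an $L^1$ coupling convergence via Jensen's inequality, and then run a standard density argument with continuous compactly supported approximants.

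Let $Z \sim \N(0, I_n)$ be independent of $Y$ and set $Y_\alpha = Y - \sqrt{\alpha}\sigma Z$, so $Y_\alpha$ is marginally $\mu_\alpha := \N(\theta, (1+\alpha)\sigma^2 I_n)$. By symmetry of the Gaussian kernel, $f*\varphi_{\alpha\sigma^2}(Y) = \EE{f(Y_\alpha)\mid Y}$; the conditional Jensen inequality then gives
\begin{equation*}
\EE{|f * \varphi_{\alpha\sigma^2}(Y) - f(Y)|} \;\leq\; \EE{|f(Y_\alpha) - f(Y)|}.
\end{equation*}
Since $Y_\alpha \to Y$ almost surely as $\alpha \to 0$, it suffices to show the right-hand side vanishes in the limit.

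Fix $\epsilon > 0$. Writing $\mu := \N(\theta, \sigma^2 I_n)$, use density of $C_c(\R^n)$ in $L^1(\mu)$ to select $g \in C_c(\R^n)$ with $\|f - g\|_{L^1(\mu)} < \epsilon$; then by the triangle inequality
\begin{equation*}
\EE{|f(Y_\alpha) - f(Y)|} \;\leq\; \EE{|(f-g)(Y_\alpha)|} + \EE{|g(Y_\alpha) - g(Y)|} + \EE{|(f-g)(Y)|}.
\end{equation*}
The third term is less than $\epsilon$ by construction, and the middle term vanishes as $\alpha \to 0$ by bounded convergence, since $g$ is bounded and continuous and $Y_\alpha \to Y$ in probability.

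The main obstacle is the first term, $\EE{|(f-g)(Y_\alpha)|} = \|f - g\|_{L^1(\mu_\alpha)}$, because $\mu_\alpha$ differs from $\mu$ and the Radon--Nikodym derivative $\rd\mu_\alpha/\rd\mu = (1+\alpha)^{-n/2}\exp(\alpha\|y-\theta\|^2/(2(1+\alpha)\sigma^2))$, though locally bounded, grows like an exponential in $\|y\|^2$ at infinity. I would handle it by additionally arranging $g$ to vanish outside a ball $B_R$ chosen large enough that $\int_{B_R^c}|f|\,\rd\mu < \epsilon$: on $B_R$ the density ratio is bounded by a constant tending to $1$ as $\alpha \to 0$, contributing $O(\epsilon)$; off $B_R$ one has $|f-g|=|f|$, and $\int_{B_R^c}|f|\,\rd\mu_\alpha$ can be controlled by a H\"older estimate against $\rd\mu_\alpha/\rd\mu$—which, for $\alpha$ below an explicit threshold, is square-integrable under $\mu$—possibly requiring a mild strengthening of the hypothesis to $f \in L^p(\mu)$ for some $p > 1$, which holds in all concrete applications of the lemma in the paper. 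Combining the three bounds yields $\EE{|f*\varphi_{\alpha\sigma^2}(Y) - f(Y)|} = O(\epsilon)$ for $\alpha$ in a small enough neighborhood of $0$, and letting $\epsilon \to 0$ completes the proof.
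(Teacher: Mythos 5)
Your route is genuinely different from the paper's. The paper applies Fubini to bound $\EE{|f*\varphi_{\alpha\sigma^2}(Y)-f(Y)|}$ by $\int\varphi(\omega)\int|f(y+\sqrt{\alpha}\omega)-f(y)|\,p(y)\,\rd y\,\rd\omega$, dominates the inner integral by controlling the shifted density ratio $p(y-\ep\omega)/p(y)$ (Lemma~\ref{lem: log p condition}), and finishes with the mean-continuity theorem and dominated convergence. You instead use conditional Jensen to pass to the coupling bound $\EE{|f(Y_\alpha)-f(Y)|}$ and run a three-term $C_c$ approximation with a change of measure from $\mu$ to $\mu_\alpha$. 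Your reduction, the second and third terms, and the on-$B_R$ piece of the first term are all correct.

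The difficulty you flag in the off-$B_R$ piece is a genuine gap relative to the lemma as stated, and it cannot be closed: the statement is false for general $f\in L^1(\mu)$. Take $f(y)=e^{\|y-\theta\|_2^2/(2\sigma^2)}(1+\|y\|_2)^{-(n+1)}$. Then $\int f\,\rd\mu\propto\int(1+\|y\|_2)^{-(n+1)}\,\rd y<\infty$, but $\EE{f*\varphi_{\alpha\sigma^2}(Y)}=\EE{f(V)}$ with $V\sim\N(\theta,(1+\alpha)\sigma^2 I_n)$, which is proportional to $\int e^{\alpha\|v-\theta\|_2^2/(2(1+\alpha)\sigma^2)}(1+\|v\|_2)^{-(n+1)}\,\rd v=\infty$ for every $\alpha>0$; since $f\ge 0$, this means $f*\varphi_{\alpha\sigma^2}\notin L^1(\mu)$ and the $L_1$ convergence fails. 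So the strengthening you propose---$f\in L^p(\mu)$ for some $p>1$, or more precisely integrability of $f$ under $\N(\theta,(1+\alpha_0)\sigma^2 I_n)$ for some $\alpha_0>0$, which is what your H\"older step actually requires---is not an artifact of your method but a necessary repair of the hypothesis. Under it your argument closes: $\|f\,\mathbf{1}_{B_R^c}\|_{L^p(\mu)}$ is small for large $R$, and $\|\rd\mu_\alpha/\rd\mu\|_{L^q(\mu)}$ is finite and tends to $1$ once $q\alpha/(1+\alpha)<1$.

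For comparison, the paper's own proof hides the same issue: Lemma~\ref{lem: log p condition} assumes $|\log p(x)-\log p(x')|\le L\|x-x'\|_2^2$ and asserts that a Gaussian density satisfies this, but it does not (in one dimension $|(x+1)^2-x^2|=2x+1$ is unbounded while $\|x-x'\|_2^2=1$); the unbounded cross term $\ep\,\omega\tran(x-\theta)/\sigma^2$ in $\log p(x-\ep\omega)-\log p(x)$ is precisely what forces integrability of $f$ against a slightly inflated Gaussian. Your proposal, with the strengthened hypothesis stated explicitly, is the sounder version of the argument.
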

\begin{proof} 
    This is a direct application of Lemma~\ref{lem: log p condition} and Lemma~\ref{lem: L1} in the Appendix.
\end{proof}
Lemma \ref{lem: approximation to identity} states that the convolution of a function with $\varphi_{\alpha\sigma^2}$ is close to the original function in the $L_1$ sense as $\alpha\to0$. 
In the context of our problem, this lemma implies that 
$\EE{g(Y+\sqrt\alpha\omega)\mid Y}\stackrel{L_1}{\to} g(Y)$
as $\alpha\to0$, which in turn allows us to conclude that the bias of our estimator converges to zero as $\alpha$ approaches zero.
This observation is formalized in the following theorem.

\begin{theorem}[Bias]\label{thm: bias}
    Assume that $\EE{\|g(Y)\|_2^2}<\infty$. Then we have
    \begin{align*}
        \lim_{\alpha\to0} \EE{\cv_\alpha } =\PE(g).
    \end{align*}
\end{theorem}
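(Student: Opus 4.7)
The plan is to reduce $\EE{\cv_\alpha} = \frac{1}{K}\sum_{k=1}^K \EE{\cv_\alpha^{(k)}}$ to a single-index calculation, using that each $\om^{(k)}$ has the same marginal law $\N(0,\sigma^2 I_n)$ (the joint degeneracy of $(\om^{(1)},\ldots,\om^{(K)})$ does not affect any individual marginal). For a fixed $k$, I would first record the key structural fact that $Y^{(k)}_{\text{train}} = Y + \sqrt\alpha\,\om^{(k)}$ and $Y^{(k)}_{\text{test}} = Y - \om^{(k)}/\sqrt\alpha$ are jointly Gaussian with zero covariance---indeed, $\cov(Y+\sqrt\alpha\,\om^{(k)},\, Y - \om^{(k)}/\sqrt\alpha) = \sigma^2 I_n - \sigma^2 I_n = 0$ by independence of $Y$ and $\om^{(k)}$---and therefore independent. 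This marginal independence is precisely what the $\sqrt\alpha$ and $1/\sqrt\alpha$ scalings are designed to produce, and it plays the role of train/test fold disjointness in standard CV.

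With this independence in hand, I would expand $\|Y^{(k)}_{\text{test}} - g(Y^{(k)}_{\text{train}})\|_2^2$ around $\theta$: the cross term is an inner product of two mean-zero independent vectors and so vanishes in expectation, the quantity $\EE{\|Y^{(k)}_{\text{test}} - \theta\|_2^2}$ equals $n(1+1/\alpha)\sigma^2$, and the excess $n\sigma^2/\alpha$ is exactly cancelled by the correction $-\EE{\|\om^{(k)}\|_2^2/\alpha}$ built into $\cv_\alpha^{(k)}$. The outcome is $\EE{\cv_\alpha^{(k)}} = n\sigma^2 + \EE{\|g(Y + \sqrt\alpha\,\om^{(k)}) - \theta\|_2^2}$. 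An analogous Pythagorean decomposition of $\|g(Y) - \tY\|_2^2$, using independence of $Y$ and $\tY$, yields $\PE(g) = n\sigma^2 + \EE{\|g(Y) - \theta\|_2^2}$. The theorem therefore reduces to showing $\EE{\|g(Y + \sqrt\alpha\,\om^{(k)}) - \theta\|_2^2} \to \EE{\|g(Y) - \theta\|_2^2}$ as $\alpha \to 0$.

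For this last step I would set $f(y) := \|g(y) - \theta\|_2^2$; the triangle inequality together with the hypothesis $\EE{\|g(Y)\|_2^2} < \infty$ gives $\EE{f(Y)} < \infty$, so Lemma~\ref{lem: approximation to identity} applies. Conditioning on $Y$ and integrating out $\om^{(k)} \sim \N(0,\sigma^2 I_n)$ identifies $\EE{f(Y + \sqrt\alpha\,\om^{(k)}) \mid Y}$ with the convolution $f * \varphi_{\alpha\sigma^2}(Y)$, which converges to $f(Y)$ in $L^1$ by the lemma; since $L^1$ convergence implies convergence of expectations, the desired limit follows, and averaging over $k$ completes the argument. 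The computation is essentially routine once the marginal train/test independence is spotted; the only slightly delicate point is checking the integrability hypothesis of Lemma~\ref{lem: approximation to identity}, which is immediate and, importantly, requires no smoothness of $g$. Notably, the antithetic correlation structure plays no role in this bias analysis---its benefit will surface in the variance calculation of Section~\ref{sec: variance}.
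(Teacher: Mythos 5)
Your proposal is correct and follows essentially the same route as the paper's proof: expand the quadratic loss, use the independence of $Y+\sqrt\alpha\,\om^{(k)}$ and $Y-\om^{(k)}/\sqrt\alpha$ together with $\EE{\om^{(k)}}=0$ to kill the cross term and cancel the $n\sigma^2/\alpha$ inflation via the correction term, then invoke Lemma~\ref{lem: approximation to identity} and pass to the limit in expectation. The only cosmetic difference is that you center at $\theta$ and apply the lemma once to the single function $f(y)=\|g(y)-\theta\|_2^2$, whereas the paper applies it separately to $\|g\|_2^2$ and to each coordinate $g_i$; both reductions are valid under the stated integrability hypothesis.
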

\begin{proof}[Proof of Theorem~\ref{thm: bias}]
    Since $\EE{\cv_\alpha}=\EE{\cv_\alpha^{(k)}}$, it is sufficient to compute the expectation of $\cv_\alpha^{(k)}$. 
    Observe that
    \begin{equation*}
    \begin{aligned}
        \EE{\cv_\alpha^{(k)}}&=\EE{\|Y-\frac{1}{\sqrt\alpha}\omega^{(k)} - g(Y+\sqrt\alpha\omega^{(k)})\|_2^2 - \frac{\|\omega^{(k)}\|_2^2}{\alpha} } \\
        &=\EE{\|g(Y+\sqrt\alpha\omega^{(k)})\|_2^2 - 2(Y-\frac{1}{\sqrt\alpha}\omega^{(k)})\tran g(Y+\sqrt\alpha\omega^{(k)}) }\\
        & \ \ \ \  + \EE{\|Y-\frac{1}{\sqrt\alpha}\omega^{(k)}\|_2^2}  - \EE{\frac{\|\omega^{(k)} \|_2^2}{\alpha}}\\
        &=\EE{\|g(Y+\sqrt\alpha\omega^{(k)})\|_2^2 } -2\EE{Y } \tran \EE{g(Y+\sqrt\alpha\omega^{(k)})}+ \EE{\|Y\|_2^2}  ,
        \end{aligned}
    \end{equation*}
    where we have used the facts that $Y+\sqrt\alpha\omega^{(k)} \indep Y-\frac{1}{\sqrt\alpha}\omega^{(k)}$, $Y\indep \omega^{(k)}$, and $\EE{\omega^{(k)}}=0$.
    Note that 
    \[
    \EE{\|g(Y+\sqrt\alpha\omega^{(k})\|_2^2 \mid Y } = \|g\|_2^2 * \varphi_{\alpha\sigma^2} (Y),
    \] 
    which converges in $L_1$ to $\|g(Y)\|_2^2$ as $\alpha\to0$, by Lemma~\ref{lem: approximation to identity}. 
    Similarly, applying Lemma~\ref{lem: approximation to identity} to the function $g_i(Y)$ for $1\leq i\leq n$ shows that that $\EE{g(Y+\sqrt\alpha\omega^{(k)})\mid Y }$ converges in $L_1$ to $g(Y)$. 
    
    This establishes that, as $\alpha\to0$,
    \begin{align*}
        \EE{\cv_\alpha^{(k)}} \to \EE{\|g(Y)\|_2^2} - 2\EE{Y}\tran \EE{g(Y)} + \EE{\|Y\|_2^2}.
    \end{align*}
    The right-hand-side equals $\PE(g)=\EE{\|\tilde Y-g(Y)\|_2^2 }$, where $\tilde Y$ is an independent copy of $Y$. This completes the proof.
 \end{proof}

More importantly, as we demonstrate next for a large class of smooth prediction functions, unlike the CB estimator, decreasing $\alpha$ does not increase the variance of our estimator.

\subsection{Variance reduction with antithetic randomization}
\label{sec: variance}

To analyze the variance of the proposed estimator $\cv_\alpha$, we impose a mild smoothness condition on the prediction function $g$.
This condition is the weak differentiability assumption underlying the construction of the classical SURE estimator~\citep{stein1981estimation}.
\begin{assumption}[Weak differentiability]\label{assump: weakly differentiable}
All components $g_i$ ($1\leq i\leq n$) of $g$ are weakly differentiable. That is, there exists a function $\nabla g_i:\R^n\to\R^n$, the weak derivative of $g_i$, such that
\begin{align*}
    g_i(y+z) - g_i(y) = \int_0^1 z\cdot \nabla g_i(y+tz)\rd t,
\end{align*} 
for almost all $y, z\in\R^n$. 
Denote the Jacobian matrix of $g$ as $\nabla g\in \R^{n\times n}$, where the $i$-th row is equal to $\nabla g_i$.
\end{assumption}

This class of functions includes many well-known estimators, such as ridge, lasso, group lasso, and generalized lasso, evaluated at fixed values of their penalty parameters; see, for example, the paper by \cite{tibshirani2012degrees}.

The following theorem provides the expression for the reducible variance of $\cv_\alpha$ as $\alpha$ approaches zero.

\begin{theorem}[Reducible variance]\label{thm: reducible variance}
    Suppose that Assumption~\ref{assump: weakly differentiable} holds. 
    \sloppy{Furthermore, let $\EE{\|g(Y)\|_2^4}<\infty$, $\EE{\|\nabla g(Y)\|_F^2}<\infty$.}
    Then, we have that
    \begin{align*}
        \lim_{\alpha\to0} \EE{\Var{\cv_\alpha\mid Y}}= \frac{4\sigma^4}{K-1}\EE{\|\nabla g(Y) \|_F^2 + \tr(\nabla g(Y)^2 )}.
    \end{align*}
\end{theorem}
\begin{remark}
The CB estimator, based on independent randomization variables, has reducible variance of order $O(1/(K\alpha))$, regardless of whether the prediction function is weakly differentiable or not. 
Theorem \ref{thm: reducible variance}, stated above, implies that the reducible variance of our cross-validated estimator remains bounded for any fixed $K>1$ as $\alpha\to0$, since it is free from $\alpha$ in the limit. 
This establishes a clear advantage of the proposed estimator over the CB estimator for weakly differentiable prediction functions $g$, whose variance diverges as $\alpha \to 0$ for any fixed $K$. See Remark \ref{rem:genfunctions} for a discussion of how the variance of our estimator compares with that of the CB estimator for a hard-thresholded function, which lies outside the class of weakly differentiable functions.
\end{remark}

We provide a sketch of the proof here to illustrate the role of antithetic randomization in achieving this reduction in variance, with the detailed proof deferred to Appendix~\ref{prf: thm reducible variance}. 
\begin{proof}[Proof sketch of Theorem~\ref{thm: reducible variance}]
    We first write
    \begin{align*}
        \cv_\alpha&=\frac1K\sum_{k=1}^K \|Y-\frac{1}{\sqrt\alpha}\omega^{(k)} - g(Y +\sqrt\alpha\omega^{(k)} )\|_2^2 - \frac{1}{\alpha}\|\omega^{(k)}\|_2^2\\
        &=\underbrace{\frac1K\sum_{k=1}^K  \|Y-g(Y+\sqrt\alpha\omega^{(k)})\|_2^2}_{(\Rom{1})} + 
        \underbrace{\frac1K\sum_{k=1}^K \frac{2}{\sqrt\alpha}\langle  \omega^{(k)} , g(Y+\sqrt\alpha\omega^{(k)})\rangle}_{(\Rom{2})} \numberthis\label{equ: CV decomp} \\
        &\qquad \qquad - \underbrace{\frac2K\sum_{k=1}^K \langle Y, \frac{1}{\sqrt\alpha} \omega^{(k)}  \rangle}_{=0} .
    \end{align*}
    Note that the last term is 0 because of the zero-sum property of the antithetic randomization variables, i.e., $\sum_{k=1}^K \omega^{(k)}=0$.
    Thus we have
    \[
    \Var{\cv_\alpha \mid Y} = \Var{(\Rom{1}) \mid Y} + \Var{(\Rom{2}) \mid Y} + 2 \cov[{(\Rom{1}), (\Rom{2})\mid Y}].
    \]
    For the first summation $(\Rom{1})$, we show that 
    $\Var{(\Rom{1}) \mid Y} \stackrel{L_1}{\to}  0$.
    This is because we can write this conditional variance as the convolution of an integrable function with the Gaussian density $\varphi_{\alpha\sigma^2}$, which converges in $L_1$ to 0 by Lemma~\ref{lem: approximation to identity}.

    For the second summation $(\Rom{2})$, we have by the definition of weak differentiability that
    \begin{align*}
        (\Rom{2}) 
        &=\frac{2}{K\sqrt\alpha } \sum_{k=1}^K \langle \omega^{(k)}, g(Y) + \int_0^1 \nabla g(Y+t\sqrt\alpha\omega^{(k)})\tran (\sqrt\alpha\omega^{(k)}) \rd t  \rangle\\
        &=\frac{2}{K}\sum_{k=1}^K {\omega^{(k)}}\tran \left[\int_0^1 \nabla g(Y+t\sqrt\alpha\omega^{(k)})\rd t\right]  \omega^{(k)}.\numberthis\label{equ: second term decomp}
    \end{align*}
    The last equality is again due to the fact that $\sum_{k=1}^K \omega^{(k)}=0$.
    The ``approximation to identity property" is applied again to show that 
    $$
    \Var{(\Rom{2}) \mid Y} \stackrel{L_1}{\to}  \Var{\frac{2}{K} \sum_{k=1}^K {\omega^{(k)}}\tran \nabla g(Y) \omega^{(k)}\mid Y }.
    $$
    The right-hand-side in the last display is the variance of a quadratic form of the Gaussian vector $(\omega^{(1)}, \ldots,\omega^{(K)})$, which has a closed form as given in the statement of the Theorem. 

    Lastly, $\cov[{(\Rom{1}), (\Rom{2})\mid Y}]\stackrel{L_1}{\to} 0$ by applying the Cauchy-Schwarz inequality:
    \begin{equation*}
        \begin{aligned}
        \EE{\cov[{(\Rom{1}), (\Rom{2})\mid Y}]} &\leq \EE{\sqrt{\Var{(\Rom{1}) \mid Y} \Var{(\Rom{2}) \mid Y}}}\\
        &\leq \sqrt{\EE{\Var{(\Rom{1}) \mid Y}}}\sqrt{\EE{\Var{(\Rom{2}) \mid Y}}}.
        \end{aligned}
    \end{equation*} 
\end{proof}

To complete the analysis of variance of our estimator, we provide the limit of the irreducible variance.
\begin{theorem}[Irreducible variance]\label{thm: irreducible variance}
    Under the same assumptions as in Theorem~\ref{thm: reducible variance}, we have that
    \begin{align*}
        \lim_{\alpha\to0}\Var{\EE{\cv_\alpha \mid Y }} = \Var{\|Y - g(Y)\|_2^2 + 2\sigma^2 \tr(\nabla g(Y)) }.
    \end{align*}
\end{theorem}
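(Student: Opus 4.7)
The plan is to compute $\EE{\cv_\alpha\mid Y}$ in closed form and show it converges in $L_2(P_Y)$ to $h_0(Y):=\|Y-g(Y)\|_2^2+2\sigma^2\,\tr(\nabla g(Y))$ as $\alpha\to 0$, which immediately gives the claimed convergence of variance.

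First I would derive an explicit formula for $\EE{\cv_\alpha\mid Y}$. Since the randomization variables $\omega^{(k)}$ are exchangeable with common marginal $\N(0,\sigma^2 I_n)$, we have $\EE{\cv_\alpha\mid Y}=\EE{\cv_\alpha^{(1)}\mid Y}$. Expanding the squared norm in $\cv_\alpha^{(1)}$, using $\EE{\omega^{(1)}\mid Y}=0$, and applying Stein's identity to the cross term $\frac{2}{\sqrt\alpha}\omega^{(1)\top}g(Y+\sqrt\alpha\,\omega^{(1)})$---valid under Assumption~\ref{assump: weakly differentiable}---gives
\begin{align*}
\EE{\cv_\alpha\mid Y}=\|Y\|_2^2-2Y^\top(g*\varphi_{\alpha\sigma^2})(Y)+(\|g\|_2^2*\varphi_{\alpha\sigma^2})(Y)+2\sigma^2(\tr(\nabla g)*\varphi_{\alpha\sigma^2})(Y).
\end{align*}

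Next I would identify the limit. Applying Lemma~\ref{lem: approximation to identity} to each of $g_i$, $\|g\|_2^2$, and $\tr(\nabla g)$---all $P_Y$-integrable under $\EE{\|g(Y)\|_2^4}<\infty$ and $\EE{\|\nabla g(Y)\|_F^2}<\infty$, the latter implying integrability of $\tr(\nabla g)$ via $\tr(\nabla g)^2\le n\|\nabla g\|_F^2$---shows that each convolution converges in $L_1(P_Y)$ to the original function, hence $\EE{\cv_\alpha\mid Y}\to h_0(Y)$ in $L_1(P_Y)$. To upgrade to $L_2(P_Y)$ convergence, I use Jensen's inequality $((f*\varphi_{\alpha\sigma^2})(Y))^2\le(f^2*\varphi_{\alpha\sigma^2})(Y)$, whose expectation tends to $\EE{f(Y)^2}$ by Lemma~\ref{lem: approximation to identity} applied to $f^2$; combining this uniform second-moment bound with a.s.\ pointwise convergence of the convolution (valid because $P_Y$ has a density with respect to Lebesgue measure) and Fatou's lemma yields $L_2(P_Y)$ convergence for each of the three convolution terms. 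For the cross term $Y^\top(g*\varphi_{\alpha\sigma^2})(Y)$, Cauchy--Schwarz paired with $\EE{\|g(Y)\|_2^4}<\infty$ and the finite Gaussian moments of $Y$ delivers the same $L_2$ convergence.

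Once $\EE{\cv_\alpha\mid Y}\to h_0(Y)$ in $L_2(P_Y)$, both first and second moments converge, so the variance converges as claimed. The main obstacle is precisely this $L_2$ upgrade for the cross term $Y^\top(g*\varphi_{\alpha\sigma^2})(Y)$: here the unbounded Gaussian factor $Y$ is multiplied by a convolution, and the fourth-moment hypothesis $\EE{\|g(Y)\|_2^4}<\infty$ is what makes Cauchy--Schwarz applicable to produce the required uniform domination in $\alpha$.
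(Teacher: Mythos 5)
Your proposal is correct and follows essentially the same route as the paper: expand $\EE{\cv_\alpha\mid Y}$ into the same four terms and show each converges in $L_2(P_Y)$ to its $\alpha=0$ limit via the approximation-to-identity lemma, with the $L_2$ upgrade obtained exactly as the paper does (Jensen on the convolution plus convergence of second moments). The only cosmetic difference is that you handle the cross term $\frac{2}{\sqrt\alpha}\EE{\omega\tran g(Y+\sqrt\alpha\omega)\mid Y}$ by Gaussian integration by parts (the identity of Proposition~\ref{prop: SURE}), whereas the paper uses the weak-derivative line-integral representation together with Lemma~\ref{lem: L1 t}; both yield $2\sigma^2(\tr(\nabla g)*\varphi_{\alpha\sigma^2})(Y)$ and the same limit.
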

The proof is provided in Appendix~\ref{prf: irreducible}.

Combining the bias--variance results in Theorem \ref{thm: bias}, \ref{thm: reducible variance} and \ref{thm: irreducible variance}, we find that, as $\alpha\to0$,
\begin{align*}
\text{MSE}(\cv_{\alpha}) \to \Var{\|Y - g(Y)\|_2^2 + 2\sigma^2 \tr(\nabla g(Y)) } + \frac{4\sigma^4}{K-1}\EE{\|\nabla g(Y) \|_F^2 + \tr(\nabla g(Y)^2 )}.
\end{align*}

The following corollary highlights the infinite efficiency gain of our cross-validated estimator with antithetic randomization relative to the CB estimator with independent randomization: by selecting a small $\alpha$, we can make the bias arbitrarily small while ensuring that the variance of our estimator remains stable (i.e., the variance does not blow up).

\begin{corollary}
\label{cor:dominate CB}
Under the same assumptions as in Theorem~\ref{thm: reducible variance}, for any finite $K>1$, we have that
\begin{align*}
\lim_{\alpha \to 0} \left\{\mathrm{MSE}(\cv_{\alpha}) - \mathrm{MSE}(\mathrm{CB}_{\alpha})\right\} = -\infty. 
\end{align*}
\end{corollary}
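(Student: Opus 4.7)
The strategy is to combine the three theorems of this section to show that $\mathrm{MSE}(\cv_\alpha)$ converges to a finite constant as $\alpha\to 0$, and then to analyze $\mathrm{MSE}(\mathrm{CB}_\alpha)$ directly to show that it diverges to $+\infty$; the difference must then diverge to $-\infty$. The $\cv_\alpha$ side is immediate: plugging Theorem~\ref{thm: bias}, Theorem~\ref{thm: reducible variance}, and Theorem~\ref{thm: irreducible variance} into the bias-variance decomposition \eqref{equ: MSE decomposition} yields
\begin{equation*}
\lim_{\alpha\to 0}\mathrm{MSE}(\cv_\alpha) \;=\; \Var{\|Y - g(Y)\|_2^2 + 2\sigma^2\tr(\nabla g(Y))} + \frac{4\sigma^2}{K-1}\EE{\|\nabla g(Y) \|_F^2 + \tr(\nabla g(Y)^2)},
\end{equation*}
which is a finite constant for every fixed $K>1$.

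For $\mathrm{CB}_\alpha$, since the randomization variables $\tilde\om^{(k)}$ are mutually independent, $\Var{\mathrm{CB}_\alpha} = K^{-1}\Var{\mathrm{CB}_\alpha^{(1)}}$, and it suffices to show that this single summand blows up. Expanding the squared loss using $\tilde Y^{(1)}_{\text{test}} = Y - \tfrac{1}{\sqrt\alpha}\tilde\om^{(1)}$ gives
\begin{equation*}
\mathrm{CB}_\alpha^{(1)} \;=\; \|Y - g(\tilde Y^{(1)}_{\text{train}})\|_2^2 \;-\; \frac{2}{\sqrt\alpha}\langle Y - g(\tilde Y^{(1)}_{\text{train}}), \tilde\om^{(1)}\rangle.
\end{equation*}
Unlike $\cv_\alpha$, the CB estimator enjoys no zero-sum cancellation of the $1/\sqrt\alpha$ factor, because $\tilde\om^{(1)}$ is unconstrained. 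Using weak differentiability to write $g(\tilde Y^{(1)}_{\text{train}}) = g(Y) + \sqrt\alpha\int_0^1 \nabla g(Y + t\sqrt\alpha\tilde\om^{(1)})\tilde\om^{(1)}\rd t$, I would decompose $\mathrm{CB}_\alpha^{(1)} = A_\alpha + R_\alpha$, where
\begin{equation*}
A_\alpha \;=\; -\frac{2}{\sqrt\alpha}\langle Y - g(Y), \tilde\om^{(1)}\rangle
\end{equation*}
carries the divergence and $R_\alpha$ collects $L_2$-bounded remainders. Conditioning on $Y$ and using the independence $\tilde\om^{(1)}\indep Y$, a direct computation gives $\Var{A_\alpha} = \frac{4\sigma^2}{\alpha}\EE{\|Y - g(Y)\|_2^2}$; the triangle inequality in $L_2$ then gives $\sqrt{\Var{\mathrm{CB}_\alpha^{(1)}}} \geq \sqrt{\Var{A_\alpha}} - \sqrt{\Var{R_\alpha}}$, so $\Var{\mathrm{CB}_\alpha^{(1)}} \to \infty$ as $\alpha\to0$, provided $\EE{\|Y-g(Y)\|_2^2} > 0$---a mild non-degeneracy that holds for any non-trivial prediction problem.

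Combining the two pieces proves the claim, since $\mathrm{MSE}(\mathrm{CB}_\alpha) \geq \Var{\mathrm{CB}_\alpha}$ while $\mathrm{MSE}(\cv_\alpha)$ stays bounded. The main obstacle is verifying that $R_\alpha$ is genuinely $O(1)$ in $L_2$ so that it cannot cancel the $1/\sqrt\alpha$ divergence of $A_\alpha$. This reduces to applying Lemma~\ref{lem: approximation to identity} to the integrated-Jacobian term $\int_0^1 \nabla g(Y + t\sqrt\alpha\tilde\om^{(1)})\rd t$ and to $\|Y - g(\tilde Y^{(1)}_{\text{train}})\|_2^2$, using the finite-moment hypotheses $\EE{\|g(Y)\|_2^4}<\infty$ and $\EE{\|\nabla g(Y)\|_F^2}<\infty$ already in force from Theorem~\ref{thm: reducible variance}.
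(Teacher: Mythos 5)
Your proposal is correct, and the half concerning $\cv_\alpha$ is exactly the paper's route: the corollary is obtained by combining Theorems~\ref{thm: bias}, \ref{thm: reducible variance}, and \ref{thm: irreducible variance} through the decomposition \eqref{equ: MSE decomposition} to conclude that $\mathrm{MSE}(\cv_\alpha)$ tends to a finite limit. Where you diverge from the paper is on the $\CB_\alpha$ side: the paper simply cites \cite{oliveira2021unbiased} for the fact that $\Var{\CB_\alpha}$ is of order $O((K\alpha)^{-1})$ and treats the corollary as immediate, whereas you rederive the divergence from scratch by expanding $\CB_\alpha^{(1)}$, isolating the term $A_\alpha = -\tfrac{2}{\sqrt\alpha}\langle Y-g(Y),\tilde\om^{(1)}\rangle$ with $\Var{A_\alpha}=\tfrac{4\sigma^2}{\alpha}\EE{\|Y-g(Y)\|_2^2}$, and controlling the remainder in $L_2$ via the same approximation-to-the-identity lemmas used for Theorem~\ref{thm: reducible variance}. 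Your self-contained argument buys two things the citation does not: it exhibits the exact leading constant of the blow-up, and it surfaces the non-degeneracy condition $\EE{\|Y-g(Y)\|_2^2}>0$, which is genuinely needed (for the identity map $g(Y)=Y$ one computes $\CB_\alpha^{(1)}=(2+\alpha)\|\tilde\om^{(1)}\|_2^2$, whose variance stays bounded, so the corollary fails without it); the paper leaves this assumption implicit. The only step worth stating more carefully is the reverse triangle inequality $\sqrt{\Var{X+R}}\ge\sqrt{\Var{X}}-\sqrt{\Var{R}}$, which holds because $\sqrt{\Var{\cdot}}$ is the $L_2$ norm of the centered variable, and the verification that $\sup_{\alpha\le\ep_0^2}\EE{R_\alpha^2}<\infty$, which follows from the uniform domination built into Lemma~\ref{lem: log p condition} rather than from the limit statements alone.
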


Finally, we comment on the potential of our randomization scheme for a larger class of prediction functions than those analyzed here, leaving a precise characterization of its variance beyond weakly differentiable functions to future work.
\begin{remark}
In Appendix~\ref{sec: indicator}, we compute the variance of a hard-thresholded function with our antithetic randomization scheme. This estimator is discontinuous and therefore does not belong to the class of weakly differentiable functions. In particular, we show that for any fixed $K$, the reducible variance of our cross-validated estimator is of order $O\bigl(\frac{1}{K\sqrt\alpha})$, while that of coupled bootstrap estimator is $O(\frac{1}{K\alpha})$. 
As a result, the antithetic CV estimator continues to yield a substantial variance reduction relative to the CB estimator, with Corollary \ref{cor:dominate CB} still holding up for the hard-thresholded function. 
This theoretical finding is further confirmed by the accompanying simulations included in the same section.
\label{rem:genfunctions}
\end{remark}

\section{Generalization of our method with randomized sufficient statistics}
\label{sec:glm}

In this section, we move beyond the normal means problem and develop our cross-validation method to address the framework described in Section~\ref{sec:2} in full generality. This generalization accommodates different data types and loss functions, provided that the estimator derived from the loss function depends on a sufficient statistic that is asymptotically normal.

Fixing some notations for this section, suppose the data $Y=Y_n$ is generated from an exponential family with density function:
\begin{equation*}
    p_n(Y_n \mid \theta_n) =  \exp\left\{\sqrt{n}(\theta_n\tran S_n(Y_n) - A_n(\theta_n))\right\}\cdot h_n(Y_n),
    \label{gen:density}
\end{equation*}
where $\theta_n$ is the $p$-dimensional natural parameter. 
Throughout, we use subscripts $n$ on the data, estimand, and estimator to emphasize their dependence on the sample size.

Note, in the above formulation, the sufficient statistic $S_n(Y_n)=S_n$ and the log-partition function $A_n(\theta_n)$ are both scaled by $1/\sqrt n$.
To estimate the unknown parameter $\theta_n$, it is common to use a loss function derived from the negative log-likelihood, given by
\begin{equation}
    \calL(\theta_n, Y_n)=  A_n(\theta_n)-\theta_n\tran S_n(Y_n) - \frac{1}{\sqrt n}\log h_n(Y_n) ,
\label{gen:loss}
\end{equation}
which depends on $Y_n$ only through the sufficient statistic $S_n$.

As discussed earlier, we assume the sufficient statistic is approximately normal.
Formally, we posit the existence of a sequence of $p \times p$ positive definite matrices $H_n$ and vectors $\mu_n \in \mathbb{R}^p$ such that
\begin{equation}
    H_n^{-1/2}(S_n-\mu_n) \stackrel{d}{\Rightarrow} \N(0, I_p).
    \label{asymptotic:normal:stats}
\end{equation}

This problem formulation accommodates commonly used loss functions, including those employed in fitting generalized linear models (GLMs). Moreover, in GLMs, the asymptotic normality of $S_n$ holds under standard regularity conditions, as established in \cite{fahrmeir1985consistency}.
In what follows, we first present our method for creating train--test pairs by randomizing the sufficient statistic $S_n$, and then analyze the bias and variance of the resulting cross-validated estimator in the asymptotic regime as $n \to \infty$.

\begin{remark}
The exponential family framework described here provides concrete examples of loss functions that depend on approximately normal sufficient statistics, though our method would also apply to other loss functions with such statistics.
\end{remark}

\subsection{Cross-validated estimator}

Let $g(S_n)$ be our estimator that depends on the sufficient statistic $S_n$.
As before, we define the prediction error as
\begin{align*}
    \mathrm{PE}_n(g)=\EE{\calL(g(S_n), \tilde Y_n ) }= \EE{A_n(g(S_n)) - g(S_n)\tran \tilde{S}_n - n^{-1/2}\log h_n(\tY_n)},
\end{align*}
where $\tilde Y_n$ is an independent copy of $Y$, and $\tilde{S}_n= S_n(\tilde{Y}_n)$ is the sufficient statistic of $\tilde{Y}_n$.
Compared with \eqref{pred:error}, where the prediction performance is assessed with respect to the quadratic loss, the loss function considered here is based on the negative log-likelihood of the data.

We define a rescaled version of our sufficient statistics as:
\[
T_n = H_n^{-1/2} S_n, \quad \tilde T_n=H_n^{-1/2} \tilde{S}_n.
\]
By Equation~\eqref{asymptotic:normal:stats},
the asymptotic distributions of $T_n-H_n^{-1/2}\mu_n$ and $\tilde T_n-H_n^{-1/2}\mu_n$ are $\N(0, I_p)$.
Furthermore, let
$
\mathfrak{g}_n(T_n)= (H_n^{1/2})\tran g(H_n^{1/2} T_n)$, $\mathfrak{A}_n(T_n)= A_n(g(H_n^{1/2}T_n))
$.
As per these notations, we have that
\[
A_n(g(S_n))=\mathfrak{A}_n(T_n),\quad g(S_n)\tran \tilde S_n=\mathfrak g(T_n)\tran \tilde T_n.
\]
Using these notations, we can rewrite the prediction error as
\begin{equation}
    \mathrm{PE}_n(g)=\EE{\mathfrak{A}_n(T_n) - \mathfrak{g}_n(T_n) \tran \tilde T_n} -\EE{n^{-1/2}\log h_n(\tilde Y_n)}.
\label{PE:general}
\end{equation}
Note that the second expectation in our estimand, $\EE{n^{-1/2}\log h_n(\tilde Y_n)}$, can be easily estimated by $n^{-1/2}\log h_n(Y_n)$. 
The first expectation requires an appropriate estimator. However, it is taken over $T_n$ and $\tilde T_n$, which are asymptotically normal vectors with identity covariance.
Consequently, the problem reduces to a form analogous to the normal means problem discussed earlier, except that $T_n$ is not exactly normal but asymptotically normal. 

To proceed, we apply the same idea as before, constructing the train--test pair of randomized data by adding noise to the rescaled sufficient statistic, as
\begin{align*}
    \left(T_n + \sqrt\alpha\omega, \quad T_n-\frac{1}{\sqrt\alpha} \omega\right), \quad \text{where } \omega\sim \N(0, I_p),
\end{align*}
for $\alpha \in \mathbb{R}^+$.
Clearly, the train--test data vectors are asymptotically independent. 
We fit our estimator on $T_n+\sqrt\alpha\omega $ and evaluate its predictive performance on $T_n-\frac{1}{\sqrt\alpha}\omega$, which yields the following estimate of $\PE_n(g)$:
\begin{align*}
    \frakA_n(T_n+\sqrt\alpha\omega) - \frakg_n(T_n + \sqrt\alpha\omega )\tran (T_n - \frac{1}{\sqrt\alpha}\omega) - n^{-1/2}\log h_n(Y_n).
\end{align*}
We repeat this procedure $K>1$ times, using randomization variables $\omega^{(1)}, \ldots, \omega^{(K)}$ generated via the antithetic scheme in \eqref{antithetic:rand}.
Averaging over the $K$ replicates, we obtain our cross-validated estimator
\begin{equation}
\begin{aligned}
    \cv_{n,\alpha}=\frac1{K}\sum_{k=1}^K&\Big\{\mathfrak A_n( T_n+\sqrt\alpha\omega^{(k)}) - \mathfrak g_n(T_n + \sqrt\alpha\omega^{(k)} )\tran (T_n - \frac{1}{\sqrt\alpha}\omega^{(k)})  \Big\}\\
    &\quad - n^{-1/2} \log h_n(Y_n).
\end{aligned}    
\label{CV:general}
\end{equation}

Before analyzing the bias--variance properties of our estimator in \eqref{CV:general}, we first make a few remarks.

\begin{remark}
First, we observe that it is possible to work directly with the sufficient statistics $S_n$ without rescaling them to $T_n$.
In this case, the introduced randomization variables must have a marginal covariance matrix equal to $H_n$, while preserving the same antithetic correlation structure used throughout. We apply rescaling  here only to simplify the presentation of subsequent theory.
\end{remark}
\begin{remark}
Second, we note that our approach of randomizing the sufficient statistic $S_n$ within the loss function bears close resemblance to many existing approaches in the selective inference literature. The idea of splitting sufficient statistics using external randomization, which the authors refer to as thinning, was described in \cite{dharamshi2025generalized}. Instead of assuming parametric distributions for the sufficient statistics, our method exploits their asymptotic normality, without directly imposing parametric assumptions on the data. 
For example, using this idea, we provide a convenient way to construct train--test pairs by randomizing the asymptotically normal sufficient statistic in examples such as logistic regression.
\end{remark}
\begin{remark}
At last, we observe that our scheme of randomizing the sufficient statistic is equivalent to adding a linear term both in the parameter and the randomization variable to the original loss function. This type of randomization has been used to perform selective inference after penalized regression, such as the lasso or group lasso in \cite{PanigrahiTaylor2022, panigrahi2024exact, panigrahi2023approximate}. \cite{liu2025selective} employed the same antithetic approach as this work to conduct selective inference on distributed data for GLMs, framing the selection problem as an asymptotically equivalent version of the lasso problem with antithetic Gaussian randomization.
\end{remark}

As we demonstrate next, our estimator exhibits bias--variance properties similar to those in the normal means problem: the asymptotic bias of our estimator vanishes as $\alpha \to 0$, while its variance remains bounded.

\subsection{Mean squared error analysis}

To conduct the mean squared error analysis of our cross-validated estimator $\cv_{n,\alpha}$, we require some additional assumptions on the sufficient statistic $T_n$.

For a weakly differentiable $\mathbb{R}^p$-valued function $g$ and a $p$-dimensional vector $\mu$, define Stein's operator as
\begin{align*}
(\calT_{\mu} g)(x)=\langle g(x),\mu-x \rangle + \nabla\cdot g(x).
\end{align*}
For a normal random variable $X\sim \mathcal{N}(\mu, I_p)$, it follows that $\EE{(\calT_\mu g)(X) }=0$, which recovers Stein's identity.

Let $\mathbb{Q}_n$ represent the distribution of the rescaled sufficient statistics, $T_n$, with density $q_n$ and expectation $m_n= H_n^{-1/2}\mu_n$.

\begin{assumption}\label{assump: stein discrepancy}
Assume that
    \begin{align*}
        \lim_{n\to\infty}\EE{(\calT_{m_n} g_n) (T_n) } = 0
    \end{align*}
where
$
(\calT_{m_n} g)(x)= \langle g(x), m_n-x\rangle + \nabla\cdot g(x).
$
\end{assumption}

Under a distribution $\mathbb{Q}_n$ that is not normal, note that $\EE{(\calT_{m_n} g_n) (T_n) }$ is no longer exactly zero.
This quantity, known as Stein's measure of non-normality, forms the basis for the notion of Stein's discrepancy; see, for example, the paper by \cite{gorham2015measuring}.
Assumption \ref{assump: stein discrepancy} requires that the sufficient statistics exhibit vanishingly small Stein's discrepancy as $n$ goes to infinity.
For example, given that the sufficient statistics are asymptotically normal, this condition holds if $\|T_n\|_q^q$ is also uniformly integrable, and both functions $\langle g(x), x\rangle$, $\nabla\cdot g(x)$ grow slower than $\|x\|_q^q$ for some $q>0$.

\begin{assumption}\label{assump: log density q_n}
    Assume that there exist constants $N_0>0$ and $C>0$ such that, for all $n\geq N_0$, the density $q_n$ of $T_n$ satisfies
    $|\log q_n(x) -\log q_n(x')| \leq C \|x-x'\|_2^2.$
\end{assumption}

The condition in Assumption \ref{assump: log density q_n} is automatically satisfied if the density of the sufficient statistics converges to a normal density.

Now we are ready to show that the bias and variance results established in Section~\ref{sec: theory} for exactly normal data carry over to our estimator based on asymptotically normal sufficient statistics.

\begin{theorem}[Bias]\label{thm: glm bias}
    Let Assumptions~\ref{assump: weakly differentiable}, \ref{assump: stein discrepancy}, and \ref{assump: log density q_n} hold. 
    In addition, assume that \sloppy{$\EE{|\frakA_n(T_n)|}<\infty$}, $\EE{\|\frakg_n(T_n)\|_2^2}<\infty$, and $\EE{\|\nabla\frakg_n(T_n)\|_F}<\infty$. 
    Then
    \begin{align*}
        \lim_{n\to\infty} \lim_{\alpha\to0} \Big|\EE{\cv_{n,\alpha}} - \PE_n(g)\Big| = 0.
    \end{align*}
\end{theorem}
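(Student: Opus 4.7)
\noindent\textbf{Proof plan for Theorem~\ref{thm: glm bias}.} The plan is to mirror the bias argument of Theorem~\ref{thm: bias}, with one extra twist: the non-Gaussianity of the rescaled sufficient statistic $T_n$ produces a residual Stein functional, which is then killed as $n\to\infty$ by Assumption~\ref{assump: stein discrepancy}. The strategy is threefold: reduce $\EE{\cv_{n,\alpha}}$ via symmetry and Stein's identity for the Gaussian randomization $\omega$; send $\alpha\to 0$ using a log-density variant of approximation-to-identity; and recognize the resulting remainder as $\EE{(\calT_{m_n}\frakg_n)(T_n)}$.

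First, by symmetry across the $K$ antithetic copies, $\EE{\cv_{n,\alpha}} = \EE{\cv_{n,\alpha}^{(1)}}$. Writing $\omega:=\omega^{(1)}\sim\N(0,I_p)$ independent of $Y_n$ and expanding,
\begin{align*}
\EE{\cv_{n,\alpha}} &= \EE{\frakA_n(T_n+\sqrt\alpha\omega)} - \EE{\frakg_n(T_n+\sqrt\alpha\omega)\tran T_n} \\
&\quad + \tfrac{1}{\sqrt\alpha}\EE{\omega\tran\frakg_n(T_n+\sqrt\alpha\omega)} - \EE{n^{-1/2}\log h_n(Y_n)}.
\end{align*}
Conditioning on $T_n$ and applying Stein's identity for the Gaussian vector $\omega$ to the weakly differentiable map $\omega\mapsto\frakg_n(T_n+\sqrt\alpha\omega)$ (Assumption~\ref{assump: weakly differentiable}) converts the offending $1/\sqrt\alpha$ term into $\EE{\nabla\cdot\frakg_n(T_n+\sqrt\alpha\omega)}$, eliminating the apparent singularity. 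The same trick, applied to $h(x):=x\tran\frakg_n(x)$ (integrable under $q_n$ by Cauchy--Schwarz from the stated moment hypotheses), lets one rewrite the cross term as $\EE{h(T_n+\sqrt\alpha\omega)} - \alpha\,\EE{\nabla\cdot\frakg_n(T_n+\sqrt\alpha\omega)}$, so that all three surviving expectations become plain Gaussian convolutions of functions of $T_n$.

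Second, I would pass $\alpha\to 0$ in each of these expectations using approximation-to-identity for the law $q_n$ of $T_n$. Lemma~\ref{lem: approximation to identity} is stated for Gaussian targets, but its appendix ingredients~\ref{lem: log p condition} and~\ref{lem: L1} require only the quadratic log-density growth bound that Assumption~\ref{assump: log density q_n} supplies for $n\ge N_0$. Together with $\EE{|\frakA_n(T_n)|}<\infty$, $\EE{\|\frakg_n(T_n)\|_2^2}<\infty$, and $\EE{|\nabla\cdot\frakg_n(T_n)|}<\infty$, these extended lemmas give $L^1(q_n)$-convergence of each convolution to its unsmoothed version at every fixed $n\ge N_0$. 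Comparing the resulting limit with~\eqref{PE:general}, using $\tilde T_n\indep T_n$ and the natural normalization $\EE{\tilde T_n}=m_n$, the bias collapses to the Stein functional
\begin{align*}
\lim_{\alpha\to 0}\bigl(\EE{\cv_{n,\alpha}} - \PE_n(g)\bigr) = \EE{\langle\frakg_n(T_n),\, m_n - T_n\rangle + \nabla\cdot\frakg_n(T_n)} = \EE{(\calT_{m_n}\frakg_n)(T_n)},
\end{align*}
which tends to $0$ by Assumption~\ref{assump: stein discrepancy}.

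The main obstacle -- and the novelty over Theorem~\ref{thm: bias} -- is the second step: extending approximation-to-identity from the Gaussian setting to the law of $T_n$. Assumption~\ref{assump: log density q_n} is engineered exactly to feed into the appendix lemmas, and the nested order of limits in the statement (inner $\alpha\to 0$ at fixed $n\ge N_0$, then outer $n\to\infty$) is precisely what this proof flow requires, since the log-density bound is only guaranteed for sufficiently large $n$.
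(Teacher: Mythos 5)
Your proposal is correct and follows essentially the same route as the paper: decompose the bias, use the approximation-to-identity lemmas (valid for $q_n$ under Assumption~\ref{assump: log density q_n} for $n\geq N_0$) to take $\alpha\to0$ at fixed $n$, and identify the surviving remainder as the Stein functional $\EE{(\calT_{m_n}\frakg_n)(T_n)}$, which vanishes as $n\to\infty$ by Assumption~\ref{assump: stein discrepancy}. The only cosmetic difference is that you dispose of the $1/\sqrt{\alpha}$ term via Gaussian Stein's identity in $\omega$, whereas the paper uses the integral representation of weak differentiability together with $\EE{\omega\tran\frakg_n(T_n)}=0$ and Lemma~\ref{lem: L1 t} --- the same computation in a different guise.
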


\begin{theorem}[Reducible variance]\label{thm: glm var}
    Let Assumptions~\ref{assump: weakly differentiable} and \ref{assump: log density q_n} hold. 
    In addition, assume that $\EE{\frakA_n(T_n)^2}<\infty$, $\EE{\|\frakg_n(T_n)\|_2^4}<\infty$, and $\EE{\|\nabla\frakg_n(T_n)\|_F^2}<\infty$. 
    When $n\geq N_0$, we have
    \begin{align*}
        \lim_{\alpha\to0} \EE{\Var{\cv_{n,\alpha} \mid Y_n }}=\frac{1}{K-1}\EE{\|\frakg_n(T_n)\|_F^2 + \tr(\nabla\frakg_n(T_n)^2) }.
    \end{align*}    
\end{theorem}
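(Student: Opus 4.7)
The plan is to adapt the proof sketch of Theorem~\ref{thm: reducible variance} from the exactly normal setting to the asymptotically normal setting. The main new ingredient is Assumption~\ref{assump: log density q_n}, which bridges the gap between the non-Gaussian law $\bbQ_n$ of $T_n$ and the Gaussian ``approximation to identity'' property used in the normal case.

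First, I would decompose
\begin{align*}
\cv_{n,\alpha} &= \underbrace{\frac{1}{K}\sum_{k=1}^K \Bigl\{\frakA_n(T_n + \sqrt\alpha\om^{(k)}) - \frakg_n(T_n+\sqrt\alpha\om^{(k)})\tran T_n\Bigr\}}_{(\Rom{1})}\\
&\quad + \underbrace{\frac{1}{K\sqrt\alpha}\sum_{k=1}^K {\om^{(k)}}\tran \frakg_n(T_n+\sqrt\alpha\om^{(k)})}_{(\Rom{2})} - n^{-1/2}\log h_n(Y_n),
\end{align*}
noting that the last term is deterministic given $Y_n$ and contributes nothing to the conditional variance. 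Applying weak differentiability to $\frakg_n$ together with the zero-sum identity $\sum_k\om^{(k)}=0$ removes the singular $1/\sqrt\alpha$ piece of $(\Rom{2})$, yielding
\begin{align*}
(\Rom{2}) = \frac{1}{K}\sum_{k=1}^K {\om^{(k)}}\tran \left[\int_0^1 \nabla\frakg_n(T_n + t\sqrt\alpha\om^{(k)})\,\rd t\right]\om^{(k)}.
\end{align*}

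Second, I would show that $\EE{\Var{(\Rom{1})\mid Y_n}}\to 0$ and
\begin{align*}
\EE{\Var{(\Rom{2})\mid Y_n}}\longrightarrow \EE{\Var{\frac{1}{K}\sum_{k=1}^K {\om^{(k)}}\tran \nabla\frakg_n(T_n)\,\om^{(k)}\,\Big|\,Y_n}}
\end{align*}
as $\alpha\to 0$. In the normal case, both follow from expressing conditional second moments as convolutions against $\varphi_\alpha$ and invoking Lemma~\ref{lem: approximation to identity}. Here, I need an analogue of that lemma for $\bbQ_n$, which is provided by Assumption~\ref{assump: log density q_n}: the quadratic bound on $|\log q_n(x)-\log q_n(x')|$ controls the Radon--Nikodym derivative $(q_n*\varphi_\alpha)/q_n$ uniformly for small $\alpha$ and $n\geq N_0$, so the dominated-convergence argument from the appendix carries through. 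The hypotheses $\EE{\frakA_n(T_n)^2}<\infty$, $\EE{\|\frakg_n(T_n)\|_2^4}<\infty$, and $\EE{\|\nabla\frakg_n(T_n)\|_F^2}<\infty$ supply the envelopes required. A Cauchy--Schwarz bound then forces $\Cov{(\Rom{1}),(\Rom{2})\mid Y_n}\to 0$ in $L_1$, since $\Var{(\Rom{1})\mid Y_n}\to 0$ while $\Var{(\Rom{2})\mid Y_n}$ remains bounded.

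Finally, a direct Isserlis/Wick calculation with $\om^{(k)}\sim\N(0,I_p)$ and antithetic cross-covariance $-I_p/(K-1)$ yields, for any fixed matrix $M$, the identity $\Var{K^{-1}\sum_k {\om^{(k)}}\tran M\om^{(k)}} = (K-1)^{-1}(\|M\|_F^2 + \tr(M^2))$; setting $M = \nabla\frakg_n(T_n)$ and integrating over $Y_n$ gives the claimed limit. The main obstacle throughout is the second step: transferring the approximation-to-identity from the Gaussian law, where it is essentially trivial because a Gaussian convolved with a Gaussian is again Gaussian, to the non-Gaussian law $\bbQ_n$. The quadratic log-density condition is tailor-made for this; it gives the pointwise control needed to bound the convolved density against $q_n$ and justify dominated convergence uniformly in $n\geq N_0$. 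Once this technical extension is in hand, the remaining steps mirror the normal-case proof of Theorem~\ref{thm: reducible variance}.
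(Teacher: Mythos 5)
Your proposal is correct and follows essentially the same route as the paper's proof: the same decomposition into $(\Rom{1})$ and $(\Rom{2})$, the use of weak differentiability plus the zero-sum constraint to kill the $1/\sqrt\alpha$ term, Assumption~\ref{assump: log density q_n} to transfer the approximation-to-identity lemmas to $q_n$, and the quadratic-form variance identity for the antithetic Gaussians. Your explicit Cauchy--Schwarz handling of the cross term and your final formula with $\|\nabla\frakg_n(T_n)\|_F^2$ (consistent with Theorem~\ref{thm: reducible variance}) are both fine.
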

The proofs are provided in Appendix~\ref{prf: glm bias} and \ref{prf: glm var}.

As a concrete example within our framework, we consider logistic regression, a widely used method for classification.

\begin{example}[Logistic regression] 
  In logistic regression, the negative log-likelihood is given by 
  $\sum_{i=1}^n - y_i x_i\tran \theta + \log(1+e^{x_i\tran\theta}),$
  \sloppy{where the data follows the model $y_i\sim \mathrm{Bernoulli}(\pi(x_i\tran \theta_n) )$ and $\pi(x)=\frac{1}{1+\exp(-x)}$ is the sigmoid function.}
  To fit our setup, we scale the log-likelihood by $1/\sqrt{n}$, which yields the loss function
  $\calL(\theta_n, Y_n)= -\theta_n\tran S_n + A_n(\theta_n)$,
  where 
  $S_n=\frac{1}{\sqrt n} X_n\tran Y_n$, $A_n(\theta_n)= \frac{1}{\sqrt{n}}\sum_{i=1}^n\log(1+e^{x_i^\top \theta_n})$.

  As noted earlier, under the mild regularity conditions in \cite{fahrmeir1985consistency}, $S_n$ is asymptotically normal, with covariance matrix $H_n=\dfrac{1}{n}X_n\tran W_n X_n$, where $W_n$ is a diagonal matrix with entries $\pi(x_i\tran \theta_n)(1-\pi(x_i\tran \theta_n))$, for $i\in [n]$. Even though this matrix depends on the unknown parameter, our method performs well as long as this matrix can be accurately estimated, as demonstrated by our simulations in Section \ref{sec: experiments}.

  We note that our approach does not apply to arbitrary binary classification problem, but only to those in which the prediction error can be viewed as depending on an approximately normal sufficient statistic.
\end{example}

\section{Numerical experiments}
\label{sec: experiments}

In this section, we evaluate our method empirically on simulated data and compare our method with competing estimators for estimating prediction errors. Additional numerical results can be found in Appendix~\ref{sec: lasso}. The code to reproduce the experiments is available at \sloppy{\href{https://github.com/liusf15/Antithetic-CV}{github.com/liusf15/Antithetic-CV}. }

\subsection{Effects of $\alpha$ and $K$}

We start by examining the effects of $\alpha$ and $K$, the two parameters that control the bias and variance of the proposed cross-validation estimator, respectively.

We consider an isotonic regression problem with $n=100$ observations, in which the one-dimensional predictors are drawn independently from $\textnormal{Uniform}(0,1)$ and are treated as fixed throughout this experiment. 
The responses are then generated independently as $Y_i \sim \N(f^*(X_i), \sigma^2)$, where $f^*(x) = 2\cdot\lceil 5x\rceil-6$ and $\sigma^2 = 1$.
Visualizations of $f^*$ together with the simulated data points are shown in Figure~\ref{fig:true-monotone-function}.
We apply isotonic regression to estimate the true prediction function $f^*(x)$.
A snapshot of the same example was presented in the introduction to showcase comparisons between our method and the standard CV method.

\begin{figure}[t]
    \centering
    \includegraphics[width=.4\textwidth]{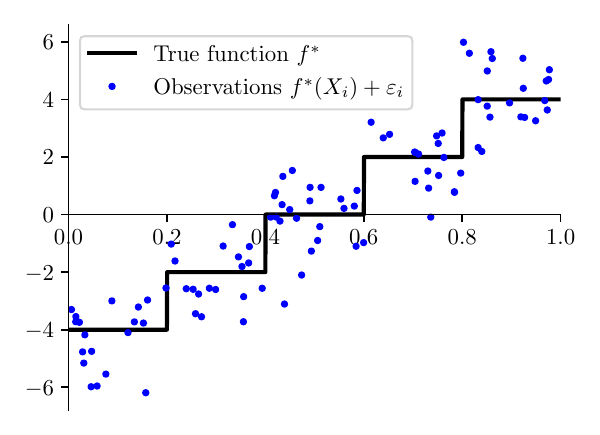}
    \caption{In the isotonic regression simulations, the data are generated based on the function $f^*$, shown in solid line. An example of a simulated dataset is displayed as scatter points.}
    \label{fig:true-monotone-function}
\end{figure}

We compare the MSE of the proposed ``Antithetic CV" method and the ``Coupled Bootstrap" (CB) method across various values of $\alpha$ and $K$. The error bars represent variation over 10000 independent simulations. In the left panel of Figure~\ref{fig: isotonic alpha K}, $\alpha$ is fixed at 0.05, and $K$ is varied from 2 to 32. In the right panel, $K$ is fixed at 16, and $\alpha$ is varied from 0.005 to 0.5. In all comparisons, standard CV uses the same value of $K$ for train--test sample splitting, as the other two methods. We include standard CV in our comparisons to illustrate how its estimation quality for the prediction error compares to that of the other two estimators on the same problem. We include SURE as a benchmark, which is applicable here since the isotonic regression estimator is piecewise linear.

From Figure~\ref{fig: isotonic alpha K}, we make a few key observations: 
\begin{enumerate}
    \item Our antithetic CV method consistently outperforms standard $K$-fold CV and the coupled bootstrap method across all configurations of $(\alpha, K)$. It achieves a similar MSE as SURE when $K=16$ and $\alpha\leq 0.1$, without explicitly using the divergence of the isotonic regression estimator.
    
    \item In the left panel of Figure~\ref{fig: isotonic alpha K}, the MSE decreases for all three methods as the number of train--test repetitions $K$ increases. For the coupled bootstrap and antithetic CV methods, this reduction can be directly attributed to the decrease in reducible variance with increasing $K$.
    
    \item In the right panel of Figure~\ref{fig: isotonic alpha K}, we observe that the MSE of CB increases as $\alpha$ decreases. This is because the MSE of CB is dominated by the variance when $\alpha$ is small. As a result, even though the bias of CB decreases with smaller $\alpha$, the increase in variance outweighs this benefit, leading to a higher overall MSE. In contrast, for our antithetic CV method, we note that the MSE either decreases or remains nearly unchanged as $\alpha$ becomes smaller. This is expected as our approach achieves bias reduction without paying a price in terms of the variance.
    
\end{enumerate}

Based on consistent evidence from our simulations, we summarize the following takeaways for the choice of $\alpha$ and $K$ to guide practical use.
\begin{remark}\emph{Choice of $\alpha$.}
When the estimator is differentiable, there is no price to choosing a very small $\alpha$, which is also supported by our theory. We also point out that our method, for positive $\alpha$, does not require explicitly differentiating the estimator $g$, unlike SURE. 
In our experiments, we found $\alpha=0.1$ performed essentially as well as smaller values.
\end{remark}

\begin{remark}\emph{Choice of $K$.}
For the choice of $K$, there is a straightforward computational--statistical tradeoff: from a statistical perspective, larger values of $K$ are preferred, whereas computational costs scale linearly with $K$. In our experiments, we found that $K$ around $10$ yields results comparable to those obtained with much larger values of $K$.
\end{remark}

\begin{figure}
  \centering
  \begin{subfigure}{.45\textwidth}
      \includegraphics[width=\textwidth]{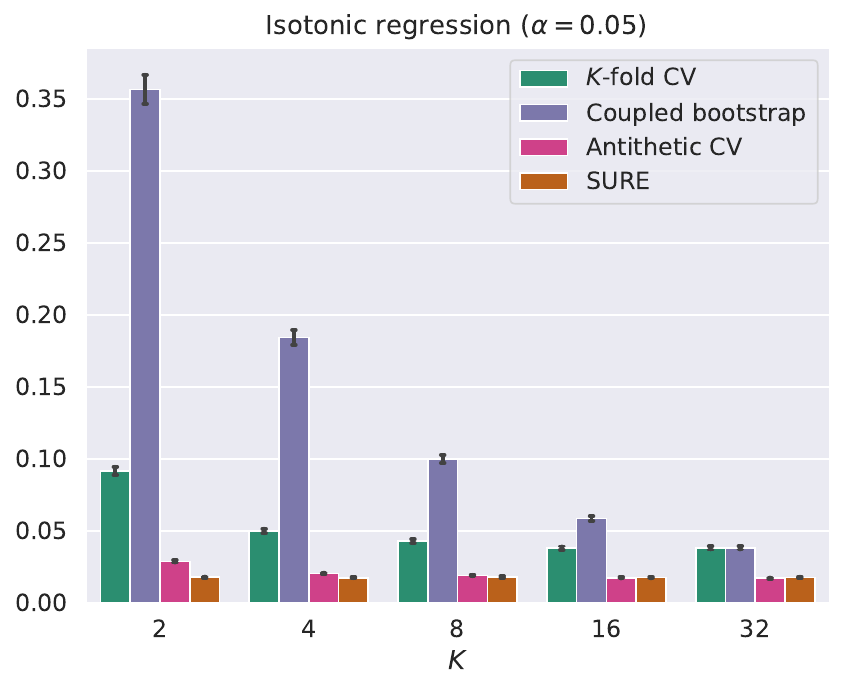}
  \end{subfigure}
  \begin{subfigure}{.45\textwidth}
      \includegraphics[width=\textwidth]{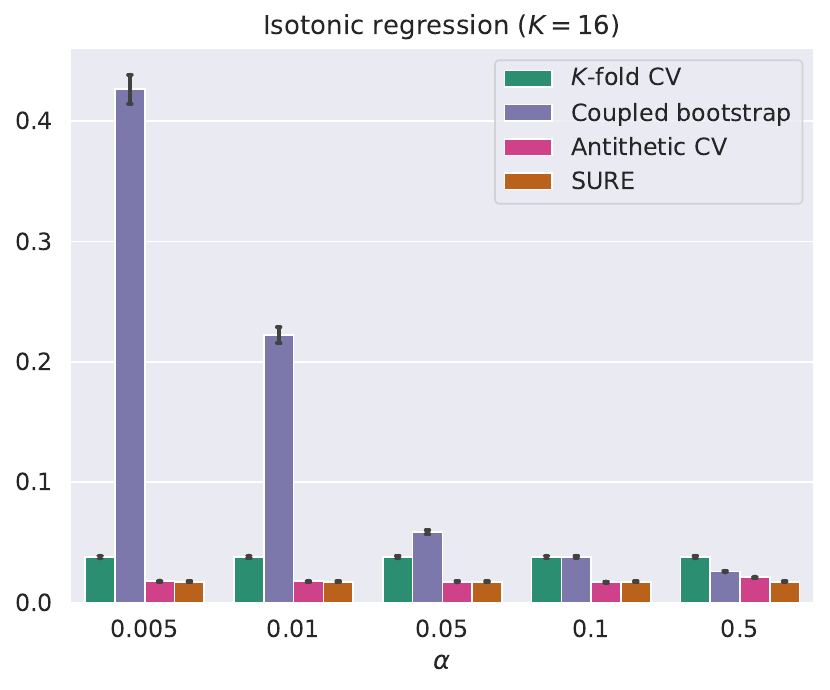}
  \end{subfigure}
  \caption{Mean squared error of standard $K$-fold CV, coupled bootstrap with $K$ repetitions, and the proposed Antithetic CV methods with $K$ repetitions. Left panel: MSE with $\alpha=0.05$ and varying $K$. Right panel: MSE with $K=16$ and varying $\alpha$. }
  \label{fig: isotonic alpha K}
\end{figure}

\subsection{Logistic regression}

We illustrate the effectiveness of the proposed method for estimating prediction errors in GLMs, as described in Section~\ref{sec:glm}. 

Below, we consider a logistic regression example with both continuous and categorical predictors and $n=100$. 
The $4$ continuous features $X^{\text{cts}}$ are generated from the standard Gaussian distribution independently, and  each of the $2$ categorical features, $X^{\text{cat}}$ has 3 classes drawn with probability $[0.1, 0.1, 0.8]$ independently. 
Given the predictors, the response $Y_i$ is drawn from a Bernoulli distribution with mean parameter $(1+e^{-\eta_i})^{-1}$, where
$\eta_i =\beta\tran X_i^{\text{cts}} + \sum_{j=1}^{2}\sum_{k=1}^{3} \gamma_{jk}\mathbf{1}\{X_{ij}^{\text{cat}} =k\}$,
with $\beta = (1, -1, 1, -1)$ and $\gamma_{j} = (1/2, -1/2, 0)$ for $j=1,2$.  The regression coefficients are estimated by fitting a logistic regression model, where we have used one-hot encoding for the categorical features. 
Our goal is to estimate the prediction error, defined here as the expectation of the negative log-likelihood under the logistic regression model.

Our antithetic CV method assumes knowledge of the covariance matrix of the sufficient statistic or requires an estimator for this unknown matrix. Here, we demonstrate empirically that the proposed method continues to perform quite well even when we use a plug-in estimate for the covariance matrix. 
Specifically, in our experiment, we estimate the unknown covariance matrix of the sufficient statistic $H_n$ by $\hat H_n=\frac1n X\tran \hat W X$, where $\hat W$ is the diagonal matrix whose $i$-th diagonal entry is $\pi(x_i\tran \hat\theta ) (1-\pi(x_i\tran \hat\theta))$ and where $\hat\theta$ is the  logistic regression solution. We then use the antithetic CV method introduced in Section~\ref{sec:glm} with $H_n$ replaced by $\hat H_n$.

As a baseline for comparison, we implement an alternative method that substitutes the antithetic Gaussian variables in our approach with independent Gaussian variables. This method mirrors the CB method in the normal means problem, and is referred to as ``independent randomization.'' We set $\alpha=0.1$ for both the antithetic CV method and the method based on a Gaussian randomization scheme with independent variables. In addition, similar to our previous comparisons, we apply the standard $K$-fold CV estimator in this example. We consider two commonly used values, $K=10$ and $K=20$, which are standard choices in practice. The simulation is repeated 10000 times.

The MSE of all three estimators is presented in Figure~\ref{fig: logistic and mlp} (left panel). Our antithetic CV method achieves a much smaller MSE compared to the standard $K$-fold CV. Furthermore, we observe again a clear advantage of the proposed antithetic Gaussian randomization over an independent randomization scheme. Moreover, these results show that the antithetic CV method remains effective even when applied to sufficient statistics that are only asymptotically Gaussian and when the covariance matrix is unknown.

\begin{figure}
    \centering
    \begin{subfigure}{.46\textwidth}
        \includegraphics[width=\textwidth]{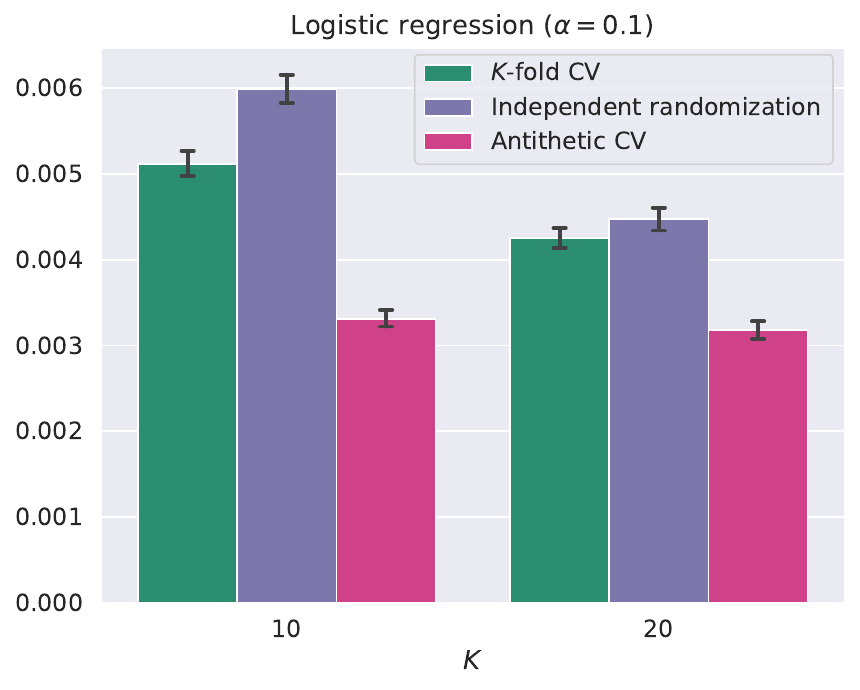}        
    \label{fig: logistic vary K}
    \end{subfigure}\hfill
    \begin{subfigure}{.45\textwidth}
        \includegraphics[width=\textwidth]{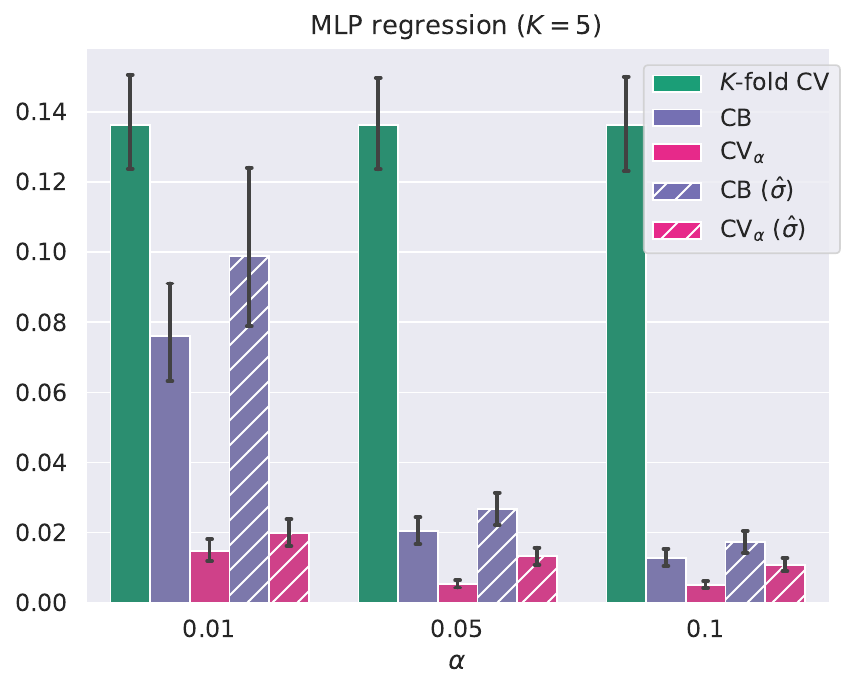}        
    \label{fig: mlp vary K}
    \end{subfigure}\vfill
    \caption{Left panel: MSE in estimating PE in the logistic regression example, with $\alpha=0.1$ and $K\in\{10,20\}$. Right panel: MSE in estimating PE in the MLP regression example, with $K=5$ and $\alpha\in\{0.01,0.05,0.1\}$. Striped bars correspond to CB and $\cv_\alpha$ with a plug-in estimate $\hat\sigma$.}
    \label{fig: logistic and mlp}
\end{figure}

\subsection{Multi-layer perceptron regression}

In this example, we apply the antithetic CV method to estimate the prediction error of a black-box neural network. 
The responses are drawn independently as $y_i\sim \N(f^*(x_i) , \sigma^2)$, where $f^*$ is the Friedman \#1 function \citep{friedman1991multivariate}, defined as:
$f^*(x) = 10\sin(\pi x_1x_2) + 20(x_3-\frac12)^2 + 10x_4 + 5x_5$,
and $\sigma^2=1$.
We generate $10$-dimensional predictors $x_i$ independently and uniformly from $[0,1]^{10}$. 
We fix $n=1000$. A multi-layer perceptron (MLP) prediction model is fitted using the Python package scikit-learn \citep{scikit-learn}. The MLP consists of two hidden layers, each with 64 units. The training configuration includes a maximum of $2000$ epochs, a batch size of $64$, and an $L_2$ regularization parameter of $0.01$, with all other parameters set to their default values. The simulation is repeated 200 times.

We evaluate the proposed antithetic CV method alongside the CB method, using either the true $\sigma$ or an estimated value $\hat\sigma$, where $\hat\sigma$ is taken as the standard error of the residuals from the fitted MLP model. The plug-in versions are denoted ``CB ($\hat\sigma$)'' and ``$\cv_\alpha$ ($\hat\sigma$).''
We fix $K = 5$ and vary $\alpha \in \{0.01, 0.05, 0.1\}$.
As shown in Figure~\ref{fig: logistic and mlp} (right panel), the proposed method achieves a smaller MSE than both the standard $K$-fold CV and the CB method. While the use of the plug-in estimate $\hat\sigma$ slightly degrades the performance of both CB and $\cv_\alpha$, the effect is not substantial.

\subsection{Hyperparameter tuning in fused lasso}

To illustrate the effectiveness of our antithetic CV method in hyperparameter tuning, we apply it to the fused lasso algorithm for image denoising, a setting previously examined by \citet{oliveira2021unbiased}. 
Given a noisy image $Y\in\R^{d\times d}$, the fused lasso estimator is defined as
\begin{align*}
    g_{\lambda}(Y) = \argmin_{\theta\in \R^{d\times d} } \frac12\|Y - \theta\|_F^2 + \lambda \Big(\sum_{\substack{1\leq i\leq d-1, \\ 1\leq j\leq d}} |\theta_{i+1,j} - \theta_{i,j} | + \sum_{\substack{1\leq i\leq d, \\ 1\leq j\leq d-1}} |\theta_{i,j+1} - \theta_{i,j} |  \Big) ,
\end{align*}
where the regularization term imposes a penalty on differences between adjacent pixel values. 
Our goal is to pick the regularization parameter $\lambda$ that minimizes the risk \sloppy{$\EE{\|g_\lambda(Y) - \theta\|_F^2}$}, which, up to the additive constant $d^2\sigma^2$, coincides with  the prediction error.

In this experiment, we search over a grid of 20 values of $\lambda$, equally spaced on a logarithmic scale from 0.04 to 4. The true risk is estimated over 100 simulation rounds given the clean image, yielding an optimal $\lambda=0.28$. The noise level is set to $\sigma^2=0.07$, and the image size is $398\times 398$.

First, we compare antithetic CV with coupled bootstrap (CB) across different values of $\alpha$, measuring how often each method selects the optimal $\lambda$ over 100 simulations. 
Note that the number of times the true $\lambda$ is selected serves as a proxy for how often a correct model is chosen.
To keep computational cost manageable, both CV and CB are implemented with $K=2$, and the results are summarized in Table~\ref{tab: fused lasso}.
\vspace{1em}
\begin{center}
    \begin{tabular}{lrrr}
    \toprule
    \hline
    $\alpha$ & 0.01 & 0.1 & 0.2 \\
    \hline
    \midrule
    CB & 0.64 & 0.49 & 0.00 \\
    CV & 0.96 & 0.51 & 0.00 \\
    \bottomrule
    \end{tabular}
    \captionof{table}{Proportion of times CB and CV select the optimal $\lambda$ over 100 independent simulations, for $\alpha \in \{0.01, 0.1, 0.2\}$. }
    \label{tab: fused lasso}
\end{center} 
We also compute the reconstruction error of CB and $\cv_\alpha$ at $\alpha=0.01$. Relative to the method that selects $\lambda$ based on the oracle risk, CB incurs an average excess reconstruction error of $0.02$, whereas $\cv_\alpha$ achieves a much smaller excess error of 0.001.

Next, in Figure~\ref{fig: fused lasso} (left), we show the estimated risk curves for each method compared with the ground truth, with shaded regions indicating variability across simulations.
Finally, in Figure~\ref{fig: fused lasso} (right), we present an example of a noisy image $Y$ and its denoised version $g_\lambda(Y)$ at the optimal $\lambda$ picked by our method.
Several remarks are in order:
\begin{itemize}
    \item When $\alpha=0.01$, the average estimated risks of both CV and CB align closely with the true risk. However, as shown in Figure~\ref{fig: fused lasso} (left), CB exhibits much higher variability, as evidenced by the wide shaded region, whereas the variability of CV is almost invisible. As a result, CV selects the optimal $\lambda$ in $96$ of $100$ runs, compared to only $64$ for CB.
    \item When $\alpha=0.2$, both methods incur non-negligible bias. Importantly, this bias varies with $\lambda$, causing both CV and CB to consistently choose suboptimal values in all simulations. These results highlight the importance of choosing a small $\alpha$ to reduce the bias for hyperparameter tuning and selecting a correct model, and the importance of using antithetic CV to keep the variance low while using small $\alpha$.
\end{itemize}
Finally, we emphasize that within each simulation and each method, the same randomization variables are shared across different values of $\lambda$. If one instead uses different randomization variables for different $\lambda$, the estimated risk curves of CB become much noisier, and the frequency of optimal selections of CB drops below 10\%. This is because the estimated risks at different $\lambda$ are highly correlated when using the same randomization variables, which helps stabilize the selection of $\lambda$.

\begin{figure}
    \centering
    \begin{subfigure}{.5\textwidth}
    \includegraphics[width=\textwidth]{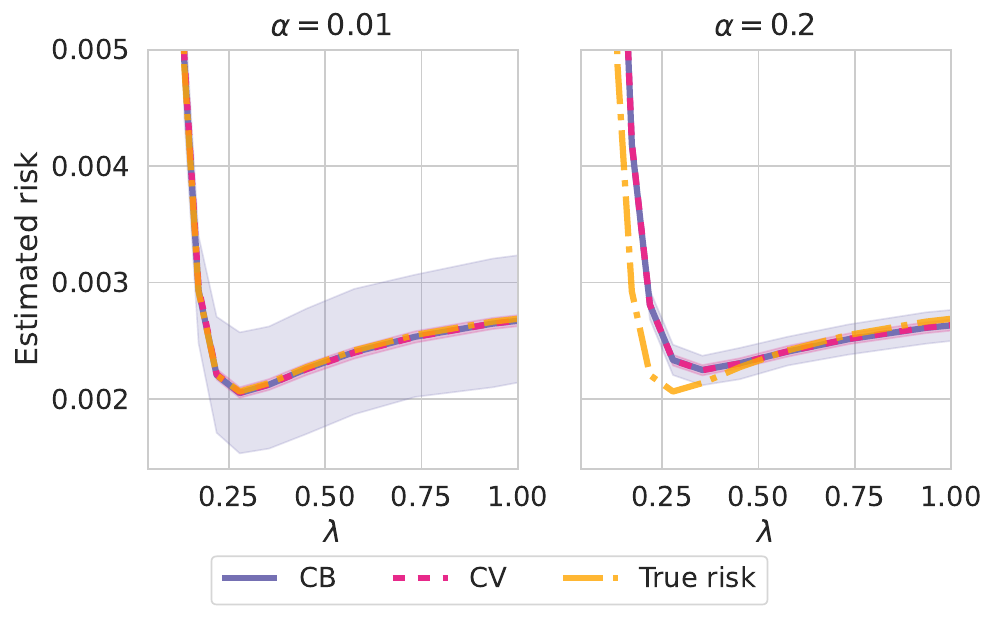}
    \caption{}
    \label{fig: fused lasso risk}
    \end{subfigure}
    \begin{subfigure}{.46\textwidth}
        \includegraphics[width=\textwidth]{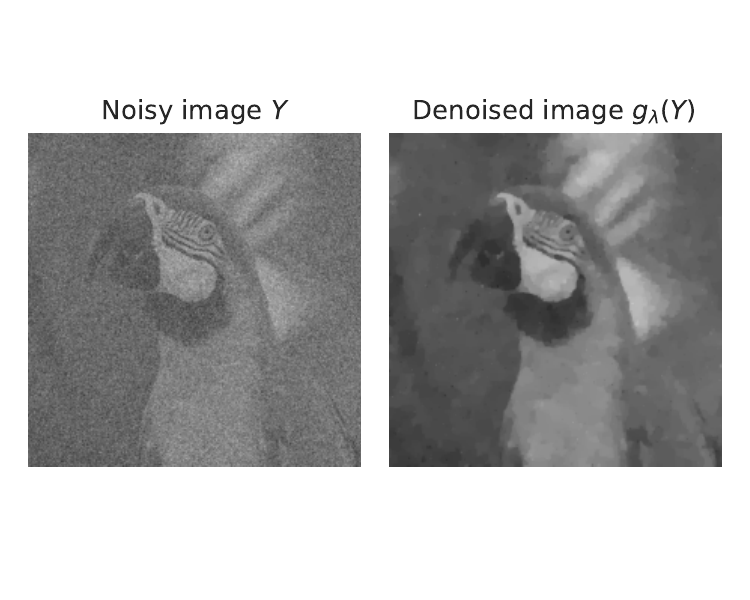}
    \caption{}
    \label{fig: parrot}
    \end{subfigure}
    \caption{(a) Risk of the fused lasso estimator as a function of the regularization parameter $\lambda$ with $\alpha=0.01$ (left) or $\alpha=0.2$ (right), estimated by antithetic CV (pink dashed line) and coupled bootstrap (purple line). The shaded regions represent error bars from 100 independent replicates. The true risk is shown in orange. (b) One noisy image $Y$ and the corresponding denoised image $g_\lambda(Y)$, with the optimal $\lambda$ picked by our method.}
    \label{fig: fused lasso}
\end{figure}
 
\section{Concluding remarks}
\label{sec:conclusion}

In this paper, we propose a new cross-validation scheme based on an equicorrelated Gaussian randomization scheme for estimating prediction error.
Rather than relying on i.i.d. assumptions, our method requires that the sufficient statistic underlying the estimated prediction function is asymptotically normal with an estimable covariance structure. 
Unlike standard CV, which partitions the data into sample folds, our approach partitions the information contained within the sufficient statistic to construct train--test data pairs.
Therefore, our method offers a straightforward, computationally-efficient alternative to standard CV, particularly when i.i.d. assumptions are violated or when forming representative folds, such as in categorical data analysis, is difficult.

Although related in spirit to recently introduced data fission and data thinning methods, which split information within each observation, our cross-validation method provides a principled approach to controlling the key bias--variance tradeoff by exploiting the antithetic correlation among the external randomization variables.
Moreover, by randomizing the asymptotically normal sufficient statistic, our approach naturally extends to a broad class of data types from exponential-family distributions, yielding a relatively distribution-free method that preserves the same bias--variance tradeoff as observed under exact normality.

Several promising directions for future research remain, as outlined below.
\begin{enumerate}[leftmargin=*]
\item[(1)] Our paper establishes that, with the antithetic correlation between our randomization variables, bias of our cross-validated estimator can be reduced without an increase in variance for a large class of weakly differentiable estimators. However, discontinuous estimators, such as hard-thresholded estimators or estimators from decision trees, do not belong to this class. We investigated the former example and observed that, although bias reduction with a small $ \alpha $ involves some increase in variance, this cost still remains much smaller than that of the CB method. Extending this analysis to estimators beyond the class of weakly differentiable functions remains a topic for  theoretical investigation.
\item[(2)] As with data fission and data thinning methods for normal data, our antithetic CV method relies on access to the covariance of the sufficient statistic, or on a reasonably accurate estimate of it. Our empirical analysis demonstrates that a plug-in estimator---as in the logistic regression example---using a suitable parametric or nonparametric method, such as the bootstrap, performs well in practice. However, it remains to be investigated how the bias--variance properties extend in theory when employing a consistent plug-in estimator of the covariance matrix.
\item[(3)] The antithetic randomization approach presented in our paper may hold promise for other tasks, due to the ideal bias--variance tradeoff it strikes. For example, it can be applied to estimate gradients of functions with unavailable analytical gradients, as is common in black-box, gradient-free optimization \citep{spall1992multivariate}. More specifically, a proxy for the function's gradient can be obtained by applying antithetic randomization to obtain the gradient of a convolution-smoothed version of the function, which can then be used to perform gradient descent in such optimization problems.
\end{enumerate}

\section*{Acknowledgments}
S.P. acknowledges support from NSF CAREER Award DMS-2337882.

\bibliographystyle{abbrvnat}  
\bibliography{paper.bbl}

\newpage

\appendix

\section{Proof of Proposition~\ref{prop: SURE}}
\label{prf: prop SURE}
\begin{proof}[Proof of Proposition~\ref{prop: SURE}]
    Note that
    \begin{align*}
        \tilde\cv_\alpha &= \|Y - g*\varphi_{\alpha\sigma^2}(Y)\|_2^2 + \frac{2}{\sqrt\alpha}\EE{ \omega\tran g(Y+\sqrt\alpha\omega)\mid Y },\\
        \mathrm{SURE}(g*\varphi_{\alpha\sigma^2}) &= \|Y - g*\varphi_{\alpha\sigma^2}(Y)\|_2^2 + 2\sigma^2\nabla\cdot (g*\varphi_{\alpha\sigma^2})(Y).
    \end{align*}
    It suffices to show that
    \begin{align*}
        \sigma^2 \nabla\cdot (g*\varphi_{\alpha\sigma^2})(Y) = \frac{1}{\sqrt\alpha}\EE{\omega\tran g(Y+\sqrt\alpha\omega)\mid Y }.
    \end{align*}
    This is true because
    \begin{align*}
        \nabla\cdot (g*\varphi_{\alpha\sigma^2})(y) &= \sum_{i=1}^n \nabla_{y_i} g_i*\varphi_{\alpha\sigma^2}(y)=\sum_{i=1}^n \nabla_{y_i} \int_{\R} g_i(x) \varphi_{\alpha\sigma^2}(y-x)\rd x\\
        &=\sum_{i=1}^n \int_{\R} g_i(x)(-\frac{y_i-x }{\alpha\sigma^2}) \varphi_{\alpha\sigma^2}(y-x)\rd x\\
        &=-\frac{1}{\alpha\sigma^2} \int g(x)\tran (y-x) \varphi_{\alpha\sigma^2}(y-x)\rd x\\
        &=\frac{1}{\alpha\sigma^2} \int g(y+\ep)\tran \ep \varphi_{\alpha\sigma^2}(\ep)\rd \ep\\
        &=\frac{1}{\alpha\sigma^2} \EE[\ep\sim\N(0,\alpha\sigma^2 I)]{g(y+\ep)\tran \ep  }\\
        &=\frac{1}{\sqrt{\alpha}\sigma^2} \EE[\omega\sim\N(0,\sigma^2I)] {g(y+\sqrt\alpha\omega)\tran\omega }.
    \end{align*}
\end{proof}

\section{Proofs for Section~\ref{sec: theory}}

\subsection{Proof of Theorem~\ref{thm: reducible variance}}
\label{prf: thm reducible variance}
\begin{proof}[Proof of Theorem~\ref{thm: reducible variance}]
We first write
\begin{align*}
    \cv_\alpha&=\frac1K\sum_{k=1}^K \|Y-\frac{1}{\sqrt\alpha}\omega^{(k)} - g(Y +\sqrt\alpha\omega^{(k)} )\|_2^2 - \frac{1}{\alpha}\|\omega^{(k)}\|_2^2\\
    &= \underbrace{\frac1K\sum_{k=1}^K \left[ \|Y-g(Y+\sqrt\alpha\omega^{(k)})\|_2^2 \right]}_{(\Rom{1})} + 
    \underbrace{\frac1K\sum_{k=1}^K \frac{2}{\sqrt\alpha}\langle  \omega^{(k)} , g(Y+\sqrt\alpha\omega^{(k)})\rangle}_{(\Rom{2})}.
\end{align*}
By Lemma~\ref{lem: first term}, $\Var{(\Rom{1}) \mid y } $ converges in $L_1$ to 0. By Lemma~\ref{lem: second term}, $\Var{(\Rom{2})\mid Y } $ converges in $L_1$ to $\Var{\frac{2}{K}\sum_{k=1}^K (\omega^{(k)})\tran \nabla g(Y) \omega^{(k)} \mid Y }$. When $j\neq k$, $\Cov{\omega^{(j)}, \omega^{(k)} }=\rho \sigma^2 I$ where $\rho=-\frac{1}{K-1} $. So we have 
\begin{align*}
    &\Var{\frac{1}{K}\sum_k (\omega^{(k)})\tran \nabla g(Y) \omega^{(k)} \mid Y }\\
    &\qquad =\frac{1}{K^2}\left(K\cdot \Var{\omega\tran \nabla g(Y)\omega } + K(K-1) \Cov{(\omega^{(1)})\tran \nabla g(Y) \omega^{(1)}, (\omega^{(2)})\tran \nabla g(Y) \omega^{(2)} }  \right).
\end{align*}
By Lemma~\ref{lem: gaussian quadratic covariance}, 
\begin{align*}
    &\Var{\omega\tran \nabla g(Y)\omega }=\sigma^4 (\|\nabla g(Y) \|_F^2 + \tr(\nabla g(Y)^2 ) ),\\
    &\Cov{(\omega^{(1)})\tran \nabla g(Y) \omega^{(1)}, (\omega^{(2)})\tran \nabla g(Y) \omega^{(2)} } =\frac{1}{(K-1)^2} \Var{\omega\tran \nabla g(Y)\omega }.
\end{align*}
Therefore,
\begin{align*}
    \Var{\frac{1}{K}\sum_k (\omega^{(k)})\tran \nabla g(Y) \omega^{(k)} \mid Y } &=\frac{1}{K^2}\left(K + K(K-1) \frac{1}{(K-1)^2}  \right) \Var{\omega\tran \nabla g(Y)\omega } \\
    &=\frac{\sigma^4}{K-1}(\|\nabla g(Y) \|_F^2 + \tr(\nabla g(Y)^2 ) ).
\end{align*}
This completes the proof.
\end{proof}

\begin{lemma}[First term $(\Rom{1})$]\label{lem: first term}
    Assume that $\EE{\|g(Y)\|_2^4}<\infty$. Then as $\alpha\to0$,
    \begin{align*}
    \Var{ \frac1K\sum_{k=1}^K \|Y -g(Y + \sqrt\alpha\omega^{(k)}) \|_2^2 \mid Y }\stackrel{L_1}{\to} 0 .
    \end{align*}
\end{lemma}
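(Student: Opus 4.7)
The plan is to reduce controlling $\Var{(\Rom{1})\mid Y}$ to controlling $\Var{h_1\mid Y}$ for a single summand $h_k:=\|Y-g(Y+\sqrt\alpha\,\omega^{(k)})\|_2^2$, and then to decompose that variance into quantities whose expectations can be analyzed via Lemma~\ref{lem: approximation to identity}. First, since each $\omega^{(k)}$ is independent of $Y$ with marginal $\N(0,\sigma^2 I_n)$, the conditional law of $h_k\mid Y$ is the same for every $k$, so $\Var{h_k\mid Y}$ does not depend on $k$. Cauchy--Schwarz on conditional covariances then gives
$$\Var{\frac{1}{K}\sum_{k=1}^K h_k \mid Y} = \frac{1}{K^2}\sum_{j,k}\Cov{h_j,h_k\mid Y} \leq \Var{h_1\mid Y},$$
so it suffices to show $\EE{\Var{h_1\mid Y}}\to 0$ as $\alpha\to 0$.

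The second step is to use the identity $\EE{\Var{h_1\mid Y}}=\EE{h_1^2}-\EE{\EE{h_1\mid Y}^2}$ and show that both terms converge to the common limit $\EE{\|Y-g(Y)\|_2^4}$. For the conditional mean, writing $\tilde Y=Y+\sqrt\alpha\,\omega^{(1)}$ (which conditional on $Y$ has density $\varphi_{\alpha\sigma^2}(\cdot-Y)$), a direct expansion yields
$$\EE{h_1\mid Y}=\|Y\|_2^2-2Y^\top(g*\varphi_{\alpha\sigma^2})(Y)+(\|g\|_2^2*\varphi_{\alpha\sigma^2})(Y),$$
i.e., $\EE{h_1\mid Y}$ equals $\|Y-g(Y)\|_2^2$ with each ``$g$-factor'' replaced by its Gaussian smoothing. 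For $\EE{h_1^2}$, an analogous expansion of $\|Y-g(\tilde Y)\|_2^4$ as a polynomial in the entries of $Y$ and $g(\tilde Y)$, followed by taking conditional expectation over $\omega^{(1)}$, reduces each summand to $\EE{(\text{polynomial in }Y)\cdot(f*\varphi_{\alpha\sigma^2})(Y)}$ for $f$ a product of entries of $g$ (e.g.\ $g_i$, $g_ig_j$, $\|g\|_2^2$, $g_i\|g\|_2^2$, or $\|g\|_2^4$). Each such $f$ lies in $L^1(\Pp_Y)$ under the hypothesis $\EE{\|g(Y)\|_2^4}<\infty$, so Lemma~\ref{lem: approximation to identity} yields $(f*\varphi_{\alpha\sigma^2})(Y)\to f(Y)$ in $L^1(\Pp_Y)$.

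The main obstacle is upgrading this $L^1$ convergence of the convolutions to convergence of the integrated squares and products that appear in $\EE{h_1^2}$ and $\EE{\EE{h_1\mid Y}^2}$; these involve terms like $Y^\top(g*\varphi_{\alpha\sigma^2})(Y)$ and $[(\|g\|_2^2*\varphi_{\alpha\sigma^2})(Y)]^2$, for which $L^1$ convergence of the convolutions alone is not enough. I would address this with two ingredients: (i) Jensen-type pointwise bounds such as $[(\|g\|_2^2*\varphi_{\alpha\sigma^2})(Y)]^2\leq(\|g\|_2^4*\varphi_{\alpha\sigma^2})(Y)$, together with $\EE{(\|g\|_2^4*\varphi_{\alpha\sigma^2})(Y)}=\EE{\|g(\tilde Y)\|_2^4}\to\EE{\|g(Y)\|_2^4}$ by Lemma~\ref{lem: approximation to identity}, which yields the uniform integrability needed to promote $L^1$ to $L^2(\Pp_Y)$ convergence of the convolutions; and (ii) Cauchy--Schwarz against the $Y$-polynomial prefactors, which are bounded in every $L^p(\Pp_Y)$ since $Y$ is Gaussian. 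Combining these, both $\EE{h_1^2}$ and $\EE{\EE{h_1\mid Y}^2}$ tend to $\EE{\|Y-g(Y)\|_2^4}$, giving $\EE{\Var{h_1\mid Y}}\to 0$ and completing the proof.
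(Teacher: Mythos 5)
Your proposal is correct and follows essentially the same route as the paper: reduce to a single summand (the paper uses $\Var{\sum_k X_k}\le K\sum_k\Var{X_k}$ where you use Cauchy--Schwarz on the conditional covariances, to the same effect), then control the conditional second and first moments of $\|Y-g(Y+\sqrt\alpha\,\omega)\|_2^2$ via the approximation-to-the-identity lemma applied to $f=\|g\|_2^4$ and lower-order products of components of $g$. Your treatment is somewhat more explicit than the paper's about the subtracted term $\EE{h_1\mid Y}^2$ (the paper implicitly relies on nonnegativity of the conditional variance together with Fatou, whereas you upgrade the convolutions to $L_2$ convergence via Jensen and uniform integrability), but the key reduction and the key lemma are the same.
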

\begin{proof}
    Because $\Var{X_1+\ldots+X_K}\leq K\cdot(\Var{X_1}+\ldots+\Var{X_K})$,
    it suffices to show that as $\alpha\to0$,
    \begin{align*}
        \Var{ \|Y - g(Y + \sqrt\alpha\omega)\tran \|_2^2 \mid Y } \stackrel{L_1}{\to} 0,
    \end{align*}
    where $\omega\sim \N(0, \sigma^2 I_n)$.
    It suffices to show that $\EE{\|Y - g(Y+\sqrt\alpha\omega) \|_2^4\mid Y }\stackrel{L_1}{\to}\|Y-g(Y)\|_2^4$, which is equivalent to showing that $\EE{\|g(Y+\sqrt\alpha\omega)\|_2^4 \mid Y }\stackrel{L_1}{\to}\|g(Y)\|_2^4$. This is true by applying Lemma~\ref{lem: approximation to identity} with $f(y)=\|g(y)\|_2^4$.
\end{proof}

\begin{lemma}[Second term $(\Rom{2})$]\label{lem: second term}
    Assume that $\EE{\|\nabla g(Y)\|_F^2}<\infty$. Then as $\alpha\to0$,
    \begin{align*}
    \Var{\frac2K\sum_{k=1}^K \langle \frac{1}{\sqrt\alpha}\omega^{(k)}, g(Y+\sqrt\alpha\omega^{(k)})\rangle \mid Y }\stackrel{L_1}{\to}\Var{\frac{2}{K}\sum_{k=1}^K (\omega^{(k)})\tran \nabla g(Y) \omega^{(k)} \mid Y}.
    \end{align*}
\end{lemma}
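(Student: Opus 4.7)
The plan is to use weak differentiability together with the zero-sum identity $\sum_k \omega^{(k)}=0$ to rewrite the argument of the left-hand variance as a Gaussian quadratic form with a smoothed-Jacobian matrix, and then apply the approximation-to-identity lemma (Lemma~\ref{lem: approximation to identity}) to replace this matrix by $\nabla g(Y)$ in the $L^1$ limit of the conditional variance. By Assumption~\ref{assump: weakly differentiable},
$$\frac{2}{\sqrt\alpha}\langle \omega^{(k)}, g(Y+\sqrt\alpha\omega^{(k)})\rangle = \frac{2}{\sqrt\alpha}\langle \omega^{(k)}, g(Y)\rangle + 2(\omega^{(k)})\tran \left[\int_0^1 \nabla g(Y+t\sqrt\alpha\omega^{(k)})\rd t\right]\omega^{(k)}.$$
Averaging over $k$, the first summand on the right vanishes thanks to $\sum_k\omega^{(k)}=0$, so the left-hand conditional variance equals $\Var{Z_\alpha\mid Y}$, where
$$Z_\alpha := \frac{2}{K}\sum_{k=1}^K (\omega^{(k)})\tran \left[\int_0^1 \nabla g(Y+t\sqrt\alpha\omega^{(k)})\rd t\right]\omega^{(k)},\quad Z_0 := \frac{2}{K}\sum_{k=1}^K (\omega^{(k)})\tran \nabla g(Y)\omega^{(k)},$$
and the right-hand conditional variance is $\Var{Z_0\mid Y}$. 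The claim thus reduces to $\Var{Z_\alpha\mid Y}\stackrel{L_1}{\to} \Var{Z_0\mid Y}$.

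To establish the reduced claim, I would use the identity $\Var{Z_\alpha\mid Y} - \Var{Z_0\mid Y} = \Cov{Z_\alpha - Z_0,\, Z_\alpha + Z_0 \mid Y}$ together with Cauchy--Schwarz to bound
$$\bigl|\Var{Z_\alpha\mid Y} - \Var{Z_0\mid Y}\bigr| \leq \sqrt{\Var{Z_\alpha - Z_0\mid Y}}\,\sqrt{\Var{Z_\alpha+Z_0\mid Y}},$$
and then show that $\Var{Z_\alpha+Z_0\mid Y}$ is $L^1(Y)$-bounded uniformly in $\alpha$, while $\Var{Z_\alpha - Z_0\mid Y}\stackrel{L_1}{\to} 0$. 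The uniform bound on $\Var{Z_\alpha+Z_0\mid Y}$ follows from the deterministic inequality $|\omega\tran M \omega|^2 \leq \|\omega\|_2^4\|M\|_F^2$ and Jensen in the $t$-integral, which together dominate $\Var{Z_\alpha\mid Y}$ by a universal multiple of $\|\nabla g\|_F^2\ast \varphi_{\alpha\sigma^2}(Y)$; by Lemma~\ref{lem: approximation to identity} applied to the integrable function $\|\nabla g\|_F^2$, this convolution converges in $L^1(Y)$ to $\|\nabla g(Y)\|_F^2$, which is integrable by assumption.

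The main technical obstacle is showing $\Var{Z_\alpha - Z_0\mid Y}\stackrel{L_1}{\to} 0$. Writing $Z_\alpha - Z_0 = \frac{2}{K}\sum_k (\omega^{(k)})\tran \Delta_\alpha^{(k)}\omega^{(k)}$ with $\Delta_\alpha^{(k)} := \int_0^1 [\nabla g(Y+t\sqrt\alpha\omega^{(k)}) - \nabla g(Y)]\rd t$, the same quadratic-form bound reduces the task to showing that $\EE{\|\nabla g(Y+\sqrt{s}\,\omega) - \nabla g(Y)\|_F^2\mid Y}\stackrel{L_1}{\to}0$ as $s\to 0$, with $\omega\sim \N(0,\sigma^2 I_n)$. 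Expanding the square splits this quantity into $\|\nabla g\|_F^2\ast \varphi_{s\sigma^2}(Y) + \|\nabla g(Y)\|_F^2 - 2\langle \nabla g(Y), (\nabla g\ast \varphi_{s\sigma^2})(Y)\rangle_F$; the first two pieces combine to $o_{L^1(Y)}(1)$ via Lemma~\ref{lem: approximation to identity} applied to $\|\nabla g\|_F^2$, but the cross term requires upgrading the entrywise $L^1$ approximation-to-identity to an $L^1(Y)$-convergence of the inner product. I would handle this final step by a Vitali-type argument, using the dominating function $\tfrac12(\|\nabla g(Y)\|_F^2 + \|\nabla g\|_F^2\ast \varphi_{s\sigma^2}(Y))$, whose $L^1(Y)$-norm has a finite limit by Lemma~\ref{lem: approximation to identity}. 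This cross-term convergence is the delicate part of the argument, because it is where one must move beyond the pure approximation-to-identity given in Lemma~\ref{lem: approximation to identity} and use genuine $L^2$-type continuity of $\nabla g$ under the Gaussian law of $Y$.
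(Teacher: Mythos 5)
Your proposal is correct in its first half and coincides there with the paper: both use weak differentiability plus the zero-sum constraint $\sum_k\omega^{(k)}=0$ to reduce the left-hand side to $\Var{Z_\alpha\mid Y}$ with $Z_\alpha=\frac{2}{K}\sum_k(\omega^{(k)})\tran\bigl[\int_0^1\nabla g(Y+t\sqrt\alpha\omega^{(k)})\rd t\bigr]\omega^{(k)}$. From there you diverge. The paper expands $\Var{Z_\alpha\mid Y}$ coordinate-wise into conditional expectations of terms like $\omega_i\omega_j\omega_k\omega_l\int\int\nabla_jg_i(Y+t_1\sqrt\alpha\omega)\nabla_lg_k(Y+t_2\sqrt\alpha\omega)\rd t_1\rd t_2$ and sends each to its $\alpha=0$ value via the weighted, $t$-averaged approximation-to-identity lemmas (Lemmas~\ref{lem: L1 t} and \ref{lem: L1 t2}, built on Lemma~\ref{lem: log p condition}). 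You instead write $\Var{Z_\alpha\mid Y}-\Var{Z_0\mid Y}=\Cov{Z_\alpha-Z_0,Z_\alpha+Z_0\mid Y}$, apply Cauchy--Schwarz twice (conditionally and then over $Y$), and reduce to a uniform $L_1$ bound on $\Var{Z_\alpha+Z_0\mid Y}$ plus $\Var{Z_\alpha-Z_0\mid Y}\stackrel{L_1}{\to}0$, using the matrix bound $|\omega\tran M\omega|\le\|\omega\|_2^2\|M\|_F$ to avoid coordinate expansion. This is a legitimately different and arguably tidier organization: it isolates the analytic content in a single $L_2$-continuity statement for $\nabla g$ under convolution, whereas the paper's term-by-term treatment is more mechanical but requires no variance-difference identity. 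Two caveats. First, you repeatedly cite Lemma~\ref{lem: approximation to identity}, but your dominating quantities carry the weight $\|\omega\|_2^4$ and a $t$-average, so what you actually need are the weighted versions (Lemma~\ref{lem: L1 t} with $h(\omega)=\|\omega\|_2^4$, and Lemma~\ref{lem: L1 t2} for the products); you acknowledge this at the end, and those lemmas are exactly what the paper supplies, so this is a citation imprecision rather than a gap. Second, the cross term you flag as delicate is handled in the paper by Cauchy--Schwarz domination inside Lemma~\ref{lem: log p condition}(ii) followed by dominated convergence — essentially the Pratt/Vitali device you propose — so your fix is sound and no new hypothesis beyond $\EE{\|\nabla g(Y)\|_F^2}<\infty$ is needed.
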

\begin{proof}
Since all the components of $g$ are almost differentiable, we have
\begin{align*}
\langle\frac{1}{\sqrt\alpha} \omega, g(Y+\sqrt\alpha\omega)\rangle&=\frac{1}{\sqrt\alpha}\sum_{i=1}^n \omega_i g_i(Y + \sqrt\alpha\omega)\\
&=\frac{1}{\sqrt\alpha}\sum_{i=1}^n \omega_i \left[g_i(Y) + \int_0^1 \sqrt\alpha\omega \cdot \nabla g_i(Y+t\sqrt\alpha\omega )\rd t \right]  \\
&=\frac{1}{\sqrt\alpha}\omega\tran g(Y) + \sum_{i=1}^n \omega_i \int_0^1 \omega\cdot \nabla g_i(Y+t\sqrt\alpha\omega )\rd t\\
&=\frac{1}{\sqrt\alpha}\omega\tran g(Y) + \omega\tran \int_0^1  \nabla g(Y+t\sqrt\alpha\omega )\rd t\, \omega,
\end{align*}
where $\nabla g$ is the $n\times n$ matrix with the $i$-th row being $\nabla g_i$. 
Averaging over $k=1,\ldots,K$ and using the fact that $\sum_{k=1}^K\omega^{(k)}=0$, we have
\begin{align*}
    \frac1K\sum_{k=1}^K \langle \frac{1}{\sqrt\alpha}\omega^{(k)}, g(Y+\alpha\omega^{(k)})\rangle = 
     \frac{1}{K}\sum_{k=1}^K  (\omega^{(k)})\tran \int_0^1 \nabla g(Y+t\sqrt\alpha\omega^{(k)})\rd t\, \omega^{(k)}.
\end{align*}
Denote each summand as $\Gamma_\alpha(\omega^{(k)}, Y)$, where $\Gamma_\alpha(\omega,Y)=\omega\tran\int_0^1  \nabla g(Y+t\sqrt\alpha\omega )\rd t \omega $ and $\Gamma_0(\omega, Y)=\omega\tran \nabla g(Y) \omega$.
The claim of the lemma is equivalent to
\begin{align*}
    \Var{\frac1K\sum_{k=1}^K \Gamma_\alpha(\omega^{(k)}, Y )\mid Y }\stackrel{L_1}{\to}\Var{\frac1K\sum_{k=1}^K \Gamma_0(\omega^{(k)}, Y ) \mid Y },\text{ as }\alpha\to0.
\end{align*}
It suffices to show that as $\alpha\to0$,
\begin{align*}
    \Var{\Gamma_\alpha(\omega,Y)\mid Y} &\stackrel{L_1}{\to} \Var{\Gamma_0(\omega,Y)\mid Y},\\
    \Cov{\Gamma_\alpha(\omega^{(1)}, Y), \Gamma_\alpha(\omega^{(2)}, Y)\mid Y} &\stackrel{L_1}{\to} \Cov{\Gamma_0(\omega^{(1)}, Y), \Gamma_0(\omega^{(2)}, Y)\mid Y}.
\end{align*}
The conditional variance and conditional covariance involve the conditional expectation of the following terms:
\begin{align*}
    &\omega\tran \int_0^1 \nabla g(Y+t\sqrt\alpha\omega)\rd t\, \omega = \sum_{i,j=1}^n \int_0^1 \nabla_{j} g_i(Y+t\sqrt\alpha\omega) \omega_i\omega_j \rd t\\
    &(\omega\tran\int_0^1 \nabla g(Y+t\sqrt\alpha\omega)\rd t\omega)^2 =\\
    &\qquad\qquad \sum_{i,j,k,l}\int_{[0,1]^2} \nabla_j g_i (Y+t_1\sqrt\alpha\omega) \nabla_l g_k(Y+t_2\sqrt\alpha\omega) \omega_i\omega_j\omega_k\omega_l\rd t_1 \rd t_2\\
    &({\omega^{(1)}}\tran \int_0^1 \nabla g(Y+t\sqrt\alpha\omega^{(1)})\rd t \omega^{(1)})({\omega^{(2)}}\tran \int_0^1 \nabla g(Y+t\sqrt\alpha\omega^{(2)})\rd t \omega^{(2)})=\\
    &\qquad\qquad\sum_{i,j,k,l}\int_{[0,1]^2} \nabla_j g_i(Y+t_1\sqrt\alpha\omega^{(1)}) \nabla_l g_k(Y+t_2\sqrt\alpha\omega^{(2)}) \omega_i^{(1)}\omega_j^{(1)}\omega_k^{(2)}\omega_l^{(2)}\rd t_1 \rd t_2
\end{align*}
Consider the quantities
\begin{align*}
    &\omega_i\omega_j \int_0^1 \nabla_j g_i (Y+t\sqrt\alpha\omega)\rd t\\
    & \omega_i\omega_j\omega_k\omega_l \int_0^1\int_0^1 \nabla_j g_i (Y+t_1\sqrt\alpha\omega) \nabla_l g_k(Y+t_2\sqrt\alpha\omega)\rd t_1\rd t_2\\
    &\omega_i^{(1)}\omega_j^{(1)}\omega_k^{(2)}\omega_l^{(2)} \int_0^1\int_0^1 \nabla_j g_i(Y+t_1\sqrt\alpha\omega^{(1)}) \nabla_l g_k(Y+t_2\sqrt\alpha\omega^{(2)})\rd t_1\rd t_2 .
\end{align*}
We need to show that their conditional expectation $\EE{\cdot\mid Y}$ converges in $L_1$ to their corresponding expected value at $\alpha=0$.
For the first term, we apply Lemma~\ref{lem: L1 t} with $f=\nabla_j g_i $ and $h(\omega)=\omega_i\omega_j$.
For the second term, we apply Lemma~\ref{lem: L1 t2} with $f_1=\nabla_j g_i$, $f_2= \nabla_l g_k $, and $h(\omega)=\omega_i\omega_j\omega_k\omega_l$. 
For the third term, we can apply Lemma~\ref{lem: L1 t2} as well. 
\end{proof}

\subsection{Proof of Theorem~\ref{thm: irreducible variance}}
\label{prf: irreducible}

\begin{proof}[Proof of Theorem \ref{thm: irreducible variance}]
    Note that 
    \begin{align*}
        \EE{\cv_\alpha\mid Y} &=\|Y\|_2^2 +\EE{\|g(Y+\sqrt\alpha\omega)\|_2^2 \mid Y } -2Y\tran\EE{g(Y+\sqrt\alpha\omega)\mid Y} \\
        &+ \frac{2}{\sqrt\alpha}\EE{\omega\tran g(Y+\sqrt\alpha\omega) \mid Y }.
    \end{align*}
    It suffices to show that as $\alpha\to0$,
    \begin{align*}
        \EE{\|g(Y+\sqrt\alpha\omega)\|_2^2 \mid Y} &\stackrel{L_2}{\to} \|g(Y)\|_2^2,\\
        \EE{g(Y+\sqrt\alpha\omega)\tran Y \mid Y} &\stackrel{L_2}{\to} g(Y)\tran Y,\\
        \EE{\frac{1}{\sqrt\alpha}\omega\tran g(Y+\sqrt\alpha\omega)\mid Y } &\stackrel{L_2}{\to} \sigma^2 \tr(\nabla g(Y)).
    \end{align*}
    Applying Lemma~\ref{lem: L1} with $f(y)=\|g(y)\|_2^4$ and $h(\omega)=1$ proves the first line. 
    Applying Lemma~\ref{lem: L1} with $f(y)=g_i(y)g_j(y)$ and $h(\omega)=1$ ($1\leq i,j\leq n$) proves the first two lines.
    For the third limit, recall that 
    \begin{align*}
        \frac{1}{\sqrt\alpha}\omega\tran g(Y+\sqrt\alpha\omega)=\frac{1}{\sqrt\alpha}\omega\tran g(Y) + \omega\tran \int_0^1  \nabla g(Y+t\sqrt\alpha\omega )\rd t \omega.
    \end{align*}
    Because $\EE{\omega\tran g(Y)\mid Y }=0$, we have
    \begin{align*}
        \EE{\frac{1}{\sqrt\alpha}\omega\tran g(Y+\sqrt\alpha\omega) \mid Y} &= \EE{\omega\tran \int_0^1 \nabla g(Y+t\sqrt\alpha\omega )\rd t \omega \mid Y }\\
        &=\sum_{i,j=1}^n \EE{\int_0^1 \nabla_j g_i(Y+t\sqrt\alpha\omega )\rd t  \omega_i\omega_j \mid Y}.
    \end{align*}
    Applying Lemma~\ref{lem: L1 t} with $f(y)=\nabla_j g_i(y)\nabla_l g_k(y) $  and $h(\omega)=\omega_i\omega_j\omega_k\omega_l $ ($1\leq i,j,k,l\leq n$) proves the third line.
\end{proof}

\section{Proofs for Section~\ref{sec:glm}}

\subsection{Proof of Theorem~\ref{thm: glm bias}}
\label{prf: glm bias}
\begin{proof}[Proof of Theorem~\ref{thm: glm bias}]
    The bias can be decomposed as
    \begin{align*}
        &|\EE{\cv_{n,\alpha}} - \PE_n(g)|\\
        &\qquad= \Big |\EE{\frakA_n(T_n+\sqrt\alpha\omega)-\frakg_n(T_n+\sqrt\alpha\omega)\tran (T_n-\frac{1}{\sqrt\alpha}\omega) -\frakA_n(T_n) + \frakg_n(T_n)\tran m_n }\Big|\\
        &\qquad \leq \Big|\EE{\frakA_n(T_n+\sqrt\alpha\omega)} - \EE{\frakA_n(T_n)}\Big| + \Big|\EE{\frakg_n(T_n+\sqrt\alpha\omega)\tran T_n - \frakg_n(T_n)\tran T_n } \Big| \\
        &\qquad\qquad + \Big| \EE{\frac{1}{\sqrt\alpha}\omega\tran \frakg_n(T_n+\sqrt\alpha\omega) -\frakg_n(T_n)\tran (T_n - m_n) }\Big|.\numberthis\label{equ: glm bias prf 1}
    \end{align*}
    Because the density $q_n$ of $T_n$ satisfies Assumption~\ref{assump: log density q_n}, and $\frakg_n$, $\frakA_n$ are integrable w.r.t. $q_n$, we can apply Lemma~\ref{lem: L1} to show that, for $n\geq N_0$,
    \begin{align*}
        &\lim_{\alpha\to0}\EE{\frakA_n(T_n + \sqrt\alpha\omega) } = \EE{\frakA_n(T_n)},\\
        &\lim_{\alpha\to0}\EE{\frakg_n(T_n + \sqrt\alpha\omega)\tran T_n } = \EE{\frakg_n(T_n)\tran T_n}.
    \end{align*}
    Thus, the first two terms in the upper bound~\eqref{equ: glm bias prf 1} vanish as $\alpha\to0$.
    It remains to show that
    \begin{align}\label{equ: glm bias prf 2}
        \lim_{n\to\infty}\lim_{\alpha\to0}\EE{\frac{1}{\sqrt\alpha}\omega\tran \frakg_n(T_n+\sqrt\alpha\omega) -\frakg_n(T_n)\tran (T_n - m_n) }=0.
    \end{align}
    Because $\frakg_n$ is weakly differentiable, we can write
    \begin{align*}
        \EE{\frac{1}{\sqrt\alpha}\omega\tran \frakg_n(T_n+\sqrt\alpha\omega)}=\EE{\frac{1}{\sqrt\alpha}\omega\tran \frakg_n(T_n)} + \EE{\omega\tran \int_0^1 \nabla \frakg_n(T_n+t\sqrt\alpha\omega) \rd t \omega}.
    \end{align*}
    The first expectation is 0. For the second term, Lemma~\ref{lem: L1 t} shows that
    \begin{align*}
        \lim_{\alpha\to0}\EE{\omega\tran \int_0^1 \nabla \frakg_n(T_n+t\sqrt\alpha\omega) \rd t \omega} = \EE{\omega\tran \nabla \frakg_n(T_n) \omega} =\EE{\nabla\cdot\frakg_n(T_n)}.
    \end{align*}
    By Assumption~\ref{assump: stein discrepancy}, we have
    \begin{align*}
        \lim_{n\to\infty}\EE{\nabla\cdot \frakg_n(T_n)} - \EE{\frakg_n(T_n)\tran (T_n-m_n) }=0.
    \end{align*}
    This proves Equation~\eqref{equ: glm bias prf 2}.
\end{proof}

\subsection{Proof of Theorem~\ref{thm: glm var}}
\label{prf: glm var}
\begin{proof}[Proof of Theorem~\ref{thm: glm var}]
    We write 
    \begin{align*}
    \cv_{n,\alpha}=(\Rom{1}) + (\Rom{2}),
    \end{align*} 
    where
    \begin{align*}        
        (\Rom{1})&=\frac1{K}\sum_{k=1}^K\Big\{\mathfrak A_n( T_n+\sqrt\alpha\omega^{(k)}) - \mathfrak g_n(T_n + \sqrt\alpha\omega^{(k)} )\tran T_n   \Big\} - n^{-1/2} \log h_n(Y_n)\\
        (\Rom{2})&=\frac1K\sum_{k=1}^K \frac{1}{\sqrt\alpha}\frakg_n(T_n+\sqrt\alpha\omega^{(k)})\tran \omega^{(k)}
    \end{align*}    
    Because $\frakA_n^2,\|\frakg_n\|_2^4$ are integrable under $q_n$, and $q_n$ satisfies Assumption~\ref{assump: log density q_n}, we can apply Lemma~\ref{lem: L1} with $f(y)=\frakA_n^2(y)$ and $f(y)=\|\frakg_n\|_2^4 $ to show that, when $n\geq N_0$,
    \begin{align*}
        \lim_{\alpha\to0}\EE{\Var{(\Rom{1}) \mid Y_n} }=0.
    \end{align*}
    For the second term, because $\frakg_n$ is weakly differentiable, we can write
    \begin{align*}
        (\Rom{2})&=\frac1K\sum_{k=1}^K \frac{1}{\sqrt\alpha}(\omega^{(k)})\tran \frakg_n(T_n) +\frac1K\sum_{k=1}^K  \int_0^1 \omega^{(k)}\cdot \nabla\frakg_n(T_n+t\sqrt\alpha\omega^{(k)}) \rd t \omega^{(k)} \\
        &=\frac1K\sum_{k=1}^K  \int_0^1 \omega^{(k)}\cdot \nabla\frakg_n(T_n+t\sqrt\alpha\omega^{(k)}) \rd t \omega^{(k)},
    \end{align*}
    where the second equality is due to the zero-sum constraint $\sum_{k=1}^K\omega^{(k)}=0$. The rest of the proof is similar to the proof of Theorem~\ref{thm: reducible variance}.
\end{proof}

\section{Technical lemmas}
\begin{lemma}\label{lem: log p condition}
    Let $p$ be a continuous density on $\R^n$ with respect to the Lebesgue measure. Suppose there exist a constant $L>0$ such that for all $x,x'\in\R^n$, 
    \begin{align*}
        |\log p(x)-\log p(x')|\leq  L \|x-x'\|_2^2.
    \end{align*}
    Let $\varphi$ be the density of the standard normal distribution $\N(0,I_n)$ with respect to the Lebesgue measure. 
    Then the following results hold:
    \begin{enumerate}
        \item Let $f:\R^n\to\R$ be $L_1$-integrable with respect to $p$. 
        Let $h:\R^n\to\R$ be be integrable w.r.t. $\N(0,(1+\delta_0)I_n)$ for some $\delta_0>0$.
        Then there exists $\ep_0>0$, such that for all $\ep\in[0,\ep_0]$, 
        \begin{align*}
            |h(\omega)| \varphi(\omega)\int_{\R^n} |f(x+\ep\omega)|p(x)\rd x \leq q(\omega),
        \end{align*}
        where $q$ is an $L_1$-integrable function on $\R^n$.

        \item Let $f_1,f_2$ be $L_2$-integrable functions with respect to $p$. Let $h:\R^n\to\R$ be integrable w.r.t. $\N(0,(1+\delta_0)I_n)$ for some $\delta_0>0$. Then there exists $\ep_0>0$, such that when $\ep_1,\ep_2\in[0,\ep_0]$,
        \begin{align*}
            |h(\omega)| \varphi(\omega)\int |f_1(x+\ep_1\omega)f_2(x+\ep_2\omega)| p(x)\rd x \leq q(\omega),
        \end{align*}
        where $q$ is an $L_1$-integrable function on $\R^n$.
    \end{enumerate}
    In particular, if $p$ is a Gaussian density, then $\log p$ is a quadratic function, so the condition on $\log p$ in the lemma is satisfied. 
\end{lemma}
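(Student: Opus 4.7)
The plan is to treat both parts with a single change-of-variables plus Cauchy--Schwarz argument, exploiting the quadratic log-density bound to control the induced Radon--Nikodym ratio.

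For part (i), I would first apply the substitution $y = x + \ep\omega$ inside the inner integral, which gives
\begin{align*}
\int_{\R^n} |f(x+\ep\omega)|\, p(x)\, \rd x = \int_{\R^n} |f(y)|\,\frac{p(y-\ep\omega)}{p(y)}\, p(y)\, \rd y.
\end{align*}
The hypothesis $|\log p(x) - \log p(x')| \leq L\|x-x'\|_2^2$ gives the pointwise bound $p(y-\ep\omega)/p(y) \leq \exp(L\ep^2 \|\omega\|_2^2)$, so the inner integral is at most $C_f \cdot e^{L\ep^2 \|\omega\|_2^2}$ with $C_f = \int |f|\, p$. Multiplying by $\varphi(\omega) = (2\pi)^{-n/2} e^{-\|\omega\|_2^2/2}$ converts the factor into $(2\pi)^{-n/2} e^{-(1/2 - L\ep^2)\|\omega\|_2^2}$. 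Choosing $\ep_0$ so that $L\ep_0^2 \leq \delta_0/(2(1+\delta_0))$ makes this bounded by a constant multiple of the density of $\mathcal{N}(0,(1+\delta_0)I_n)$, which I will denote $\psi_{1+\delta_0}$. Hence the full expression is dominated by $q(\omega) := C\,|h(\omega)|\,\psi_{1+\delta_0}(\omega)$, and $q$ is $L_1$-integrable exactly because $h$ is integrable against $\mathcal{N}(0,(1+\delta_0) I_n)$.

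For part (ii), the idea is the same after one extra Cauchy--Schwarz step. I would bound
\begin{align*}
\int |f_1(x+\ep_1\omega) f_2(x+\ep_2\omega)|\, p(x)\, \rd x \leq \left(\int f_1^2(x+\ep_1\omega)\, p(x)\, \rd x\right)^{1/2}\!\!\left(\int f_2^2(x+\ep_2\omega)\, p(x)\, \rd x\right)^{1/2},
\end{align*}
then apply the same substitution and log-density bound to each factor, producing a common envelope of the form $\sqrt{C_{f_1^2}C_{f_2^2}}\,\exp\!\left(\tfrac{L}{2}(\ep_1^2+\ep_2^2)\|\omega\|_2^2\right)$, which is at most $\sqrt{C_{f_1^2}C_{f_2^2}}\,e^{L\ep_0^2\|\omega\|_2^2}$. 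The remainder of the argument is identical to part (i): pick $\ep_0$ so that the Gaussian factor $\varphi(\omega)$ absorbs the exponential growth while leaving at least a constant multiple of $\psi_{1+\delta_0}(\omega)$ behind, and dominate by $q(\omega) = C\,|h(\omega)|\,\psi_{1+\delta_0}(\omega)$.

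The only subtle point is the choice of $\ep_0$: it must be small enough that the quadratic-in-$\omega$ growth produced by the log-density bound is strictly dominated by the Gaussian decay of $\varphi$, with enough slack left over to still be integrable against $|h|$. All other steps are direct substitutions or Cauchy--Schwarz, so I expect no further obstacles. The concluding remark about $p$ Gaussian follows because a Gaussian log-density is quadratic and therefore satisfies the prescribed modulus-of-continuity bound on any fixed compact region of interest after absorbing the cross term into the exponent of $p$ itself; this is an aside and does not require further argument in the main body of the proof.
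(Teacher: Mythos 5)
Your argument is essentially identical to the paper's proof: the same change of variables $y=x+\ep\omega$, the same pointwise bound $p(y-\ep\omega)/p(y)\leq \exp(L\ep^2\|\omega\|_2^2)$ from the quadratic log-density condition, the same choice of $\ep_0$ so that $\varphi(\omega)$ absorbs the exponential growth leaving a multiple of the $\N(0,(1+\delta_0)I_n)$ density to pair with $|h|$, and the same Cauchy--Schwarz reduction of part (ii) to part (i). The proposal is correct and requires no changes to its main body; your closing aside about the Gaussian case is no less hand-wavy than the paper's own parenthetical remark, so it does not affect the comparison.
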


\begin{proof}
    By the assumption on $\log p$, we have
    \begin{align*}
        |\log p(x-\ep\omega) - \log p(x)|\leq L \ep^2\|\omega\|_2^2.
    \end{align*}
    Then
    \begin{align*}
        \frac{p(x-\ep\omega)}{p(x)}=\exp(\log p(x-\ep\omega) - \log p(x) )\leq \exp(L\ep^2 \|\omega\|_2^2 ).
    \end{align*}
    Thus, we obtain
    \begin{align*}
        \varphi(\omega)\frac{p(x-\ep\omega)}{p(x)}\leq \frac{1}{(2\pi)^{n/2}}\exp(-(\frac12 - L\ep^2 )\|\omega\|_2^2  ).
    \end{align*}
    Take $\ep_0= (\frac{\delta_0}{4L(1+\delta_0)} )^{1/2}$. When $\ep\leq \ep_0$, we have $\frac12- L\ep^2\geq \frac{1}{2}-L\ep_0^2=\frac{1+\delta_0/2}{2(1+\delta_0)} > \frac{1}{2(1+\delta_0)}$. Define $q(\omega):= |h(\omega)| \frac{1}{(2\pi)^{n/2}} \exp(-\frac{1}{2(1+\delta_0)}\|\omega\|_2^2 )$, which is an integrable function by the assumption.
    We obtain
    \begin{align*}
        |h(\omega)| \varphi(\omega) p(x-\ep\omega)&\leq |h(\omega)|\frac{1}{(2\pi)^{n/2}}\exp(-(\frac12 - L\ep^2 )\|\omega\|_2^2  )\cdot p(x)\\
        &\leq |h(\omega)|\frac{1}{(2\pi)^{n/2}}\exp(-\frac{1}{2(1+\delta_0)} \|\omega\|_2^2  ) \cdot p(x)\\
        &=q(\omega)p(x).
    \end{align*}
    Therefore,
    \begin{align*}
        |h(\omega) |\varphi(\omega)\int |f(x+\ep\omega)| p(x)\rd x &=|h(\omega) |\varphi(\omega)\int |f(x)| p(x-\ep\omega)\rd x\\
        &=\int |f(x)| |h(\omega)| \varphi(\omega) p(x-\ep\omega)\rd x\\
        &\leq \int |f(x)| q(\omega) p(x)\rd x\\
        &=q(\omega)\cdot \int |f(x)| p(x)\rd x,
    \end{align*}
    which is integrable because $q$ is integrable and $\int |f(x)| p(x)\rd x<\infty$. This proves the first claim.

    Next, suppose $f_1^2,f_2^2$ are integrable functions with respect to $p$. From the first claim, there exists $\ep_0$, such that for all $\ep_1,\ep_2\in[0,\ep_0]$, 
    \begin{align*}
        |h(\omega)| \varphi(\omega)\int f_1(x+\ep_1\omega)^2 p(x)\rd x\leq q_1(\omega),\\
        |h(\omega)|\varphi(\omega)\int f_2(x+\ep_2\omega)^2 p(x)\rd x\leq q_2(\omega),
    \end{align*}
    where $q_1,q_2$ are integrable functions.
    By the Cauchy-Schwarz inequality, we have
    \begin{align*}
        &|h(\omega)| \varphi(\omega)\int |f_1(x+\ep_1\omega)f_2(x+\ep_2\omega) |p(x)\rd x\\ 
        &\qquad\leq \left(|h(\omega)| \varphi(\omega) \int f_1(x+\ep_1\omega)^2 p(x)\rd x \right)^{1/2} \cdot \left(|h(\omega)| \varphi(\omega) \int f_2(x+\ep_2\omega)^2 p(x)\rd x \right)^{1/2}\\
        &\qquad \leq q_1(\omega)^{1/2}q_2(\omega)^{1/2},
    \end{align*}
    which is $L_1$ integrable, proving the second claim.
\end{proof}

\begin{lemma}
    \label{lem: L1}
    Let $p$ be a density on $\R^n$ satisfying the condition in Lemma~\ref{lem: log p condition}.
    Let $f:\R^n\to\R$ be a function that is integrable w.r.t. the density $p$.
    Let $h:\R^n\to\R$ be a function that is integrable w.r.t. $\N(0,(1+\delta_0)I_n)$ for some $\delta_0>0$.
    Then as $\alpha\downarrow 0$,
    \begin{align*}
        \EE{f(Y+\sqrt{\alpha}\omega ) h(\omega)  \mid Y}\stackrel{L_1}{\rightarrow} f(Y)\EE{h(\omega)},
    \end{align*}
    where the expectation is taken over $\omega\sim\N(0,I_n)$, and the $L_1$ convergence is with respect to $Y\sim p$.
\end{lemma}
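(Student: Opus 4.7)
The plan is to reduce the desired $L_1$ convergence to Lebesgue's dominated convergence theorem in $\omega$. Writing out the conditional expectation as an integral against $\varphi$, and applying Fubini and the triangle inequality gives
\begin{align*}
\EE[Y\sim p]{\Big|\EE{f(Y+\sqrt\alpha\omega)h(\omega)\mid Y} - f(Y)\EE{h(\omega)}\Big|}
\le \int |h(\omega)|\varphi(\omega)\, G(\omega,\alpha)\, d\omega,
\end{align*}
where $G(\omega,\alpha) := \int |f(y+\sqrt\alpha\omega) - f(y)|\, p(y)\, dy$. It therefore suffices to produce (a) a pointwise limit $G(\omega,\alpha)\to 0$ as $\alpha\to 0$ for each fixed $\omega$, and (b) a bound on $|h(\omega)|\varphi(\omega)G(\omega,\alpha)$ that is $\omega$-integrable and independent of $\alpha\in(0,\alpha_0]$ for some $\alpha_0>0$.

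Step (b) follows directly from the Lipschitz bound $|\log p(y)-\log p(y')|\le L\|y-y'\|_2^2$: changing variables $y\mapsto y-\sqrt\alpha\omega$ in the $\int |f(y+\sqrt\alpha\omega)|p(y)\,dy$ piece of $G$ yields
\begin{align*}
G(\omega,\alpha) \le \bigl(1+\exp(L\alpha\|\omega\|_2^2)\bigr)\,\|f\|_{L_1(p)},
\end{align*}
and choosing $\alpha_0$ with $L\alpha_0\le\tfrac{\delta_0}{2(1+\delta_0)}$ makes $|h(\omega)|\varphi(\omega)\exp(L\alpha\|\omega\|_2^2)$ bounded for $\alpha\le\alpha_0$ by a constant multiple of $|h(\omega)|$ against the $\N(0,(1+\delta_0)I_n)$ density, which is integrable by hypothesis on $h$---this is essentially the same calculation already used in Lemma~\ref{lem: log p condition}.

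The main work is in step (a). Since $p\,dy$ is a finite Borel measure on $\R^n$, the space $C_c(\R^n)$ is dense in $L_1(p)$; for arbitrary $\ep>0$ I pick $f_\ep\in C_c(\R^n)$ with $\|f-f_\ep\|_{L_1(p)}<\ep$. A three-way triangle-inequality split of $G(\omega,\alpha)$ gives
\begin{align*}
G(\omega,\alpha) \le \ep\bigl(1+e^{L\alpha\|\omega\|_2^2}\bigr)
+ \int |f_\ep(y+\sqrt\alpha\omega) - f_\ep(y)|\, p(y)\, dy,
\end{align*}
where the first term recycles the bound of step (b) applied to $f-f_\ep$. Uniform continuity of $f_\ep$ drives the integrand of the remaining integral to zero pointwise as $\alpha\to 0$, and $2\|f_\ep\|_\infty$ dominates it, so dominated convergence in $y$ kills that integral. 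Letting $\ep\downarrow 0$ afterwards yields $G(\omega,\alpha)\to 0$. Feeding (a) and (b) into one final dominated convergence over $\omega$ completes the proof. The main obstacle is that the naive ``continuity of translation in $L_1$'' applies only to Lebesgue measure, so the $C_c$ approximation must be interleaved with the Lipschitz bound on $\log p$ to control the residual after translation.
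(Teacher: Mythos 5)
Your proof is correct and follows essentially the same route as the paper's: reduce via Fubini and the triangle inequality to $\int |h(\omega)|\varphi(\omega)\,G(\omega,\alpha)\,\rd\omega$, dominate the integrand using the quadratic bound on $\log p$ exactly as in Lemma~\ref{lem: log p condition}, and finish with dominated convergence in $\omega$. The only difference is that you spell out the pointwise limit $G(\omega,\alpha)\to 0$ in full (correctly interleaving the $C_c$ approximation with the $\log p$ bound to handle the weighted measure $p\,\rd y$), whereas the paper dispatches this step with a one-line citation of the mean continuity theorem.
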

\begin{proof}
    We start from the inequality
    \begin{align*}
        \Big |\int (f(y+\sqrt\alpha\omega)h(\omega) - f(y)h(\omega) ) \varphi(\omega)\rd\omega\Big|
        &\leq \int |f(y+\sqrt\alpha\omega) - f(y) |\cdot |h(\omega)| \varphi(\omega) \rd\omega.
    \end{align*}
    Multiplying by $p(y)$, integrating over $y$, and applying Fubini's theorem, we have
    \begin{align*}
        &\int \Big|\int (f(y+\sqrt\alpha\omega)h(\omega) -f(y)h(\omega) )\varphi(\omega)\rd \omega  \Big| p(y)\rd y\\
        &\qquad \leq \int\int  |f(y+\sqrt\alpha\omega) - f(y) |\cdot |h(\omega)| \varphi(\omega)\rd\omega p(y)\rd y\\
        &\qquad =\int |h(\omega)| \varphi(\omega) \int |f(y+\sqrt\alpha\omega)-f(y) | p(y)\rd y\rd\omega .
    \end{align*}
    By Lemma~\ref{lem: log p condition}, there exists $\ep_0>0$ such that, for all $\ep\in[0,\ep_0]$,  
    \begin{align*}
        |h(\omega)| \varphi(\omega) \int |f(y + \ep \omega)| p(y)\rd y \leq \bar f(\omega),
    \end{align*}
    where $\bar f(\omega)$ is $L_1$ integrable.
    When $\sqrt\alpha\leq \ep_0$, we have
    \begin{align*}
    |h(\omega)|\varphi(\omega) \int |f(y+\sqrt\alpha\omega) - f(y)| p(y) \rd y&\leq 2 \bar f(\omega).
    \end{align*}
    Thus, $|h(\omega)|\varphi(\omega) \int |f(y+\sqrt\alpha\omega) - f(y)| p(y) \rd y$ is dominated by an integrable function.
    Moreover, $\lim_{\alpha\to 0} \int |f(y+\sqrt\alpha\omega)-f(y)| p(y)\rd y=0$ (Lebesgue differentiation theorem or mean continuity theorem; proof by the fact that compactly supported continuous functions are dense in $L_1$).
    Applying the dominated convergence theorem, we have
    \begin{align*}
        &\lim_{\alpha\to0}\int |h(\omega)| \varphi(\omega) \int |f(y+\sqrt\alpha\omega) -f(y)| p(y)\rd y\rd \omega \\
        &\quad= \int|h(\omega)|  \varphi(\omega)  \lim_{\alpha\to0}\int |f(y+\sqrt\alpha\omega) -f(y)| p(y)\rd y\rd \omega\\
        &\quad =0.
    \end{align*}
\end{proof}

\begin{lemma}
    \label{lem: L1 t}
    Under the same condition as in Lemma~\ref{lem: L1}, as $\alpha\downarrow 0$, we have
    \begin{align*}
        \EE{\int_0^1 f(Y+t\sqrt{\alpha}\omega ) \rd t\cdot h(\omega) \mid Y}\stackrel{L_1}{\rightarrow} f(Y) \EE{h(\omega)},
    \end{align*}
    where the expectation is taken over $\omega\sim\N(0,I_n)$, and the $L_1$ convergence is with respect to $Y\sim p$.
\end{lemma}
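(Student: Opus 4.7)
The plan is to reduce to Lemma~\ref{lem: L1} by pulling the $t$-integral out via Fubini's theorem and then passing the $\alpha\to 0$ limit through the $t$-integral by dominated convergence. Concretely, I would first define, for each $t\in[0,1]$ and $\alpha>0$,
\begin{align*}
B_{\alpha,t}(Y):=\EE{f(Y+t\sqrt{\alpha}\omega)h(\omega)\mid Y},
\end{align*}
so that Fubini's theorem (justified by the joint integrability verified in the next step) gives
\begin{align*}
\EE{\int_0^1 f(Y+t\sqrt{\alpha}\omega)\rd t\cdot h(\omega)\mid Y}=\int_0^1 B_{\alpha,t}(Y)\rd t.
\end{align*}

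Next I would bound the $L_1(p)$ error by
\begin{align*}
\Big\|\int_0^1 B_{\alpha,t}\rd t-f\cdot\EE{h(\omega)}\Big\|_{L_1(p)}\leq \int_0^1\bigl\|B_{\alpha,t}-f\cdot\EE{h(\omega)}\bigr\|_{L_1(p)}\rd t.
\end{align*}
For each fixed $t\in(0,1]$, applying Lemma~\ref{lem: L1} with $\sqrt{\alpha}$ replaced by $t\sqrt{\alpha}$ (which tends to $0$ as $\alpha\to 0$) yields pointwise-in-$t$ convergence $\|B_{\alpha,t}-f\cdot\EE{h(\omega)}\|_{L_1(p)}\to 0$; the case $t=0$ is trivial since $B_{\alpha,0}=f\cdot\EE{h(\omega)}$.

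To swap the limit with the $t$-integral I need a uniform-in-$t$ dominating bound. By Lemma~\ref{lem: log p condition} there exists $\varepsilon_0>0$ such that $|h(\omega)|\varphi(\omega)\int|f(y+\varepsilon\omega)|p(y)\rd y\leq q(\omega)$ for all $\varepsilon\in[0,\varepsilon_0]$ with $q$ integrable. Once $\sqrt{\alpha}\leq\varepsilon_0$, the parameter $t\sqrt{\alpha}$ lies in $[0,\varepsilon_0]$ for every $t\in[0,1]$, so by Fubini
\begin{align*}
\|B_{\alpha,t}\|_{L_1(p)}\leq\int|h(\omega)|\varphi(\omega)\int|f(y+t\sqrt{\alpha}\omega)|p(y)\rd y\rd\omega\leq\int q(\omega)\rd\omega<\infty,
\end{align*}
uniformly in $t\in[0,1]$. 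Combining with the trivial bound $\|f\cdot\EE{h(\omega)}\|_{L_1(p)}<\infty$ gives a constant (hence $t$-integrable) majorant, so dominated convergence on $[0,1]$ yields the desired $L_1(p)$ limit.

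The main potential obstacle is the uniform control in $t$: if the dominating function depended badly on $t$, one could not interchange limit and $t$-integration. This is precisely what Lemma~\ref{lem: log p condition} sidesteps---its bound depends only on whether the perturbation size lies in $[0,\varepsilon_0]$, and scaling $\sqrt{\alpha}$ by $t\in[0,1]$ only shrinks the perturbation, so the same dominating function $q(\omega)$ works simultaneously for all $t$. Everything else is a direct transcription of the proof of Lemma~\ref{lem: L1}.
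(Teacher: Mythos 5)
Your proof is correct and follows essentially the same route as the paper's: both hinge on the observation that once $\sqrt{\alpha}\leq\varepsilon_0$ the perturbation $t\sqrt{\alpha}$ stays in $[0,\varepsilon_0]$ for all $t\in[0,1]$, so the dominating function from Lemma~\ref{lem: log p condition} works uniformly in $t$, and then dominated convergence plus mean continuity finishes the argument. Your version is marginally more modular in that it invokes Lemma~\ref{lem: L1} as a black box for each fixed $t$ and adds one outer dominated-convergence step over $t$, whereas the paper re-runs the triple-integral estimate directly, but the substance is identical.
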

\begin{proof}
    We start from
    \begin{align*}
        &\Big|\int_0^1 \int (f(y+t\sqrt\alpha\omega) h(\omega)-f(y)h(\omega) )\varphi(\omega)\rd \omega \rd t  \Big| \\
        &\qquad\leq \int_0^1\int |f(y+t\sqrt\alpha\omega)-f(y) |\cdot|h(\omega)|
        \varphi(\omega)\rd\omega\rd t.    
    \end{align*}
    Integrating over $y$ and applying Fubini's theorem, we have
    \begin{align*}
        &\int \Big|\int_0^1 \int  (f(y+t\sqrt\alpha\omega)h(\omega) -f(y)h(\omega) ) \varphi(\omega) \rd\omega \rd t \Big| p(y) \rd y  \\
        &\qquad\leq \int_0^1 \int |h(\omega)|  \varphi(\omega) \int |f(y+t\sqrt\alpha\omega) -f(y)| p(y)\rd y\rd \omega\rd t .
    \end{align*}
    When $\sqrt\alpha<\ep_0$, $t\sqrt\alpha<\ep_0$ for all $t\in[0,1]$.
    Following a similar argument as in the proof of Lemma~\ref{lem: L1}, we have $|h(\omega)|\varphi(\omega) \int |f(y+t\sqrt\alpha\omega) - f(y)| p(y) \rd y\leq 2 \bar f(\omega)$, which is integrable. Moreover, $\lim_{\alpha\to 0} \int |f(y+t\sqrt\alpha\omega)-f(y)|p(y)\rd y=0$ (mean continuity theorem).
    By dominated convergence theorem, the last display converges to 0 as $\alpha\to0$.
\end{proof}

\begin{lemma}
    \label{lem: L1 t2}
    Suppose $p$ is a density satisfying the condition in Lemma~\ref{lem: log p condition}.
    Suppose $f_1(y)^2,f_2(y)^2$ are integrable functions w.r.t. $p$.
    Suppose $h$ is integrable w.r.t. $\N(0, (1+\delta_0)I_n)$ for some $\delta_0>0$.
    Then as $\alpha\downarrow0$, we have
    \begin{align*}
        \EE{\int_0^1\int_0^1 f_1(Y+t_1\sqrt{\alpha}\omega ) f_2(Y+t_2\sqrt{\alpha}\omega )\rd t_1\rd t_2 \cdot h(\omega) \mid Y}\stackrel{L_1}{\rightarrow} f_1(Y)f_2(Y)\EE{h(\omega)} ,
    \end{align*}
    where the expectation is taken over $\omega\sim\N(0,I_n)$, and the $L_1$ convergence is with respect to $Y\sim p$.
\end{lemma}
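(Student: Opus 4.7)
The plan is to adapt the argument of Lemma~\ref{lem: L1 t} to a product of two functions, with Cauchy--Schwarz over the $y$-variable replacing the simple triangle inequality. First I would exchange the order of integration (the integrability needed for Fubini will be supplied by Lemma~\ref{lem: log p condition}(ii)) to reduce the claim to showing that
\begin{align*}
\int_0^1\!\!\int_0^1\!\!\int |h(\omega)|\,\varphi(\omega) \int \bigl|f_1(y+t_1\sqrt\alpha\omega)f_2(y+t_2\sqrt\alpha\omega) - f_1(y)f_2(y)\bigr|\, p(y)\,\mathrm{d}y\,\mathrm{d}\omega\,\mathrm{d}t_1\,\mathrm{d}t_2
\end{align*}
tends to $0$ as $\alpha\to 0$. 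To handle the product, I would apply the decomposition
\begin{align*}
f_1(y+z_1)f_2(y+z_2) - f_1(y)f_2(y) = \bigl[f_1(y+z_1)-f_1(y)\bigr]f_2(y+z_2) + f_1(y)\bigl[f_2(y+z_2)-f_2(y)\bigr],
\end{align*}
with $z_j = t_j\sqrt\alpha\,\omega$, reducing the task to bounding each of the two pieces separately.

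For each piece I would use Cauchy--Schwarz in $y$ with respect to $p(y)\,\mathrm{d}y$. For example, the first piece gives the bound
\begin{align*}
\int \bigl|f_1(y+z_1)-f_1(y)\bigr|\cdot|f_2(y+z_2)|\, p(y)\,\mathrm{d}y \le \Bigl(\int (f_1(y+z_1)-f_1(y))^2 p(y)\,\mathrm{d}y\Bigr)^{1/2}\!\Bigl(\int f_2(y+z_2)^2 p(y)\,\mathrm{d}y\Bigr)^{1/2}.
\end{align*}
Since $f_1^2, f_2^2$ are integrable with respect to $p$ and $\log p$ is quadratically Lipschitz, the change of variables argument in Lemma~\ref{lem: log p condition} shows that $\int f_j(y+z)^2 p(y)\,\mathrm{d}y$ is uniformly bounded for $\|z\|$ small, so the second factor is bounded. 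The first factor is where $L_2$ mean continuity enters: one establishes that
\begin{align*}
\int (f_1(y+z)-f_1(y))^2 p(y)\,\mathrm{d}y \longrightarrow 0 \text{ as } z\to 0,
\end{align*}
by approximating $f_1$ in $L_2(p)$ with a compactly supported continuous function, using dominated convergence for the approximation (valid because $p$ is continuous and the log-Lipschitz bound on $p$ gives a uniform domination after translation), and finishing with a triangle-inequality argument. The analogous $L_2$ mean-continuity statement for $f_2$ handles the second piece.

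Having established the pointwise convergence of the inner $y$-integral to $0$ for each fixed $\omega$, $t_1$, $t_2$, I would apply Lemma~\ref{lem: log p condition}(ii) to dominate the product of the inner $y$-integral by an $L_1$-integrable function of $\omega$ (times $|h(\omega)|\varphi(\omega)$) uniformly in $t_1,t_2\in[0,1]$ and for $\sqrt\alpha \le \epsilon_0$; this is precisely the reason the second part of Lemma~\ref{lem: log p condition} was stated in the form it is. Dominated convergence then passes the limit through the $(t_1,t_2,\omega)$ integral and yields the claim. The main obstacle is the $L_2$ mean-continuity step, since the paper's earlier lemmas only invoke $L_1$ mean continuity; the additional square integrability assumption on $f_1, f_2$ is exactly what is needed to run the Cauchy--Schwarz argument, and the log-Lipschitz condition on $p$ is the essential ingredient that makes the $L_2$ approximation by continuous functions transfer across the translation.
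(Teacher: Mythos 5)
Your proposal is correct and follows essentially the same route as the paper: reduce via Fubini to the triple integral of the inner $y$-integral, dominate it uniformly in $t_1,t_2\in[0,1]$ and $\sqrt\alpha\le\ep_0$ using part (ii) of Lemma~\ref{lem: log p condition}, and conclude by dominated convergence. The one place you go beyond the paper is the convergence of the inner integral: the paper simply cites the ``mean continuity theorem,'' which does not literally apply here since the two factors are shifted by different amounts and $p(y)\,\rd y$ is not translation invariant, whereas your product decomposition combined with Cauchy--Schwarz and $L_2(p)$ mean continuity (made transferable across translations by the log-Lipschitz bound on $p$) is the correct way to justify that step.
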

\begin{proof}
    Because $f_1^2,f_2^2$ are integrable w.r.t. $p$, applying the second claim in Lemma~\ref{lem: log p condition}, there exists $\ep_0$ such that, when $\ep_1,\ep_2\in[0,\ep_0]$,
    \begin{align*}
        |h(\omega)|\varphi(\omega) \int |f_1(y+\ep_1\omega) f_2(y+\ep_2\omega)|  p(y)\rd y\leq \bar f(\omega),
    \end{align*}
    where $\bar f(\omega)$ is integrable.
    Following the same argument as in the proof of Lemma~\ref{lem: L1 t}, we have
    \begin{align*}
        &\int \bigg|\int_0^1\int_0^1 \int  (f_1(y+t_1\sqrt\alpha\omega)f_2(y+t_2\sqrt\alpha\omega)h(\omega) -f_1(y)f_2(y)h(\omega) )  \varphi(\omega) \rd\omega \rd t_1\rd t_2 \bigg| p(y) \rd y  \\
        &\quad\leq \int_0^1\int_0^1 \int |h(\omega)| \varphi(\omega) \int \big|f_1(y+t_1\sqrt\alpha\omega)f_2(y+t_2\sqrt\alpha\omega)-f_1(y)f_2(y)\big| p(y)\rd y\rd \omega\rd t_1 \rd t_2 .
    \end{align*}
    Following a similar argument as in the proof of Lemma~\ref{lem: L1 t}, when $\sqrt\alpha <\ep_0$, the function
    \begin{align*}
        &|h(\omega)|\varphi(\omega) \int |f_1(y+t_1\sqrt\alpha\omega)f_2(y+t_2\sqrt\alpha\omega) - f_1(y)f_2(y)|p(y) \rd y
    \end{align*}
    is dominated by an integrable function.
    Since $\lim_{\alpha\to 0} \int |f_1(y+t_1\sqrt\alpha\omega)f_2(y+t_2\sqrt\alpha\omega)-f_1(y)f_2(y)|p(y)\rd y=0$ (mean continuity theorem),
    the proof is completed by applying the dominated convergence theorem.
\end{proof}

\begin{lemma}
    \label{lem: gaussian quadratic covariance}
    Suppose $x,y\sim\N(0,I_n)$ and $\Cov{x,y}=\rho I_n$. For a matrix $A$, we have
    \begin{align*}
        \Cov{x\tran Ax, y\tran Ay}=\rho^2\Var{x\tran Ax}=\rho^2(\|A\|_F^2 + \tr(A^2) ).        
    \end{align*}
\end{lemma}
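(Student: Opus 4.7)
The plan is to exploit the joint Gaussian structure of $(x,y)$ to rewrite $y$ as a deterministic function of $x$ plus an independent Gaussian vector, reducing the covariance to a variance of a single Gaussian quadratic form, which then admits a closed form via the standard identity for variances of Gaussian quadratic forms.

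First, I would set $w := y - \rho x$. Since $(x,y)$ is jointly Gaussian with zero mean, $(x,w)$ is also jointly Gaussian with zero mean, and a direct covariance calculation gives $\Cov{x,w} = \Cov{x,y} - \rho\,\Var{x} = \rho I_n - \rho I_n = 0$, so $w$ is independent of $x$. Moreover $\Var{w} = \Var{y} - 2\rho\,\Cov{x,y} + \rho^2\Var{x} = (1-\rho^2) I_n$, so $w \sim \N(0,(1-\rho^2) I_n)$. In particular $y = \rho x + w$ with $w\indep x$.

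Next I would expand
\begin{align*}
y\tran A y = \rho^2\, x\tran A x + \rho\, x\tran A w + \rho\, w\tran A x + w\tran A w,
\end{align*}
and compute $\Cov{x\tran A x,\, y\tran A y}$ term-by-term. The covariance with $w\tran A w$ vanishes by independence of $x$ and $w$. For the cross terms, conditioning on $x$ and using $\EE{w\mid x}=0$ gives $\EE{x\tran A w\mid x}=0$, hence both $\EE{x\tran A w}=0$ and $\EE{x\tran A x \cdot x\tran A w} = \EE{x\tran A x\cdot \EE{x\tran A w\mid x}}=0$; the same argument handles $w\tran A x$. Therefore $\Cov{x\tran A x,\, y\tran A y} = \rho^2\, \Var{x\tran A x}$, which yields the first equality of the lemma.

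Finally I would evaluate $\Var{x\tran A x}$ by symmetrizing: setting $B := (A+A\tran)/2$, we have $x\tran A x = x\tran B x$, and the standard formula for Gaussian quadratic forms gives $\Var{x\tran B x} = 2\tr(B^2)$. Expanding $B^2$ and using $\tr(AA\tran) = \tr(A\tran A) = \|A\|_F^2$ yields $\tr(B^2) = \tfrac{1}{2}\bigl(\tr(A^2) + \|A\|_F^2\bigr)$, so $\Var{x\tran A x} = \tr(A^2) + \|A\|_F^2$. This is a routine Gaussian moment calculation; the only mild subtlety is allowing $A$ to be non-symmetric, which is handled cleanly by the symmetrization in the final step.
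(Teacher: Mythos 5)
Your proof is correct and follows essentially the same route as the paper's: both decompose $y$ as $\rho x$ plus an independent Gaussian remainder, observe that the cross terms contribute nothing, and reduce the covariance to $\rho^2\Var{x\tran Ax}$. The only cosmetic difference is in evaluating $\Var{x\tran Ax}$, where you symmetrize and invoke the standard formula $\Var{x\tran Bx}=2\tr(B^2)$ while the paper computes the fourth moments by expanding over indices.
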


\begin{proof}[Proof of Lemma~\ref{lem: gaussian quadratic covariance}]
Let $x,z\iid \N(0,I_n)$ and $y=\rho x + \tilde\rho z$ where $\tilde \rho=\sqrt{1-\rho^2}$. Then $y\sim\N(0,I_n)$ and $\Cov{x,y}=\rho I_n$. 
We have
\begin{align*}
    \EE{(x\tran Ax)(y\tran Ay)}&=\EE{(x\tran Ax)(\rho^2 x\tran Ax + \tilde\rho^2 z\tran Az ) }\\
    &=\rho^2 \EE{(x\tran Ax)^2} + \tilde\rho^2 \EE{(x\tran Ax)}\EE{(z\tran Az)}\\
    &=\rho^2 \EE{(x\tran Ax)^2} + \tilde\rho^2 \tr(A)^2.
\end{align*}
Note that
\begin{align*}
    \EE{(x\tran Ax)^2}&=\EE{\sum_{i,j,k,l}A_{ij}A_{kl}x_ix_jx_kx_l }\\
    &=\EE{\sum_{i} A_{ii}^2x_i^4 + \sum_{i\neq j}A_{ij}^2 x_i^2x_j^2 + \sum_{i\neq j} A_{ij}A_{ji}x_i^2x_j^2 + \sum_{i\neq j} A_{ii}A_{jj}x_i^2x_j^2} \\
    &=3\sum_{i}A_{ii}^2 + \sum_{i\neq j}A_{ij}^2 + \sum_{i\neq j}A_{ij}A_{ji} + \sum_{i\neq j}A_{ii}A_{jj}\\
    &=\sum_{i,j}A_{ij}^2 + \sum_{i,j}A_{ij}A_{ji} + \sum_{i,j}A_{ii}A_{jj}\\
    &=\tr(A\tran A) + \tr(AA) + \tr(A)^2.
\end{align*}
Moreover, $\EE{x\tran Ax}=\tr(A)$. So we have
\begin{align*}
    \Cov{x\tran Ax, y\tran Ay}&=\rho^2\EE{(x\tran Ax)^2} + \tilde\rho^2\tr(A)^2 - \tr(A)^2\\
    &=\rho^2(\tr(A\tran A) + \tr(AA) + \tr(A)^2) + \tilde\rho^2\tr(A)^2 - \tr(A)^2\\
    &=\rho^2(\tr(A\tran A) + \tr(AA))\\
    &=\rho^2\Var{x\tran Ax}.
\end{align*}
\end{proof}

\section{Reducible variance for a non-weakly differentiable estimator}
\label{sec: indicator}
We consider the indicator function $g(y)=\Indc{y\geq\delta}$ as the prediction function with some threshold $\delta\in\R$. We let $n=1$ and $y\sim\N(\mu,1)$.

\begin{theorem}[Reducible variance for indicator function]\label{thm: indicator var}
    As $\alpha\to0$ and $K\to\infty$, we have
    \begin{align*}
        \EE{\Var{\cv_\alpha \mid y} } = O(\frac{1}{K\sqrt\alpha}).
    \end{align*}
\end{theorem}
\begin{proof}[Proof of Theorem~\ref{thm: indicator var}]
The proposed antithetic CV estimator can be written as
\begin{align*}
    \cv_\alpha=\underbrace{\frac1K\sum_{k=1}^K (y - \Indc{y + \sqrt\alpha\omega^{(k)} \geq\delta } )^2}_{(\Rom{1})} + \underbrace{\frac{2}{K\sqrt\alpha}\sum_{k=1}^K \omega^{(k)} \Indc{y + \sqrt\alpha\omega^{(k)}\geq\delta }}_{(\Rom{2})}.
\end{align*}
Conditioned on $y$, the variance of the first term $(\Rom{1})$ is bounded and goes to 0 as $\alpha\to0$. So we mainly focus on the conditional variance of the second sum $(\Rom{2})$.
Denote $c=c(y)=\frac{\delta-y}{\sqrt\alpha}$. 
Let $\omega,\omega^{(1)},\omega^{(2)}\sim \N(0,1)$ with $\mathrm{Corr}(\omega^{(1)}, \omega^{(2)}) = \rho=-\frac{1}{K-1}$, and define
\begin{align*}
    V=\Var{\omega\Indc{\omega>c}},\quad C=\Cov{\omega^{(1)}\Indc{\omega^{(1)}>c}, \omega^{(2)} \Indc{\omega^{(2)}>c} }.
\end{align*}
Then the conditional variance of $(\Rom{2})$ can be written as
\begin{align}\label{equ: cond var}
     \Var{(\Rom{2})\mid y}=\frac{4}{K^2\alpha}\Big[K\cdot V + K(K-1)\cdot C \Big] = \frac{4}{K\alpha} (V - \frac{1}{\rho} C).
\end{align}

\begin{lemma}[Expression of $V$]\label{lem: var}
    The variance term $V$ has the expression
    \begin{align*}
        V=\Var{\omega \Indc{\omega>c}} = c\varphi(c) + \bar\Phi(c) - \varphi(c)^2.
    \end{align*}
\end{lemma}
\begin{proof}[Proof of Lemma~\ref{lem: var}]
    The proof follows from direction calculations:
    \begin{align*}
    \EE{\omega\Indc{\omega>c}} & = \varphi(c),\quad \EE{\omega^2\Indc{\omega>c}} = c\varphi(c) + \bar\Phi(c),\\
    \Var{\omega\Indc{\omega>c}} &= c\varphi(c) + \bar\Phi(c) - \varphi(c)^2.
    \end{align*}    
\end{proof}

\begin{lemma}[Expression of $C$]\label{lem: cross term}
    The covariance term $C$ has the expression
    \begin{align*}
        &C=\Cov{\omega^{(1)}\Indc{\omega^{(1)}>c}, \omega^{(2)}\Indc{\omega^{(2)}>c}} =\\ 
        &\qquad \rho \bar F(c, c; \rho) + (1-\rho^2) \varphi_\rho(c, c) + 2\rho c \varphi(c) \bar\Phi(\frac{c-\rho c}{\sqrt{1-\rho^2}} ) - \varphi(c)^2,
    \end{align*}
    where $\varphi_\rho(h,k)$ is the joint density of a standard bivariate normal distribution with correlation $\rho$ evaluated at $(h,k)$; and $\bar F(h,k,\rho)$ is the corresponding survival function, i.e., $\bar F(h,k;\rho)=\PP{x>h, y>k}$ for $x,y\sim\N(0,1)$ with $\Cov{x,y}=\rho$.
\end{lemma}
\begin{proof}[Proof of Lemma~\ref{lem: cross term}]
    It remains to calculate the expectation \sloppy{$\EE{xy\Indc{x>c, y>c}}$}, where $(x,y)$ follows a standard bivariate normal distribution with correlation $\rho$.
    We can write
    \begin{align*}
        &\EE{xy\Indc{x>c, y>c}} = \int_c^\infty y J(y) \rd y,\\
        &\text{where }J(y) = \int_c^\infty x \varphi_\rho(x, y) \rd x.
    \end{align*}
    Using $x \varphi_\rho = \rho y \varphi_\rho - (1 - \rho^2) \partial_{x} \varphi_\rho$, we have
    \begin{align*}
        J(y) = \int_{c}^\infty [\rho y\varphi_\rho - (1 - \rho^2)\partial_{x} \varphi_\rho] \rd x=\rho y \int_c^\infty \varphi_\rho(x, y)\rd x + (1-\rho^2) \varphi_\rho(c, y),
    \end{align*}
    and thus
    \begin{align*}
        \int_c^\infty y J(y)\rd y &= \underbrace{\rho \int_{x>c,y>c} y^2 \varphi_\rho(x,y) \rd x \rd y}_{A_1} +  \underbrace{(1-\rho^2) \int_c^\infty y \varphi_\rho(c, y)\rd y}_{A_2}.
    \end{align*}
    Denote 
    \begin{align*}
        H(y) &= \int_c^\infty \varphi_\rho(x, y)\rd x= \varphi(y) \bar\Phi(\frac{c-\rho y}{\sqrt{1-\rho^2}} ), \\
        H'(y) &= -y H(y) + \varphi(y) \frac{\rho}{\sqrt{1-\rho^2}} \varphi(\frac{c-\rho y}{\sqrt{1-\rho^2}} ) = -y H(y) + \rho \varphi_\rho(c, y).
    \end{align*}
    Then we have
    \begin{align*}
        A_1 &= \rho \int_{c}^\infty y^2 H(y) \rd y \\
        &=\rho \int_{c}^\infty y \cdot (-H'(y) + \rho\varphi_\rho(c, y) )\rd y\\
        &=\rho \int_c^\infty -y \rd H(y) + \rho^2 \int_c^\infty y \varphi_\rho(c, y)\rd y\\
        &=\rho c H(c) + \rho \int_c^\infty H(y)\rd y + \rho^2 \int_c^\infty y \varphi_\rho(c, y)\rd y.
    \end{align*}
    So
    \begin{align*}
        A_1+A_2 = \rho c H(c) + \rho \int_c^\infty H(y)\rd y + \int_c^\infty y \varphi_\rho(c, y)\rd y.
    \end{align*}
    Moreover, we have $\int_c^\infty H(y)\rd y =\PP{x>c, y>c}=\bar F(c, c;\rho)$, and
    \begin{align*}
        \int_c^\infty y\varphi_\rho(c, y)\rd y &= \int_c^\infty (\rho c\varphi_\rho(c, y) - (1 - \rho^2)\partial_y \varphi_\rho(c, y) ) \rd y\\
        &=\rho c H(c) + (1-\rho^2) \varphi_\rho(c, c).
    \end{align*}
    Therefore,
    \begin{align*}
        A_1+ A_2 &= 2\rho c H(c) + \rho \bar F(c, c,\rho) + (1-\rho^2)\varphi_\rho(c, c)\\
        &=2\rho c \varphi(c) \bar\Phi(\frac{c-\rho c}{\sqrt{1-\rho^2}} ) + \rho \bar F(c, c,\rho) + (1-\rho^2)\varphi_\rho(c, c).
    \end{align*}
    The proof is completed by noting that $\EE{\omega^{(1)}\Indc{\omega^{(1)} >\delta}}=\varphi(c)$.
\end{proof}

Substituting the expressions of $V$ (Lemma~\ref{lem: var}) and $C$ (Lemma~\ref{lem: cross term}) into Equation~\eqref{equ: cond var}, we have
\begin{align*}
    \Var{(\Rom{2})\mid y} &= \frac{4}{K\alpha} \Big(c\varphi(c) + \bar\Phi(c) - \varphi(c)^2 \\
    &\quad - \bar F(c, c;\rho) - \frac{1-\rho^2}{\rho} \varphi_\rho(c, c) - 2c\varphi(c)\bar\Phi(\frac{c-\rho c}{\sqrt{1-\rho^2}} ) + \frac{1}{\rho}\varphi(c)^2 \Big) \numberthis\label{equ: cond var 2}
\end{align*}
where $c=\frac{\delta-y}{\sqrt\alpha}$.
We now study the rate of the expectation of each term in the parenthesis as $\alpha\to0$.
\begin{enumerate}
    \item The first term $\EE{c\varphi(c)}=O(\alpha)$: Note that
    \begin{align*}
        \EE{c\varphi(c) } & =\EE{\frac{\delta-y}{\sqrt\alpha} \varphi(\frac{\delta-y}{\sqrt\alpha} ) }=\EE{\frac{\delta-\mu-z}{\sqrt\alpha}\varphi(\frac{\delta-\mu-z}{\sqrt\alpha} ) },
    \end{align*}
    where $z\sim \N(0,1)$. The last expectation is equal to (see e.g.~\cite[Equation 111, Page 396]{owen1980table})
    \begin{align*}
        \EE{c\varphi(c)} = \frac{(\delta-\mu)\alpha}{(1+\alpha)^{3/2}}\varphi(\frac{\delta-\mu}{\sqrt{1+\alpha}})=O(\alpha).
    \end{align*}

    \item The second term $\EE{\bar\Phi(c)} =\Phi(\mu-\delta ) + O(\alpha)$:
    \begin{align*}
        \EE{\bar\Phi(c)}&=\EE{\Phi(\frac{y-\delta}{\sqrt\alpha} ) } = \PP{z\leq \frac{y-\delta}{\sqrt\alpha} }=\PP{\N(-\mu, 1+\alpha) \leq -\delta}=\Phi(\frac{\mu-\delta}{\sqrt{1+\alpha}})\\
        &=\Phi(\mu-\delta ) + O(\alpha).
    \end{align*} 

    \item The third term $\EE{\varphi(c)^2} = O(\sqrt\alpha) $:
    \begin{align*}
        \EE{\varphi(c)^2}&=\EE{\varphi(\frac{\delta-y}{\sqrt\alpha} )^2 } =\frac{1}{2\pi} \EE{\exp(-\frac{(\delta-y)^2}{\alpha} ) } = \frac{1}{2\pi}\EE{\exp(-\frac{(\delta-\mu-z)^2}{\alpha} ) }\\
        &=\frac{1}{\sqrt{2\pi}}\sqrt{\frac{\alpha}{\alpha+2} } \varphi(\frac{\delta-\mu}{\sqrt{1+\alpha/2} } ) =O(\sqrt\alpha).
    \end{align*}

    \item The fourth term $\EE{\bar F(c, c;\rho)} = \Phi(\mu-\delta) + O(\alpha) $:
    We have
    \begin{align*}
        \EE{\bar F(c, c;\rho)} &= \PP{x_1>\frac{\delta-y}{\sqrt\alpha}, x_2>\frac{\delta-y}{\sqrt\alpha}  } = \PP{x_1,x_2\geq \frac{\delta-\mu-z}{\sqrt\alpha}  }.
    \end{align*}
    where $(x_1,x_2)$ follow a standard bivariate normal distribution with correlation $\rho$.
    Let $w_i=\frac{x_i+z/\sqrt\alpha}{\sqrt{1+1/\alpha}}$ for $i=1,2$, which have correlation $\rho'=\frac{\rho+1/\alpha}{1+1/\alpha} $. Then
    \begin{align*}
        \EE{\bar F(c, c;\rho) } &= \PP{w_1,w_2\geq \frac{(\delta-\mu)/\sqrt\alpha}{\sqrt{1+1/\alpha}} } = \bar F\Big(\frac{(\delta-\mu)/\sqrt\alpha}{\sqrt{1+1/\alpha}} , \frac{(\delta-\mu)/\sqrt\alpha}{\sqrt{1+1/\alpha}} , \rho' \Big)\\
        &=\bar F(\frac{\delta-\mu}{\sqrt{1+\alpha}}, \frac{\delta-\mu}{\sqrt{1+\alpha}}, \frac{1+\rho\alpha}{1+\alpha} )\\
        &=\bar\Phi(\delta-\mu) + O(\alpha).
    \end{align*}
    The last line is because $\bar F$ is continuously differentiable in all three arguments, $\frac{\delta-\mu}{\sqrt{1+\alpha}}=\delta-\mu + O(\alpha)$ and $\frac{1+\rho\alpha}{1+\alpha} = 1+O(\alpha)$, as well as $\bar F(x, x, 1)=\bar\Phi(x)$.

    \item The fifth term $\frac{1-\rho^2}{\rho} \EE{\varphi_\rho(c, c) }=O(\sqrt\alpha)$:
    We have
    \begin{align*}
        \EE{\varphi_\rho(c, c) } &= \EE{\varphi_\rho(\frac{\delta-\mu+z}{\sqrt\alpha}, \frac{\delta-\mu+z}{\sqrt\alpha} ) }\\
        &=\frac{1}{2\pi\sqrt{(1-\rho^2)}}\int\exp(-\frac{z^2}{2} - \frac{((\delta-\mu+z)/\sqrt\alpha)^2 }{1+\rho} )  \rd z\\
        &=\frac{1}{\sqrt{2\pi(1-\rho^2)}} \frac{\sqrt\alpha}{\sqrt{\alpha+2/(1+\rho)}} \exp(\frac{2(\delta-\mu)^2}{((1+\rho)\alpha+2 )^2} - \frac{(\delta-\mu)^2}{(1+\rho)\alpha} )
    \end{align*}
    As $\alpha\to0$, the above quantity is $O(\sqrt\alpha \cdot \exp(-\frac{(\delta-\mu)^2}{(1+\rho)\alpha} )  )$, which is smaller than $O(\sqrt\alpha)$.
    
    \item The sixth term $\EE{2c\varphi(c)\bar\Phi(\frac{c-\rho c}{\sqrt{1-\rho^2}} ) } = O(\sqrt\alpha)$:
    Denote $k=\sqrt{\frac{1-\rho}{1+\rho}}$. Since $c\sim\N(\frac{\delta-\mu}{\sqrt\alpha} , \frac{1}{\alpha})$, we have
    \begin{align*}
        \EE{c\varphi(c)\bar\Phi(kc)} &= \int x\varphi(x)\bar\Phi(kx) \frac{\sqrt{\alpha}}{\sqrt{2\pi}} \exp(-\frac{\alpha}{2} (x - \frac{\delta-\mu}{\sqrt\alpha} )^2 ) \rd x\\
        &=\frac{\sqrt\alpha}{2\pi} \int x\bar\Phi(kx)  \exp(-\frac{x^2}{2} - \frac{\alpha}{2}(x - \frac{\delta-\mu}{\sqrt\alpha} )^2 ) \rd x\\
        &=\frac{\sqrt\alpha}{2\pi}\int x\bar\Phi(kx) \exp(-\frac{1+\alpha}{2}x^2 + \sqrt\alpha(\delta-\mu)x - \frac{(\delta-\mu)^2}{2} )\rd x.
    \end{align*}
    Note that the integrand converges pointwise to $x\bar\Phi(kx) \exp(-x^2/2 - (\delta-\mu)^2/2)$ as $\alpha\to0$. It is also bounded. So by deominated convergence theorem, $\EE{c\varphi(c)\bar\Phi(kc)} = O(\sqrt\alpha)$.

    \item The last term is also $O(\sqrt\alpha)$ (same as the third term).
\end{enumerate}

All the terms are of order $O(\sqrt\alpha)$ except the second and the fourth terms, which cancel out.
So the sum of all the terms is $O(\sqrt\alpha)$ as $\alpha\to0$.
Combining this with Equation~\eqref{equ: cond var 2} proves that $\EE{\Var{\cv_\alpha\mid y}}=O(\frac{1}{\sqrt\alpha})$ as $\alpha\to0$.

As $K\to\infty$, the correlation $\rho=-\frac{1}{K-1}\to0$. The only terms above that are unbounded are the fifth and the last terms:
\begin{align*}
    -\frac{1-\rho^2}{\rho} \varphi_\rho(c, c) ,\quad \frac{1}{\rho}\varphi(c)^2.
\end{align*}
Since $\varphi_\rho(c, c) = \varphi(c)^2 + O(\rho)$, the sum of these two terms is $O(1)$. Combined with Equation~\eqref{equ: cond var 2}, we have proved that $\EE{\Var{\cv_\alpha\mid y}}=O(\frac1K)$ as $K\to\infty$.
\end{proof}

To confirm the theoretical result, we conduct a numerical experiment with $\mu=0$ and $\delta=1$. The top panel of Figure~\ref{fig: indicator} shows the squared bias, reducible variance, and MSE versus $1/\alpha$ for $K=8$, for both coupled bootstrap and antithetic CV. For the reducible variance, both the inner variance and outer expectation are estimated by 1000 Monte Carlo samples, respectively. The plot confirms that the reducible variance of antithetic CV is $O(1/\sqrt\alpha)$ while that of coupled bootstrap is $O(1/\alpha)$. The bottom panel of Figure~\ref{fig: indicator} shows the same quantities versus $1/K$ for $\alpha=1/2^8$, confirming that the reducible variance decreases at $O(1/K)$ rate for both methods. The analysis and numerical results show that even for non-weakly differentiable functions, antithetic CV achieves significantly smaller reducible variance than coupled bootstrap.

\begin{figure}
    \begin{subfigure}{\textwidth}
        \includegraphics[width=.9\textwidth]{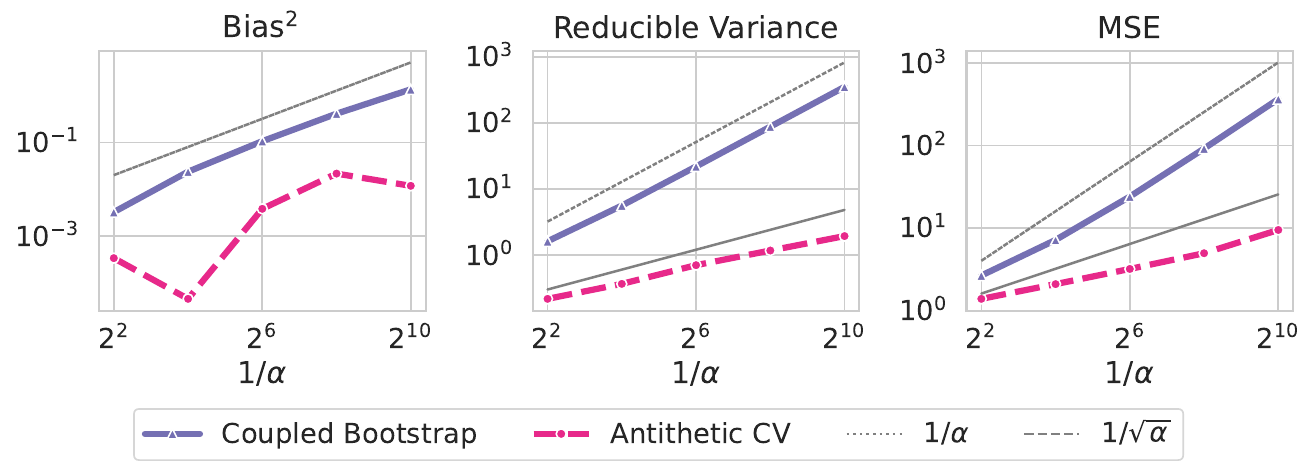}
    \end{subfigure}
    \begin{subfigure}{\textwidth}
        \includegraphics[width=.9\textwidth]{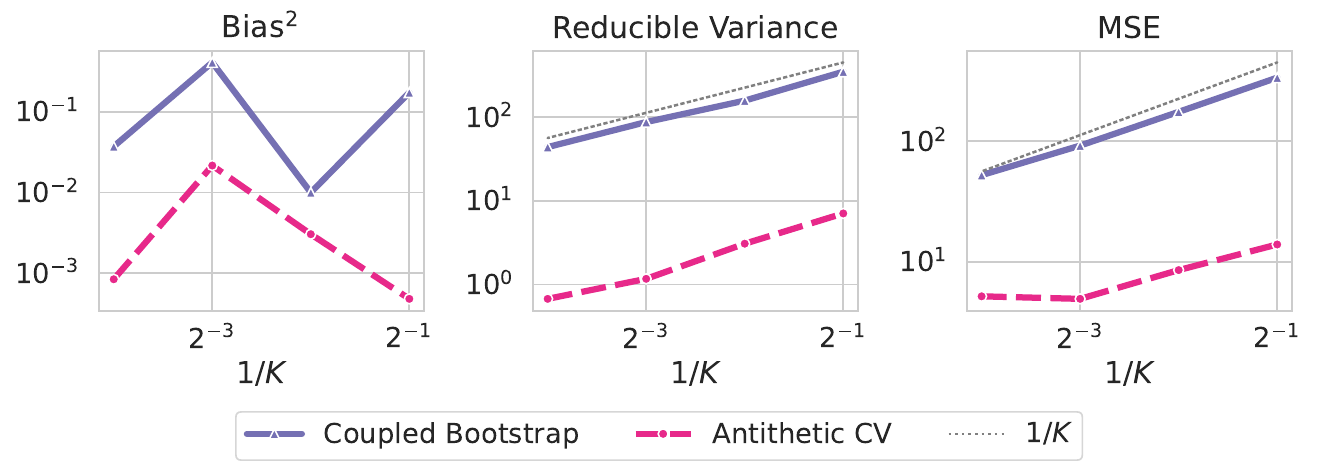}
    \end{subfigure}
    \caption{Squared bias, reducible variance, and MSE of coupled bootstrap and antithetic CV in estimating the prediction error of the indicator function $g(y)=\Indc{y\geq 1}$ with $y\sim\N(0,1)$. The top panel shows the results when $K=8$ and $\alpha$ varies. The dotted and dashed lines represent the $1/\alpha$ and $1/\sqrt\alpha$ rates, respectively. The bottom panel shows the results when $\alpha=1/2^8$ and $K$ varies. The dotted line represents the $1/K$ rate.}
    \label{fig: indicator}
\end{figure}

\section{Additional numerical results}
\label{sec: lasso}
We consider a lasso regression example following the same setup as in \cite{oliveira2021unbiased}. Specifically, we set $n = 100$, and generate $p = 200$ independent features from $\mathcal{N}(0,1)$. The true coefficient vector $\beta$ has five nonzero entries, uniformly sampled from $[-1,1]$. The noise variance $\sigma^2$ is chosen such that $\operatorname{Var}_n(X\beta)/\sigma^2 = 0.4$. The simulation is repeated 200 times.

In addition to standard cross-validation and coupled bootstrap (CB), we compare our estimator with the Breiman-Ye (BY) estimator, as defined in \cite{oliveira2021unbiased}. Notably, this BY estimator includes slight modifications from the original versions introduced in \cite{breiman1992little} and \cite{ye1998measuring}. As discussed in \cite{oliveira2021unbiased}, the BY estimator may exhibit higher irreducible variance than CB when $g$ is unstable. Figure~\ref{fig:lassocv} shows that our estimator achieves smaller error than both the CB and BY estimators. Note that \cite{breiman1992little} and \cite{ye1998measuring} suggest using a larger $\sigma$ (e.g., between 0.6 and 1) and a much larger $K$ (e.g., above 100). In contrast, our estimator attains low error even with $K = 10$, while ensuring small bias when $\alpha$ is small.

\begin{figure}
    \centering
    \includegraphics[width=0.6\linewidth]{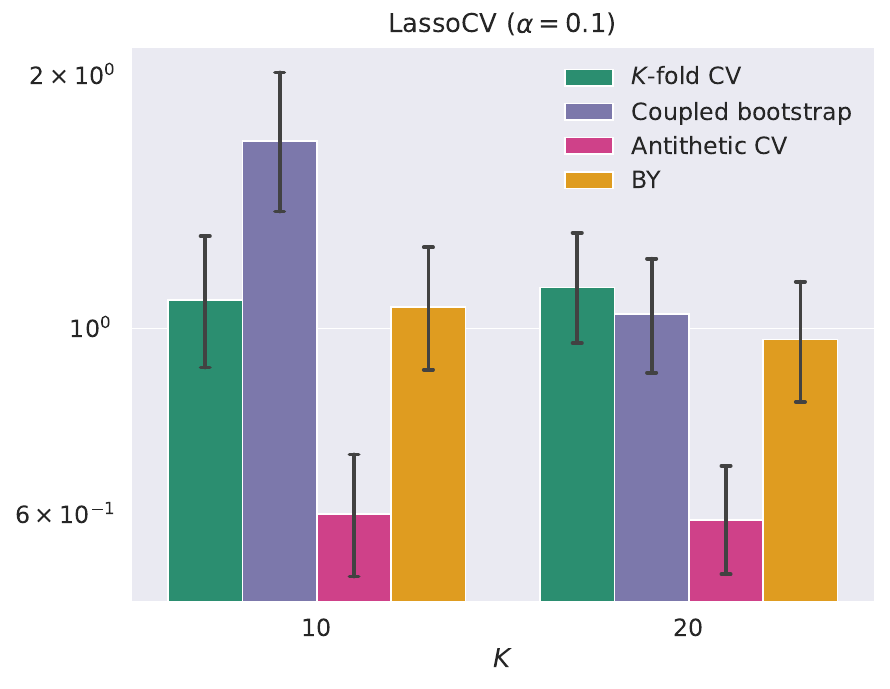}
    \caption{MSE in estimating the prediction error in the lasso regression example. We consider $K=10$ and 20. For CB, antithetic CV, and BY, $\alpha$ is set to be 0.1. }
    \label{fig:lassocv}
\end{figure}

\end{document}